\newif\ifsoda
	\setlist{topsep=3pt, itemsep=0pt}
\newcommand{\hyphen}{\textnormal{-}}
	\newtheorem{observation}{Observation}[section]
	\newtheorem{definition}{Definition}[section]
	\declaretheorem[numberwithin=section, name=Theorem]{theorem}
	\declaretheorem[sibling=theorem, name=Lemma]{lemma}
	\declaretheorem[sibling=theorem, name=Definition]{definition}
	\declaretheorem[sibling=theorem, name=Corollary]{corollary}
	\declaretheorem[sibling=theorem, name=Observation]{observation}
\newcommand{\set}[1]{\left\{#1\right\}}
\newcommand{\floor}[1]{\left\lfloor#1\right\rfloor}
\DeclareMathOperator{\E}{\mathbb E}
\DeclareMathOperator{\union}{\bigcup}
\renewcommand{\hat}{\widehat}
\renewcommand{\tilde}{\widetilde}
\newcommand{\Z}{\mathbb{Z}}
\newcommand{\calC}{{\mathcal C}}
\newcommand{\calI}{{\mathcal I}}
\newcommand{\calP}{{\mathcal P}}
\newcommand{\calS}{{\mathcal S}}
\newcommand{\calX}{{\mathcal X}}
\newcommand{\chain}{\mathrm{chain}}
\newcommand{\calD}{{\mathcal{D}}}
\newcommand{\eps}{\epsilon}
\renewcommand{\bot}{\mathrm{bot}}
\renewcommand{\top}{\mathrm{top}}
\renewcommand{\mid}{\mathrm{mid}}
\newcommand{\mac}{{\mathrm{mac}}}
\newcommand{\tim}{{\mathrm{time}}}
\newcommand{\SA}{\mathrm{SA}}
\newcommand{\mysplit}{{\mathsf{SPLIT}}}
\newcommand{\jbot}{J_{\mathrm{bot}}}
\newcommand{\jtop}{J_{\mathrm{top}}}
\newcommand{\jmid}{J_{\mathrm{mid}}}
\newcommand{\jspl}{J_{\mathrm{special}}}
\newcommand{\taskdis}{A_{\mathrm{discarded}}}
\newcommand{\vecx}{x}
\newcommand{\drop}{\mathrm{DISCARDED}}
\newcommand{\recursive}{\mathsf{PARTIAL\hyphen SCHEDULE}}
\newcommand{\commrecursive}{\mathsf{PARTIAL\hyphen SCHEDULE \hyphen COMM}}
\newcommand{\capa}{\mathsf{cap}}
\def\DEBUG{true}
	 \newcommand{\sh}[1]{\textcolor{red}{#1}}
	 \newcommand{\Jnote}[1]{\textcolor{blue}{#1}}
	 \newcommand{\rem}[2]{%
	    \marginpar{{\scriptsize{\scriptsize\vspace{-\baselineskip}\singlespacing \textcolor{#1}{#2}}\par}}
	 } 
	\newcommand{\shr}[1]{\rem{red}{#1}}
	\newcommand{\sh}[1]{#1}
	 \newcommand{\Jnote}[1]{}
	\newcommand{\shr}[1]{}
\begin{document}
	
\title{
Hierarchy-Based Algorithms  for  Minimizing Makespan under Precedence and Communication Constraints}
\author{ 
		Janardhan Kulkarni \thanks{Microsoft Research, {\tt jakul@microsoft.com}} \and
		Shi Li \thanks{University at Buffalo, {\tt shil@buffalo.edu}} \and
		Jakub Tarnawski \thanks{Microsoft Research, {\tt jatarnaw@microsoft.com}. Part of this work was done while the author was at École Polytechnique Fédérale de Lausanne (EPFL).} \and
		Minwei Ye \thanks{University at Buffalo, {\tt minweiye@buffalo.edu}}
	   }

\date{November 1, 2019} 
\maketitle

\ifsoda
	
	
	
	\fancyfoot[R]{\scriptsize{Copyright \textcopyright\ 2020\\
	Copyright for this paper is retained by authors}}
	
\else
	\thispagestyle{empty}
\fi

\begin{abstract}
We consider the classic problem of scheduling jobs with precedence constraints on a set of identical machines to minimize the makespan objective function. 
Understanding the exact approximability of the problem when the number of machines is a constant is a well-known question in scheduling theory.  
Indeed, an outstanding open problem from the classic book of Garey and Johnson~\cite{Garey} asks whether this problem is NP-hard even in the case of 3 machines and unit-length jobs.  
In a recent breakthrough, Levey and Rothvoss~\cite{LR16} gave a $(1+\epsilon)$-approximation algorithm, which runs in nearly quasi-polynomial time, for the case when job have {\em unit lengths}. 
However, a substantially more difficult case where jobs have arbitrary processing lengths has remained open.
We make progress on this more general problem. 
We  show that there exists a $(1+\epsilon)$-approximation algorithm (with similar running time as that of \cite{LR16}) for the \emph{non-migratory setting}: when every job has to be scheduled entirely on {\em a single machine}, but within a machine the job need not be scheduled during consecutive time steps.
Further, we also show that our algorithmic framework generalizes to another classic scenario where, along with the precedence constraints, the jobs also have {\em communication delay} constraints.
Both of these fundamental problems are highly relevant to the practice of datacenter scheduling.

\end{abstract}	


\section{Introduction}
\label{sec:intro}

A classic problem in scheduling theory is as follows: We are given a set $J$ of $n$ jobs, where each job $j \in J$ has a processing length $p_j$. 
The jobs have {\em precedence constraints}, which are given by a partial order ``$\prec$". A constraint $j \prec j'$
requires that job $j'$  can only start after job $j$ is completed.
The jobs need to be scheduled on a set of $m$ {\em identical} machines. 
The goal is to schedule the jobs while respecting the precedence constraints so as to optimize a certain objective function.  
The most popular objective function is the {\em makespan} of a schedule, which is the focus of this paper. 
The makespan of a schedule is defined as the completion time of the last job.
In the classic three-field notation\footnote{
	The first field describes the machine environment; in this paper we only consider the identical machine setting denoted by $P$. We use $Pm$ to denote $m = O(1)$ machines, $P\infty$ when the number of machines is unbounded. The second field describes constraints that a schedule must satisfy (e.g.~$\text{prec}$ denotes precedence constraints), as well as assumptions on the input (e.g.~$p_j = 1$ denotes unit-length jobs). The third field denotes the objective function, which in this paper is always the makespan ($C_{\max}$).
}~\cite{GLLR79}, the problem is denoted by $P|\text{prec}|C_{\max}$.
Since the seminal result of Graham \cite{Gra69}, the problem has been studied quite extensively in the literature \cite{GLLR79, GANGAL20081139, CS97, CK04, MQS98, Sku16, HSW98, LR78, Sve10, BK09, LR16, MQS98, HSS97, CPS96, QS02}. 
Despite this, large gaps remain in our understanding of the problem.
The influential survey of Schuurman and Woeginger \cite{SW99a} and a more  recent one by Bansal \cite{Bansalmapsp} list determining the exact approximability of this setting as one of the top ten open problems in scheduling theory (Open Problem 1).

Already in 1966, Graham \cite{Gra69} showed that any greedy non-idling schedule is a $(2-1/m)$-approximation to the problem of minimizing makespan with precedence constraints on identical machines. 
Fifty years later, assuming a variant of the Unique Games Conjecture (UGC) introduced by Bansal and Khot \cite{BK09},  Svensson \cite{Sve10} showed that an approximation factor of $(2-\epsilon)$ is indeed hard for this problem, even for the unit-length case ($P|\text{prec}, p_j = 1|C_{\max}$). For the unit job length case, the current best approximation factor is $2-7/(3m+1)$, when $m\geq4$, due to Gangal and Ranade~\cite{GANGAL20081139}.

\medskip
In most applications, however, the number of machines is typically much smaller than the number of jobs.  
Thus, a natural question that has attracted much attention is: what is the exact complexity of the problem if $m$ is a constant (setting denoted by $Pm$)?
One of the longest-standing open problems posed in the classic book by Garey and Johnson \cite{Garey} is  whether $Pm|\text{prec}, p_j = 1|C_{\max}$ is NP-hard for any $m \geq 3$.
On the other hand, the more general problem where jobs have different processing lengths is strongly NP-hard even if all jobs have lengths 1 or 2 and there are only two machines if no preemption is allowed  \cite{Ull75}.
 On the positive side, in a recent breakthrough Levey and Rothvoss \cite{LR16} gave a $(1+\epsilon)$-approximation algorithm for the problem $Pm|\text{prec}, p_j = 1|C_{\max}$ with running time 
 $n^{(\log n)^{O_{m,\epsilon}( \log\log n)}}$, which is nearly quasi-polynomial time.
They obtain this result by an elegant, though quite technically involved, rounding of a fractional solution obtained from the Sherali-Adams lift of an LP relaxation of the problem to $(\log n)^{O_{m,\epsilon}( \log\log n)}$ levels, which explains the running time of the algorithm. Later, Garg \cite{Shashwat} made the result strictly quasi-polynomial time. However, the more general case, where jobs have arbitrary lengths, has remained open.

One of the tantalizing  questions posed in~\cite{LR16} is whether LP hierarchies can also give a $(1+\epsilon)$-approximation for  the significantly harder case of jobs with arbitrary processing times. 
In this paper, we show that the answer to this question may depend on the type of schedule we are looking for.
When jobs have arbitrary processing lengths, the optimal schedule can be of three types:

\begin{enumerate}
	\item {\em Fully preemptive} ($Pm|\text{prec}, \text{pmtn}, \text{migration}|C_{\max}$): In this case, a job $j$ can be scheduled on multiple machines (respecting the precedence constraints). However, at any time step $t$ only one machine can be processing the job $j$. 
	\item {\em Preemptive but non-migratory} ($Pm|\text{prec}, \text{pmtn}|C_{\max}$): In this case, a job $j$ must be scheduled completely on a single machine. However, within a machine the job can be preempted and need not be processed during consecutive time steps.
	\item {\em Non-preemptive} ($Pm|\text{prec}|C_{\max}$): Here we require that a job $j$ must be scheduled on a single machine during $p_j$ consecutive time steps.
\end{enumerate}

It is interesting to note that the optimal solutions for the above three cases can be quite different.\footnote{We are not aware of any other setting for makespan minimization where there is a gap between the optimal schedules 2 and 3. However, this situation is quite common for flow-time objective functions, where \emph{preemptive but non-migratory} is the standard assumption.}
See \ifsoda the full version of the paper \else Appendix~\ref{sec:3schedules} \fi for examples where the above three schedules are a constant factor away from each other. Hence, there cannot be black-box reductions among these problems if our goal is to achieve $(1+\epsilon)$-approximation.
Observe that the fully preemptive  case ($Pm|\text{prec}, \text{pmtn}, \text{migration}|C_{\max}$) is equivalent to the unit-length case, assuming all sizes are polynomially bounded: one can break a job $j$ of length $p_j$ into a chain of $p_j$ unit-length jobs (see \ifsoda the full version of the paper \else Appendix~\ref{sec:N-big} \fi for how to handle the case when this assumption does not hold).
Thus, the $(1+\epsilon)$-approximation algorithm due to \cite{LR16} readily extends to this case.
 
\medskip
In this paper we first consider the preemptive but non-migratory setting ($Pm|\text{prec}, \text{pmtn}|C_{\max}$).   
We prove  that there is a Sherali-Adams hierarchy based algorithm that gives a $(1+\epsilon)$-approximation,  
which generalizes the framework of  Levey and Rothvoss \cite{LR16} for the unit-length case to the general job length case, 
thus positively answering the question posed by the authors in our setting.

The first main result of this paper is the following.
\begin{theorem}
	\label{thm:secondresult}
	For any $\epsilon > 0$, there is a $(1+\epsilon)$-approximation algorithm for the problem $Pm|\text{prec}, \text{pmtn}|C_{\max}$ that runs in time $ n^{(\log n)^{O((m^2/\epsilon^2)\log \log n))}}$. 
\end{theorem}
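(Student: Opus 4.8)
The plan is to generalize the Levey--Rothvoss recursive framework for $Pm|\text{prec},p_j=1|C_{\max}$ to jobs of arbitrary length, where the crucial extra difficulty is that in $Pm|\text{prec},\text{pmtn}|C_{\max}$ every job must be processed entirely on one machine (preemption but no migration), so the algorithm has to commit each job irrevocably to a single machine. First I would do the standard preprocessing: guess a target makespan $T$ with $OPT \le T \le (1+\eps)OPT$ from polynomially many candidates, and reduce to the case $T = \mathrm{poly}(n)$ (merging near-idle stretches, plus the reduction for super-polynomial $T$ referenced in the paper). Then write the natural time-indexed LP, which because of non-migration carries both machine-and-time variables $x_{j,i,t}$ ("job $j$ runs on machine $i$ during integer slot $t$") and assignment variables, with the per-machine-per-slot capacity constraints, the total-$m$ capacity, the coverage constraints $\sum_{i,t} x_{j,i,t} = p_j$ with each $j$ on a single $i$, and the precedence constraints; then apply the $\SA$ hierarchy for $r = (\log n)^{O((m^2/\eps^2)\log\log n)}$ rounds. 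This yields, for every set $S$ of at most $r$ scheduling decisions, a consistent distribution over partial schedules of $S$, so that conditioning on the schedule of a small set of jobs and re-solving keeps the solution locally integral there.

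Next I would build the recursion $\recursive$. A call receives an interval $I$, an equal partition of $I$ into $k$ sub-intervals $I_1,\dots,I_k$, and a $\SA$ solution conditioned on the decisions already made at higher levels, where $k$ and the recursion depth are chosen so that (per-level conditioning)$\times$(depth) stays at $r$ and the per-level accuracy losses telescope to $\eps$, which is exactly where the $m^2/\eps^2$ in the exponent comes from. Relative to $I$, classify each job by its length and its fractional support: $\jtop$ are the \emph{long} jobs (length $\ge |I|/\mathrm{poly}(k)$, or $\SA$-support spanning many of the $I_\ell$); by a total-work argument plus a Dilworth-type chain argument there are only $\mathrm{poly}(k)$ of them with significant mass in $I$, they govern the long chains, and the procedure $\topjobs$ enumerates precedence-consistent machine-and-position assignments for them (this is the branching that carries the machine commitments down the recursion), conditions the $\SA$ solution on the guess, and proceeds. $\jmid$ are jobs of the one or two length scales $\approx |I|/k$ that straddle sub-interval boundaries — again only $\mathrm{poly}(k)$ by a work argument, handled together with $\jtop$, with a $1/\mathrm{poly}(k)$ fraction dropped into $\jdis$/$\taskdis$. $\jbot$ are the remaining \emph{short} jobs, whose $\SA$ mass lies essentially inside a single $I_\ell$; condition on the $\jtop\cup\jmid$ decisions, use $\chunk$/$\capa$ to record the residual per-slot capacity left in each $I_\ell$ by the committed jobs, and recurse on each $I_\ell$. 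The recursion has depth $\approx \log_k n$; a base-case interval has length $\mathrm{polylog}(n)$ and few jobs, and is solved by list scheduling at a $(1+\eps)$ loss.

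For the analysis, three things must be checked. Feasibility: the $m$-capacity holds everywhere because the committed $\jtop\cup\jmid$ jobs occupy only $\mathrm{poly}(k)\ll m\cdot|I_\ell|$ of the slots in any $I_\ell$, so the recursively produced sub-schedule composes with them (this is what $\capa$ tracks); precedence holds because $\topjobs$ places the few long jobs in a precedence-respecting order, the sub-interval assignment of $\jbot$ respects $\prec$ up to the discarded jobs, and all discarded jobs ($\jdis,\taskdis$) are appended after $T$; and no-migration holds because every job is committed to exactly one machine exactly once (by $\topjobs$ or at its base case). Makespan: the discarded work telescopes — each of the $\approx\log_k n$ levels loses a $1/\mathrm{poly}(k)$ fraction of the total work $\le mT$, so the total discarded work is $\le \eps m T$ and appending it costs $+\eps T$; likewise the slack introduced when placing $\jtop\cup\jmid$ at a level is $\approx\eps|I|/(\text{depth})$, telescoping to $+\eps T$. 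Running time: the algorithm solves the $\SA$ LP at every node of the branching tree, with conditioning depth $\le$ (per-level top/mid jobs) $\times$ (depth) $\le r$, and the number of branches is $n^{O(r)}$, giving total time $n^{O(r)} = n^{(\log n)^{O((m^2/\eps^2)\log\log n)}}$.

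The main obstacle is exactly the interplay of arbitrary lengths with the single-machine constraint. In the unit-length case a job either lies inside a sub-interval or is a "top" job that conditioning eliminates; with arbitrary lengths there is a whole spectrum of intermediate length scales, and a job straddling a sub-interval boundary cannot simply be recursed. Getting the $\jbot/\jmid/\jtop$ classification right so that (a) the set of jobs "resolved" at each level has size only $\mathrm{poly}(k)$, (b) the machine commitments produced by $\topjobs$ remain consistent with the $\SA$ conditioning across all levels of the recursion, and (c) the discarded fraction still telescopes to $\eps$, is the crux — and it is where the $m^2/\eps^2$ (from branching over machine assignments of committed jobs at every level, together with the number of recursion levels) shows up in the exponent.
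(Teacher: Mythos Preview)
Your plan has a genuine gap in the claim that ``there are only $\mathrm{poly}(k)$ [top jobs] with significant mass in $I$.'' A total-work argument bounds only the number of jobs with \emph{length} $\ge |I|/\mathrm{poly}(k)$; it says nothing about jobs whose $\SA$-support spans many sub-intervals. Already in the unit-length Levey--Rothvoss setting, arbitrarily many jobs can have fractional support straddling the midpoint of $I$ and hence be ``top'' in your sense, and no Dilworth-type argument changes this (such jobs can form an antichain of size $\Theta(m|I|)$). Consequently your $\topjobs$ enumeration over machine-and-position assignments of $\jtop\cup\jmid$ would in general have to branch over far more than $\mathrm{poly}(k)$ jobs, and both the running-time bound and the conditioning-depth bound collapse. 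Relatedly, your description of $\jbot$ assumes that after committing the long jobs every remaining short job has support inside a single $I_\ell$; this is false for the same reason.

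The paper's route avoids this entirely: it never bounds the \emph{number} of top jobs and never branches. Instead it conditions on a small set of \emph{special} jobs (at most $K=mk^22^{k^2}/\delta$ per level) so as to bound the \emph{maximum chain length} among top jobs by $k^2\delta T^*$. Non-migration for each special job comes for free from the LP: once one task of $j$ is conditioned to $(i,t)$, constraint~\eqref{e:nomigration} forces every task of $j$ onto machine $i$. A new $\mysplit$ operation then pushes special-job tasks down so that each is wholly supported on a single level-$k^2$ sub-interval and is eventually scheduled by conditioning at the base case. The remaining (possibly many) top jobs are inserted by a new combinatorial EDF+ECT rule (Theorem~\ref{thm:edf}), which is shown to discard at most $2p^2m\,\Delta(\jtop^*)$ tasks where $p\le 2^{k^2}$ is the number of sub-intervals and $\Delta(\jtop^*)\le k^2\delta T^*$ is the now-small chain length. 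This EDF+ECT step --- turning a migratory tentative schedule of many top jobs into a valid non-migratory one while losing few tasks --- is the paper's second main technical contribution, and it is the piece your enumeration-based plan has no analogue for.
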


We give a detailed explanation of our algorithm and techniques in Section~\ref{sec:modifications_to_LR16},
preceded by an introduction to Levey and Rothvoss' framework in Section~\ref{sec:LR}.
In Section~\ref{sec:makespan}, we give a complete proof of Theorem~\ref{thm:secondresult}.

\medskip
Next, we turn our attention to the non-preemptive case ($Pm|\text{prec}|C_{\max}$). 
Here we give some evidence to show that the algorithmic framework based on the Sherali-Adams hierarchy may not be sufficient to get a polynomial-time $(1+\epsilon)$-approximation, even when there are only 2 machines.
Our reasoning behind this claim is as follows. 
The rounding algorithm 	used in the proof of Theorem \ref{thm:secondresult} reduces our problem to the case of a deadline scheduling problem with the objective of maximizing throughput; that is, maximizing the number of job units completed, where a job may be processed partially.
For this problem, somewhat surprisingly, we prove that any $o(\log n)$-level Sherali-Adams lift of the basic LP has at least a constant integrality gap.
Since the throughput problem is equivalent to a special case of non-preemptive scheduling on 2 machines ($P2|\text{prec}|C_{\max}$), we believe that $P2|\text{prec}|C_{\max}$ also has a constant integrality gap with an $o(\log n)$-level Sherali-Adams lift. 
This is in contrast to the unit-length case, where a $2$-level Sherali-Adams hierarchy can solve the problem $P2|\text{prec},p_j=1|C_{\max}$ exactly \cite{Rot13}, and many experts believe that $O(m)$ levels should give either an exact solution or a $(1+\epsilon)$-approximation. 
We give more details about the hard instances for $P2|\text{prec}|C_{\max}$
in \ifsoda the full version of the paper. \else Section~\ref{sec:integrality}. \fi

\medskip

The second main result of the paper concerns another classic problem: scheduling jobs with precedence and \emph{communication delay constraints}. 
This model was introduced by Papadimitriou and Yannakakis~\cite{Papadimitriou} and Veltman et al.~\cite{veltman1990multiprocessor} to capture the fact that in  multiprocessor systems, when jobs have dependencies $j \prec j'$, it takes time to transfer the output of a job~$j$
to another machine, where it will become the input of job~$j'$.
MapReduce systems and multi-core processors are modern examples of such systems.  
The formal setting of this problem is similar to that of makespan minimization with precedence constraints.  
However, if two jobs with $j \prec j'$ are executed on {\em different machines},
the second job~$j'$ is allowed to start only $c_{j,j'} \geq 0$ time units after the completion time of job $j$.
Here $c_{j, j'}$ is the {\em communication delay} between $j$ and $j'$. 
On the other hand, if $j$ and $j'$ are executed on the same machine, then $j'$ can start right after $j$ completes.

This model has been studied quite extensively in the literature, and yet our understanding of it is very limited.
The surveys by Schuurman and Woeginger \cite{SW99a} and Bansal \cite{Bansalmapsp} list the approximability status of problems in this model as a top-ten open problem in scheduling theory.     
Most known results are for the special case where all the jobs have unit lengths and the communication delays $c_{j, j'}$ are also identically 1. 
In the classic notation, this special case is denoted by $P|\text{prec}, p_j = 1, c = 1|C_{\max}$. 
For this problem, Hanen  and Munier \cite{HM01} gave a polynomial-time $7/3$-approximation algorithm, while on the hardness side Hoogeveen et al.~\cite{HLV94} showed the problem does not admit a better than $5/4$-approximation algorithm unless $P = NP$.  
Another important case that has gained a lot of interest is when the number of machines is unbounded; the setting is non-trivial in presence of communication delays.  
The problem, denoted as $P\infty|\text{prec}, p_j = 1, c = 1|C_{\max}$, admits a $4/3$-approximation due to Munier and Konig \cite{MK97}, and it is NP-hard to do better than $7/6$~\cite{HLV94}. Papadimitriou and Yannakakis claim that there is a lower bound of 2 for the $P\infty|\text{prec}, p_j = 1, c|C_{\max}$ problem (where the communication delays are uniform but arbitrary), yet there has been no proof of this claim as far as we know; see Open Problem 3 in the survey by Schuurman and Woeginger~\cite{SW99a} for more details.

\medskip
As communication delay constraints with $c_{j,j'} \geq 0$ strictly generalize scheduling with precedence constraints\footnote{This statement is not true if $c_{j,j'} >0$ instead of $c_{j,j'} \geq 0$. Thus, Svensson's hardness result does not immediately apply to the $c=1$ case, for example.}, Svensson's hardness result \cite{Sve10} for $P|\text{prec}, p_j = 1|C_{\max}$, assuming UGC, also applies to our problem with the communication delay even for unit job lengths; that is, ($P|\text{prec}, p_j = 1, c_{j,j'}|C_{\max}$). 
Hence, we initiate the study of this problem when the number of machines is a constant.
To the best of our knowledge, we are the first to consider the communication delay problem in the $m = O(1)$ setting.
Our second main result is a generalization of our first result Theorem \ref{thm:secondresult}  to this setting.
\begin{theorem}
	\label{thm:delayresult}
	For any $\epsilon > 0$, there is a $(1+\epsilon)$-approximation algorithm for the problem $Pm|\text{prec}, \text{pmtn}, c_{j,j'}|C_{\max}$ that runs in time $ n^{(\log n)^{O((m^2/\epsilon^2)\log \log n))}}$ if $\max_{j \prec j'} c_{j, j'}  = O(1)$.
\end{theorem}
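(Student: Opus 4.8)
The plan is to run the algorithm behind Theorem~\ref{thm:secondresult} (given in Section~\ref{sec:makespan}), which adapts the Levey--Rothvoss framework~\cite{LR16}, with three localized modifications that account for the communication delays. Setting all $c_{j,j'}=0$ recovers exactly the problem of Theorem~\ref{thm:secondresult}, so the delays are the only new ingredient; and since $\max_{j\prec j'}c_{j,j'}\le C$ for a constant $C$, each modification costs only an $O(1)$ factor, absorbed by rescaling $\epsilon$. In particular the number of Sherali--Adams levels, and hence the running time, is the same as in Theorem~\ref{thm:secondresult}.

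\emph{LP relaxation.} To the time-indexed LP underlying Theorem~\ref{thm:secondresult} I would add, for every pair $j\prec j'$, a constraint expressing that if $j$ is assigned to a machine $i$ and $j'$ to a machine $i'\neq i$, then the first slot used by $j'$ is at least $c_{j,j'}$ after the last slot used by $j$. This can be encoded linearly at the cost of $O(1)$ extra Sherali--Adams levels per pair (introduce one indicator for whether $j$ and $j'$ are co-located). The true optimum is feasible for the augmented LP, and its $\Theta\big((\log n)^{O((m^2/\epsilon^2)\log\log n)}\big)$-level lift has the same size as before.

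\emph{Recursive rounding.} The procedure repeatedly (i) picks a small set of special jobs of the current block whose descendant sets inside the block are large, (ii) conditions the Sherali--Adams solution on an integral schedule of these jobs sampled from its marginals, and (iii) recurses on the sub-blocks with the remaining jobs, discarding an $\epsilon'$-fraction of the work. Delays among jointly fixed jobs are automatically respected, since the distribution we sample from is supported on delay-feasible assignments. A delay from a fixed job $j$ to a job $j'$ handled later turns, once $j$'s machine and completion time $f_j$ are fixed, into a release-time constraint ``$j'$ does not start before $f_j+c_{j,j'}$'' (and a delay into a fixed job into a symmetric shifted deadline); I would create room for these by shifting the left endpoint of each sub-block right by $C$, choosing the sub-block decomposition so that the total room so created is dominated by the $O(\epsilon'T)$ slack already spent by the analysis of Theorem~\ref{thm:secondresult}. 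The discarded jobs I would place all on a single machine in topological order and append at the end: this satisfies every delay trivially and lengthens the schedule only by their total volume, which Theorem~\ref{thm:secondresult}'s analysis bounds by $O(\epsilon'T)$ (hidden constant depending on $m$). Taking $\epsilon'$ a small constant multiple of $\epsilon/(mC)$ yields the $(1+\epsilon)$ guarantee; and when $T=O_{m,\epsilon,C}(1)$ one instead enumerates all $n^{O(mT)}=n^{O_{m,\epsilon,C}(1)}$ preemptive schedules directly.

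\emph{Main obstacle.} The crux is step (iii): one must re-establish the central structural lemma of~\cite{LR16} --- that the conditioned Sherali--Adams solution, restricted to a sub-block, is close enough to an integral schedule for the recursion to continue --- now in the presence of the delay-induced release-time and deadline shifts, and one must verify that these $O(1)$ shifts do not compound to more than $\epsilon T$ when summed over the $O(\log\log n)$ levels of the recursion tree. I expect the lemma itself to survive, since the reduction to a deadline/throughput problem described in the introduction is already part of the argument and release times are symmetric to deadlines; the delicate point is the global accounting of the $O(1)$ shifts against the recursion, together with re-checking that the delay constraints interact correctly with the top/special/bottom decomposition of the precedence DAG at every level.
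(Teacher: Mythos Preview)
Your high-level plan---re-use the Section~\ref{sec:makespan} algorithm and absorb the $O(1)$ delays into the existing slack by rescaling $\epsilon$---is exactly what the paper does. But the concrete mechanisms you propose diverge from the paper's and at least one of them is problematic.

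\textbf{LP constraint.} You propose adding a co-location indicator per pair $j\prec j'$ and encoding the delay via extra Sherali--Adams levels. The paper avoids this entirely: the delay constraint is already linear in the base variables, namely $x_{(a_{1,j'},i,t+1)}+\sum_{i'\neq i}x_{(a_{p_j,j},i',t)}\le 1$ for all $i,t$. No new variables, no extra lifting.

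\textbf{Where delays are enforced.} The paper does \emph{not} shift sub-block endpoints. Instead it identifies the only three places where tasks are actually placed and patches each locally:
\begin{enumerate}
\item Base case (schedule by conditioning): within one base interval the LP constraint guarantees delays; across adjacent base intervals the algorithm simply \emph{discards all tasks in the last $\beta$ slots} of each base interval, which costs $O(m\beta)$ per interval and is negligible since base intervals have length $\Omega(\log n)$.
\item Top-job insertion: the paper reworks \textsf{EDF+ECT} into \textsf{EDF+ECT+COMM}, which schedules each top job $j$ only in $[B_j+1,C_j-1]$, leaving the slots $B_j$ and $C_j$ empty. This single-slot buffer at each end (generalized to $\beta$ slots) suffices for any delay into or out of a top job, regardless of where the other endpoint was scheduled. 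The bound on discarded tasks becomes $6p^2m\Delta(J)$ instead of $2p^2m\Delta(J)$---same asymptotics.
\item Discarded-task reinsertion: each discarded task gets $2\beta+1$ private new slots (not one), placed in the middle, so its buffers absorb any delay.
\end{enumerate}

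\textbf{The gap in your proposal.} Your mechanism ``shift the left endpoint of each sub-block right by $C$'' is the wrong tool. It would address delays \emph{between} consecutive sub-blocks, but the real issue is delays between a top job (inserted after recursion) and a bottom job (scheduled inside recursion), or between two top jobs on different machines; shifting sub-block boundaries does nothing for these. Moreover, your accounting worry (``do the shifts compound over $O(\log\log n)$ recursion levels?'') becomes a real problem with block-shifting, whereas the paper's per-task buffer approach avoids any compounding: each buffer is charged once to the job it protects, and the total is absorbed into the existing $\Delta(J)$ and discarded-task bounds. The piece you are missing is precisely the modification of the top-job insertion routine (Lemma~\ref{lem:commtopdiscardedtsks} and Theorem~\ref{thm:commedf}), which is where almost all of the delay-specific work lives.
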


Observe also that $\max_{j \prec j'} c_{j, j'}  = O(1)$ is a weaker assumption than $c =1$, which is the only setting where previously known results hold. 
We obtain the above result by extending the Sherali-Adams hierarchy framework introduced for the problem without communication delay constraints, i.e., the setting of 	Theorem \ref{thm:secondresult}.
This shows the versatility of the LP-hierarchy based approaches to the study of these problems. 
We anticipate that such approaches should help in resolving open problems in this model,  which is one of the poorest-understood in scheduling.
 We introduce our techniques behind this result in Section~\ref{sec:ideas_comm_delays} and give the complete analysis in \ifsoda the full version of the paper. \else Section~\ref{sec:communication}. \fi

\subsection{Outline}
\ifsoda
	Our
\else
	The proofs of Theorems \ref{thm:secondresult} and \ref{thm:delayresult} are quite involved and build on \cite{LR16}, and hence need a fair amount of background. Therefore, our
\fi
 paper is organized as follows. 
In Section~\ref{sec:ourtech}, we give a detailed but informal description of our two algorithms for Theorems \ref{thm:secondresult} and \ref{thm:delayresult}, and discuss the new ideas of this paper. 
\ifsoda\else
	We formally define the properties of Sherali-Adams solutions that we use in our algorithms in Section~\ref{sec:SA}.  
\fi
In Section~\ref{sec:makespan}, we first give a formal description of our algorithm for the first problem along with all the necessary lemmas, and show how these lemmas come together for the proof of Theorem \ref{thm:secondresult}; then we give complete proofs for these lemmas. 
\ifsoda
	The extension of our algorithm to the setting with communication delays, as well as our integrality gap result, can be found in the full version of the paper.
\else
	In Section~\ref{sec:communication}, we show how to extend the algorithm to the setting with communication delays, thus proving Theorem~\ref{thm:delayresult}.  Finally, we give our integrality gap result in Section~\ref{sec:integrality}.
\fi

\section{High-Level Description of Our Algorithms and Techniques}
\label{sec:ourtech}
Our proofs for Theorems \ref{thm:secondresult} and \ref{thm:delayresult} are obtained by rounding LP-hierarchy solutions of natural LPs for the problems. 
First we focus on the makespan minimization problem with precedence constraints for general job lengths when no migrations are allowed.
Later we explain the new ideas needed to extend this result to the communication delay setting.
We often refer to the former setting as the ``no-delay'' setting, and to the latter as the ``delay setting''.

As our work generalizes the framework introduced by Levey and Rothvoss \cite{LR16}, we begin by describing their algorithm for minimizing makespan when jobs have unit lengths.  Beforehand, it will be worthwhile to give an intuitive explanation of LP hierarchies and the ``conditioning" operation of LP-hierarchy solutions used in the design of our algorithms.

\subsection{Intuitions behind Sherali-Adams Hierarchy and ``Conditioning'' Operation} 
\label{sec:SA-intuition}
We give a brief explanation of how we use the Sherali-Adams hierarchy in our algorithms; see \ifsoda the full version of the paper \else Section~\ref{sec:SA} \fi for more details. Let us start with an ideal situation. Assume we have a set $\calX \subseteq \{0, 1\}^n$ corresponding to the set of valid integral solutions for some instance of a problem.  Further suppose that we are given a vector $x \in \text{convex\hyphen hull}(\calX)$, the convex hull of $\calX$. 
That is, there is an implicit distribution $\pi$ over $\calX$ such that $x = \E_{\tilde x \sim \pi} \tilde x$.  Then, we hope there is an oracle that, for an index $i \in [n]$ with $x_i > 0$, can return the vector $x' = \E_{\tilde x \sim \pi: \tilde x_i = 1} \tilde x$; that is, the solution corresponding to the distribution $\tilde x \sim \pi$ conditioned on the event $\tilde x_i = 1$. 
This is called the ``conditioning'' or ``inducing" operation. We use the word conditioning, as it is same as the conditioning operation known from probability theory.

In an intuitive sense, the Sherali-Adams hierarchy (and other LP/SDP hierarchies) provides a weaker form of the oracle that can support the conditioning operations.  Assume there is a polytope $\calP \supseteq \text{convex\hyphen hull}(\calX)$ that corresponds to the feasible solutions for some LP relaxation for the instance. Then the SA-hierarchy can be applied to $\calP$, giving the oracle with the following restrictions. First, every vector $x$ given to the oracle is associated with a \emph{level} $\ell \in \Z_{>0}$. If we give a level-$\ell$ vector $x$ to the oracle, the vector $x'$ returned by the oracle will only have level $\ell-1$. Second, the vector $x$ given to the oracle does not need to be in $\text{convex\hyphen hull}(\calX)$. Indeed, for each level $\ell$, the SA-hierarchy gives a polytope $\SA(\calP, \ell) \subseteq [0, 1]^n$ such that
\begin{align*}
	\calP = \SA(\calP, 1) \supseteq \SA(\calP, 2) \supseteq \cdots \supseteq \text{convex\hyphen hull}(\calX) \,.
\end{align*}
The oracle only needs a level-$\ell$ vector $x$ to be in $\SA(\calP, \ell)$ to perform the conditioning operation; the returned vector $x'$ will be in $\SA(\calP, \ell-1)$.  So for this reason, the vectors in $\SA(\calP,\ell)$ on which the conditioning operation is performed are sometimes called ``pseudo-distributions''. Finally, to construct a vector $x \in \calP_\ell$, in general we need a running time of $n^{\Theta(\ell)}$. This means that $\ell$ needs to be small, which places a limit on the number $\ell - 1$ of conditioning operations that we can apply to a vector $x$ ``sequentially''.

Certain key properties will be satisfied by the (pseudo-)conditioning operation, as in the ideal case.  For example, if we condition $x$ on the event $\tilde x_i = 1$, then the returned vector $x'$ will have $x'_i = 1$.  Also, conditioning can only shrink the support of vectors: if $x'$ is obtained from $x$ by conditioning, then $x_{i'} = 0$ implies $x'_{i'} = 0$ for every $i'$.  \ifsoda The full version of the paper \else Section~\ref{sec:SA} \fi contains formal statements. See also the following references \cite{Laurent, sherali, Lasserre, Lovasz, Rothvoss13}; in particular, \cite{Rothvoss13} is an excellent introduction to the use of hierarchies in approximation algorithms.

	\subsection{Overview of Levey-Rothvoss Algorithm}
	\label{sec:LR}
	
	At the heart of the analysis of the Levey-Rothvoss algorithm \cite{LR16} is the following simple observation:
	if the maximum chain length of jobs in the set $J$ is at most $\epsilon T$, where $T$ is a guess of the optimal makespan, then Graham's algorithm already gives a $(1+\epsilon)$-approximation. 
	At a high level, the algorithm in \cite{LR16} uses the above observation in the following way: 
	it partitions the input instance $J$ into three sets $\jtop, \jmid$ and $\jbot$. 
	It guarantees that the size of $\jmid$ is small, so for the moment we can ignore jobs in $\jmid$.
	The partitioning is done, guided by an LP-hierarchy solution, satisfying the following three properties: 
	1) The maximum chain length among $\jtop$ is small. 
	2) The precedence constraints between the jobs in the sets $\jtop$ and $\jbot$ are {\em loose}, and can be easily satisfied. 
	3) There is a partition of the entire time horizon into small-length intervals such that each job in $\jbot$ has fractional support completely contained in one interval in the partition.
	Then one can recursively schedule the jobs in $\jbot$ by considering sub-instances defined by the small intervals, and such a schedule can be easily extended to include $\jtop$ due to Properties 1) and 2). 

	Now we give more details about the algorithm. The Levey-Rothvoss algorithm first makes a guess on the makespan $T$ of the optimum schedule, which we can assume is a power of two. The time horizon $[T]$ is partitioned into a binary laminar family $\calI$ of intervals, where the root interval is $[T]$ and each subsequent level is constructed by dividing each interval of the previous level into two equal-sized intervals. Each leaf interval contains a single slot. To obtain a schedule of $J$ with makespan at most $(1+\epsilon)T$, it suffices to obtain a schedule with makespan $T$, but with up to $\epsilon T$ jobs discarded. This new goal is indeed more convenient for the description and analysis of the algorithm. At the end, the discarded jobs are re-inserted into the final schedule, each on a time slot of its own.

	We start by solving a standard LP for the problem lifted to $r$ rounds of the Sherali-Adams hierarchy, for some $r=(\log n)^{O_{m,\epsilon}( \log\log n)}$, to obtain a solution $x$. 
	It takes $n^{O(r)}$ time to solve a $r$-round Sherali-Adams lift of an LP with $\text{poly}(n)$ variables; this is the reason for the running time of this algorithm.
	Based on the solution $x$, each job $j$ is assigned to the smallest interval $I$ in the binary decomposition $\calI$ that fully contains the fractional schedule of the job. 
	We then say $j$ is \emph{owned} by $I$.

	The recursive algorithm begins by considering the jobs assigned to the first $k^2$ levels of the binary decomposition $\calI$, for $k = \Theta(\log \log n)$.  
	In the first step, we condition on some variables of the LP solution to reduce the maximum chain length among these jobs.  
	The number of conditioning operations can be bounded due to the following reason: If there is a long chain of jobs in the set, we can condition on some variable so that the support of many jobs in the chain will shrink a lot; thus a few conditioning operations will be sufficient.
	In the second step, we partition the intervals in $\calI$ into top, middle and bottom intervals. Let $\ell^* \in [k+1, k^2]$ be selected satisfying some property. The top intervals are those from level $0$ to $\ell^* - k - 1$ in the laminar tree (the root has level $0$), the middle intervals are those from level $\ell^* - k$ to $\ell^*-1$ and the bottom intervals are those at level $\ell^*$ and below.  
	Then, we define $\jtop$, $\jmid$ and $\jbot$ as the set of jobs owned by top, middle, and bottom intervals respectively.  In the third step, the rounding algorithm recursively and separately solves each of the bottom instances: such an instance is defined by a bottom interval $I \in \calI$ at the level $\ell^*$ and the set of bottom jobs owned by sub-intervals of $I$, along with the solution $x$ restricted to the interval $I$.  This gives a schedule of the jobs inside interval $I$.  In the fourth step, we insert (a large subset of) the top jobs into the constructed schedule to obtain our tentative schedule for the current instance.  
	
	Notice that the algorithm will discard all middle jobs and some top jobs.  The number of middle jobs discarded can be bounded by choosing the parameter $\ell^*$ carefully, while the number of top jobs discarded across the entire algorithm can be bounded using Properties 1) and 2) mentioned above. 
	Property 1) (the chain lengths among top jobs are small) is satisfied because of our conditioning in the first step.
	Property 2) (the precedence constraints between top and bottom jobs are ``loose'') is easy to satisfy using the fact that
	a top interval is at least $2^k$ times longer than the bottom intervals. We can ``shrink'' the fractional support of a top job by a small amount in order to remove the dependence between top and bottom jobs; since a top interval is so long, the shrinking operations only incur a small number of discarded top jobs.
	Finally, we can put together all these pieces to show that the total number of discarded jobs is at most $\epsilon T$, which gives a $(1+\epsilon)$-approximation to the problem.

\subsection{Our Algorithm for General Job Lengths (Theorem \ref{thm:secondresult})}
\label{sec:modifications_to_LR16}
A simple way to extend \cite{LR16} to the case when jobs have arbitrary processing lengths would be the following:
Replace each job $j$ of length $p_j$ by a chain $a_{1,j} \prec a_{2,j} \prec \ldots \prec a_{p_j,j}$ of length $p_j$, where
each $a_{i,j}$ is a ``task" of unit length.
Now, treat these tasks  as separate entities, and schedule them using the algorithm in \cite{LR16}. 
However, as the subproblems in \cite{LR16} are solved independently, this approach does not guarantee that the schedule is non-migratory;
that is, tasks belonging to a single job are scheduled on the same machine.
A natural idea to get around this issue is to assign a job to an interval in the binary laminar decomposition only if {\em all the tasks of the job} belong to that interval. 
This will make sure that  the tasks belonging to a single job are handled in a single recursive call, and are not scheduled independently of each other.
This is the approach we wish to take.
However,
one difficulty that arises is: how to reduce the maximum chain length among the top jobs?
To understand this issue, consider the case when a long chain consists entirely of a single top job. 
In this case,
no matter how we condition, the support of the job will not move down to the lower intervals. 
In general, it seems unavoidable that we have to allow tasks of a same job to go to different branches of recursive calls to keep the chain lengths among top jobs small.
This brings us to the first main technical hurdle an algorithm for general job lengths has to deal with: 
How to keep track of tasks of a single job going into different recursive calls, and still ensure that they get scheduled on the same machine?

A crucial observation about the  algorithm in \cite{LR16} helps us in mitigating the tricky situation:
the total number of conditionings performed to reduce the chain length of the top jobs is small in any recursive call of the algorithm.
Let us call the jobs that the algorithm conditions on {\em special jobs}.
There cannot be more than
$O(\log T \cdot \gamma )$ 
such special jobs at any level of the recursion, where $\gamma$ is the total number of conditionings performed by our algorithm in a single recursive call. 
As we argued earlier, this number is not too large.
Further, for every special job, the entire support of the job becomes concentrated on a single machine.
This is guaranteed by the non-migratory constraints of our machine-indexed LP (see Eq.~\eqref{e:nomigration}).

Our algorithm exploits the above two properties in tandem. 
For the special jobs, we allow tasks of a single job to go to different branches of the recursion. 
On the other hand, for a job that is not special, we ensure that all the tasks belonging to the job are considered by the same recursive call. 
For each special job, our algorithm needs to know exactly which tasks belonging to it go to each branch of the recursion.
Moreover, we also require such tasks to be completely scheduled in that interval.
This is needed to argue that there will be enough free slots to insert the top jobs after the  tasks belonging to bottom jobs are recursively scheduled. 
Our algorithm accomplishes this by introducing a new type of conditioning operation called {\em splitting}, which precisely guarantees the above two invariants. 
  
Using the splitting operation, our algorithm propagates all the tasks of special jobs down to the lowest level of the recursion.
In particular, our algorithm maintains the invariant that the special jobs (more precisely, the tasks belonging to special jobs) will {\em never be a part of the set of top or middle jobs}.
At the lowest level, the tasks of special jobs eventually get scheduled by conditioning on the LP solution.
Since all the tasks of a special job have their support entirely on a single machine due to the LP constraints, this implies that they are all scheduled on the same machine. 

Thus, the key ideas of special jobs and splitting conditioning operations help us to construct a non-migratory schedule of bottom jobs.
Now, we need to extend this schedule to include top jobs, which brings us to the second major technical issue an algorithm for the general lengths case has to solve: 
How can we schedule top jobs in a non-migratory fashion in the slots  left open by the bottom jobs?
This is tricky because: a) Scheduling the bottom jobs is done independently of the top jobs and it is not guaranteed that every job has $p_j$ units of empty slots on some  machine. 
b) The LP solution for top jobs only says that there is enough space to schedule jobs if migration were allowed.
Moreover, we also need to satisfy the precedence relationships among the top jobs, and between the top and bottom jobs.

\medskip
The second key contribution of this paper is a new algorithm to insert the top jobs. Notice that the algorithm of Levey and Rothvoss \cite{LR16} does give such a procedure, but only for the unit-length case. 
In contrast, our algorithm needs to deal with the substantially harder case of arbitrary job sizes and the non-migratory constraints.
Our method proceeds in two stages.
In the first stage, we build a {\em tentative schedule} which allows migration of jobs, but guarantees that the precedence constraints between the top and the bottom jobs are satisfied.
To that end, we identify an interval $[r_j, d_j]$ for every job $j$ such that if $j$ is scheduled within this interval,
then the precedence constraints between the top and the bottom jobs are satisfied. 
Here, $[r_j, d_j]$ may be shorter than the interval defined by the support of the LP solution for $j$.
This leads to discarding some tasks; however, a simple argument based on an extension of Hall's theorem proved in \cite{LR16} shows that the number of discarded tasks is small. This stage is similar to the procedure in \cite[Section~5.1]{LR16}.

The more difficult question is how to convert the migratory schedule from the first stage into a non-migratory schedule respecting the precedence constraints among the top jobs; this is another technical contribution of our paper.
To solve this problem, our algorithm considers the jobs in the Earliest Deadline First (EDF) order of their $d_j$ values. 
Suppose $B_j \ge r_j$ is the first time slot where the job $j$ can be scheduled respecting {\em all} the precedence constraints.
Then our algorithm  assigns $j$ to the machine $i$ on which it will have {\em the Earliest Completion Time} (ECT).
Suppose $C_j$ is its completion time. 
Now we make the following crucial observation about the ECT policy: 
If one looks at the interval $[B_j, C_j]$ in which $j$ was scheduled, there can be at most $p_j$ empty slots on any other machine $i' \neq i$.
If this were not true, then it implies that there is another machine $i' \neq i$ and time slot $t' < C_j$ such that job $j$ could have been feasibly scheduled in the interval $[B_j, t']$ on the machine $i'$.
This leads to a contradiction with our policy.
This observation, combined with  the invariant maintained by our algorithm -- that the chain lengths are small among the top jobs -- guarantees that the number of slots we waste because of the precedence and non-migratory constraints is not too large. 

Finally, it is possible that there is no machine $i$ on which job $j$ can be scheduled completely within its deadline $d_j$. 
In this case, we find the machine with the maximum number of empty slots, and schedule the job partially.
We discard the tasks that we could not schedule.
This {\em fragmentation} also leads to some more slots being wasted.
However, we argue that the EDF+ECT policy guarantees that the number of  slots wasted due to fragmentation is also small.
In the end, somewhat surprisingly, we prove that the total number of tasks our algorithm discards is asymptotically the same as in \cite{LR16}.

\subsection{Our Algorithm for Problem with Communication Delays (Theorem \ref{thm:delayresult})}
\label{sec:ideas_comm_delays}

Now we give a high-level overview of how the LP hierarchy framework easily extends to the problem with communication delays where $\max_{j,j'}{c_{j,j'}} = O(1)$, i.e, $Pm|\text{prec}, p_j = 1, \text{pmtn}, c_{j,j'}|C_{\max}$, which we call the ``delay problem''. 
Recall that in the delay problem, along with precedence constraints, the algorithm also needs to enforce communication delay constraints; that is, if $j \prec j'$ and $j$ and $j'$ are scheduled on two different machines, then $j'$ cannot start earlier than $C_j + c_{j,j'}$, where $C_j$ is the completion time of job $j$.
For simplicity we assume that all communication delays are 1. The arguments directly extend to the more general case $c_{j,j'} = O(1)$.

It should be clear from the overview of our algorithm  for $Pm|\text{prec}, \text{pmtn}|C_{\max}$, which we call the ``no-delay problem'', that it is almost impossible to control how jobs are scheduled in the different branches of recursive calls. 
Hence, there is no easy way our algorithm can make scheduling choices taking into account the delay constraints.  
Instead, we let our recursive algorithm make choices without considering the communication delay constraints; when the actual assignment of tasks to time slots is carried out, we will enforce the communication delay constraints, assuming the worst possible scenario.
Before we explain our strategy to do so, let us first summarize the three points at which our algorithm for the no-delay problem actually commits to the assignment of tasks to time slots.

\begin{enumerate}
	\item Scheduling tasks by conditioning. At the lowest level of recursive call, our algorithm schedules the tasks by conditioning on the LP hierarchy solution. 
        \item Scheduling top jobs.
        \item Scheduling discarded tasks.
\end{enumerate}

Enforcing the communication delay in the first and third steps is rather easy: Our LP for the problem, which includes communication delay constraints, guarantees that if two jobs are scheduled by conditioning in the same recursive call, then the communication delay constraints are satisfied.
To argue that communication delay constraints are also satisfied if they belonged to different branches of recursion needs a bit more work, but is not difficult. 
On the other hand, as the number of discarded tasks is small, we can afford to create three new private slots for each discarded task, and schedule the task in the middle, leaving the other two slots empty. 
This will take care of communication delay constraints, no matter how other tasks are scheduled.
Thus, it only remains to argue how we guarantee that the communication delay constraints are satisfied while inserting the top jobs. 

Here, we go back to a central idea in \cite{LR16}: If chain lengths are small (bounded by $\epsilon T$), then Graham's algorithm already gives a $(1+\epsilon)$-approximation to the no-delay problem.
This crucial observation easily extends in the presence of communication delay constraints:
It is not hard to argue that Graham's list scheduling algorithm gives a $(1+2\epsilon)$-approximation to makespan if the chain lengths are small.
Now note that in our algorithm the maximum chain length among the top jobs is small due to conditioning. 
We make use of this fact, along with several new observations, to give an extension of our algorithm for inserting top jobs for the no-delay problem  to the delay problem.

To argue that communication delay constraints are also satisfied even if one job gets scheduled by conditioning and the other job gets scheduled as a top job needs some care and some additional tricks.
However, the overall argument still relies mainly on the above three cases.

\subsection{Note About Practical Applications}
Besides being fundamental problems, the scheduling models studied in this paper have gained much importance recently in the context of datacenter scheduling literature; 
see a recent workshop on this topic \cite{ttic} for more pointers.
We give some context here. 
Programming models such as Dryad \cite{Dryad} or SparkSQL \cite{spark} compile scripts into job DAGs, which give rise to precedence-constrained scheduling problems. 
See \cite{grandl2016graphene, AgrawalLLM16} and references therein. 
Similarly, communication delay problems arise in workloads from MapReduce systems and in the {\em model parallelism} paradigm \cite{amar3} for training complex machine learning models on large clusters; see \cite{amar3, bubblepaper, amar1, amar2} and references therein.
We do not claim that the algorithms in this paper can be used in these applications directly;
however, the framework of the algorithms proposed in this paper and in \cite{LR16} shares many parallels to the heuristic for DAG scheduling developed (independently) in \cite{grandl2016graphene}, which can be viewed as replacing the Sherali-Adams based ``conditioning" step by a brute-force search;
see here \cite{janatalk} for an exposition.

\ifsoda\else
\section{Basics of the Sherali-Adams Hierarchy}
\label{sec:SA}
In this section, we formally state some basic facts about the Sherali-Adams hierarchy that we will need in our proofs. We refer the reader to \cite{Laurent, sherali, Lasserre, Lovasz, Rothvoss13}  for an extensive introduction to hierarchies.  
The purpose of this section is to formally  state the  properties intuitively described in Section~\ref{sec:SA-intuition} so that there is no ambiguity in the \ifsoda proof of Theorem~\ref{thm:secondresult}. \else proofs of Theorems~\ref{thm:secondresult} and~\ref{thm:delayresult}. \fi Hence, it can be skipped on the first reading.
\ifsoda\else
	However, the integrality gap result in Section~\ref{sec:integrality} needs to use the details in the definition of the SA hierarchy.
\fi

Assume we have a feasibility LP of the form $Ax \leq b$, which includes the constraints $0 \le x_i \le 1$ for all $i \in [n]$. The set of feasible integral solutions is defined as $\calX = \{x \in \{0, 1\}^n: Ax \leq b\}$.  It is convenient to think of each $i \in [n]$ as an event, and in a solution $x \in \{0, 1\}^n$, $x_i$ indicates whether the event $i$ happens or not. 

The idea of the Sherali-Adams hierarchy is to strengthen the original LP $Ax \leq b$ by adding more variables and constraints. Of course, each $x \in \calX$ should still be a feasible solution to the strengthened LP (when extended to a vector in the higher-dimensional space). For some $r \geq 1$, the $r$-th round of the Sherali-Adams lift of the linear program has variables $x_S$ for every $S\subseteq [n]$ of size at most $r$. For every solution $x \in \calX$, $x_S$ is intended to indicate whether all the events in $S$ happen in the solution $x$; that is, $x_S = \prod_{i \in S}x_i$. Thus each $x \in \calX$ can be naturally extended to a 0/1-vector in the higher-dimensional space defined by all the variables.

To derive the set of constraints, let us focus on the $j$-th constraint $\sum_{i=1}^na_{j,i} x_i \leq b_j$ in the original linear program. Consider two subsets $S, T \subseteq [n]$ such that $|S| + |T| \leq r - 1$. Then the following constraint is valid for all $x \in \calX$:
\begin{align*}
	\prod_{i \in S}x_i\prod_{i \in T}(1-x_i)\left(\sum_{i = 1}^n a_{j, i}x_i - b_j\right) \leq 0.
\end{align*}

To {\em linearize} the above constraint, we expand the left side of the above inequality and replace each monomial with the corresponding variable $x_{S'}$.  Then, we obtain the following linear constraint:
\begin{align}
	\sum_{T' \subseteq T}(-1)^{|T'|} \left(\sum_{i =1}^na_{j, i}x_{S \cup T' \cup \set{i}} - b_jx_{S \cup {T'}}\right) \leq 0. \label{inequ:SA}
\end{align}

The $r$-th round of the Sherali-Adams lift contains the above constraint for all $j, S, T$ such that $|S| + |T| \leq r-1$, and the trivial constraint that $x_{\emptyset} = 1$. For a linear program $\calP$ and an integer $r \geq 1$, we use $\SA(\calP, r)$ to denote the $r$-th round Sherali-Adams lift of $\calP$. We also view $\calP$ (resp.\ $\SA(\calP, r)$) as the polytope of feasible solutions to the linear program $\calP$ (resp.\ $\SA(\calP, r)$). For every $i \in [n]$, we identify the variable $x_{i}$ in the original LP and $x_{\{i\}}$ in the lifted LP. 

A simple observation is that $x_{S_1} \geq x_{S_2}$ if $S_1 \subseteq S_2$, for a valid solution $x \in \SA(\calP, r)$ and $|S_1| \leq |S_2| \leq r$. For consider the case where $S_2 = S_1 \cup \{i\}$ for some $i \notin S_1$: linearizing the constraint $x_i\leq 1$ multiplied by $\sum_{i' \in S_1}x_i$ gives the constraint $x_{S_2} \leq x_{S_1}$. This implies that all the variables have values in $[0, 1]$, as $x_\emptyset = 1$.

\paragraph{Conditioning} Let $x \in \SA(\calP, r)$ for some linear program $\calP$ on $n$ variables and $r \geq 2$. Let $i \in [n]$ be an event such that $x_{i} > 0$; then we can define a solution $x' \in \SA(\calP, r-1)$ obtained from $x$ by ``conditioning" on the event $i$. For every $S \subseteq [n]$ of size at most $r-1$, $x'_S$ is defined as
\begin{align*}
	x'_S:= \frac{x_{S \cup \{i\}}}{x_i}.
\end{align*}

\begin{observation}
	Let $x'$ be obtained from $x \in \SA(\calP, r)$ by conditioning on some event $i$, for some $r \geq 2$.  Then $x' \in \SA(\calP, r-1)$ and $x'_i = 1$.
\end{observation}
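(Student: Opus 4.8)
The plan is to verify directly from the definitions that the conditioned vector $x'$, with $x'_S := x_{S \cup \{i\}} / x_i$ for $|S| \le r-1$, satisfies every defining inequality of $\SA(\calP, r-1)$, and that $x'_{\{i\}} = 1$. The second assertion is immediate: $x'_{\{i\}} = x_{\{i\}\cup\{i\}}/x_i = x_{\{i\}}/x_i = 1$, and similarly $x'_\emptyset = x_{\{i\}}/x_i = 1$, so the trivial constraint of the lift is met. The bulk of the work is the membership $x' \in \SA(\calP, r-1)$.

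For membership, I would fix an original constraint $\sum_{k} a_{j,k} x_k \le b_j$ and subsets $S, T \subseteq [n]$ with $|S| + |T| \le r-2$, and show that the linearized inequality \eqref{inequ:SA} written for $x'$, $S$, $T$ holds. The key step is to relate it to the inequality \eqref{inequ:SA} for $x$ applied to the enlarged set $S \cup \{i\}$ and the same $T$ — note $|S \cup \{i\}| + |T| \le (r-2) + 1 = r-1$, so this inequality is available in $\SA(\calP, r)$. Substituting $x'_{S' } = x_{S' \cup \{i\}}/x_i$ into each monomial of \eqref{inequ:SA} for $x'$ multiplies the whole left-hand side by $1/x_i > 0$ and replaces each $x'_{S \cup T' \cup \{k\}}$ by $x_{S \cup T' \cup \{k\} \cup \{i\}}$ and each $x'_{S \cup T'}$ by $x_{S \cup T' \cup \{i\}}$; the resulting expression is exactly $1/x_i$ times the left-hand side of \eqref{inequ:SA} for $x$ with the set $S$ replaced by $S \cup \{i\}$. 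Hence it is $\le 0$ as required. One should also separately record the trivial-type constraints: $x'$ is indexed only by sets of size $\le r-1$, all its entries lie in $[0,1]$ (this follows from the monotonicity remark $x_{S_1} \ge x_{S_2}$ for $S_1 \subseteq S_2$ proved in the excerpt, applied to $S \subseteq S \cup \{i\}$, which gives $0 \le x_{S\cup\{i\}} \le x_{S}$ and hence $0 \le x'_S \le 1$ after dividing, using also $x'_S \le x'_\emptyset = 1$), and $x'_\emptyset = 1$.

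The only mild subtlety — and the step I would be most careful about — is the bookkeeping with the index $i$: when $i \in S$ or $i \in T'$ for some $T' \subseteq T$, the set $S \cup T' \cup \{k\} \cup \{i\}$ may coincide with $S \cup T' \cup \{k\}$, so one must check the substitution is still consistent. Since conditioning is only ever applied with $x_i > 0$, and since the identity $x'_{S'} = x_{S' \cup \{i\}}/x_i$ is stated for all $S'$ including those containing $i$ (where it trivially reduces to $x_{S'}/x_i$ times nothing new, consistent with $x'_{S'}=x_{S'\cup\{i\}}/x_i$), the algebraic manipulation above goes through verbatim; the apparent degeneracy is harmless because \eqref{inequ:SA} for $x$ at the set $S\cup\{i\}$ already "absorbs" the index $i$. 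I would spell this out in one sentence rather than belabor it. Altogether the proof is a short, direct substitution argument with no real obstacle beyond this indexing care.
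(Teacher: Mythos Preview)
Your proposal is correct and follows essentially the same approach as the paper: both verify $x'_i = x'_\emptyset = 1$ directly from the definition and then observe that the lifted constraint \eqref{inequ:SA} for $x'$ at $(j,S,T)$ is exactly $1/x_i$ times the lifted constraint for $x$ at $(j, S\cup\{i\}, T)$, which is available since $|S\cup\{i\}|+|T|\le r-1$. Your treatment is more explicit (the $[0,1]$ bounds and the $i\in S$ bookkeeping), but there is no substantive difference.
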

\begin{proof}
	$x'_i = \frac{x_{\{i\} \cup \{i\}}}{x_i} = \frac{x_i}{x_i} = 1$, and $x'_\emptyset  = \frac{x_{\emptyset \cup \{i\}}}{x_i} = \frac{x_i}{x_i} = 1$. The constraint \eqref{inequ:SA} on $x'$ for $j, S$ and $T$ is implied by \eqref{inequ:SA} on $x$ for $j, S \cup \{i\}$ and $T$.
\end{proof}

\begin{observation}
	Let $x \in \SA(\calP, r)$ for some $r \geq 2$ and $x' \in \SA(\calP, r-1)$ be obtained from $x$ by conditioning on some event $i$. Then, if $x_{i'} \in \{0, 1\}$ for some $i' \in [n]$, then $x'_{i'} = x_{i'}$.
\end{observation}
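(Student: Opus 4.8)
By definition of conditioning, $x'_{i'} = x'_{\{i'\}} = x_{\{i'\}\cup\{i\}}/x_i = x_{\{i,i'\}}/x_i$, and $x_i>0$ since we conditioned on $i$. So the whole statement reduces to pinning down the single lifted variable $x_{\{i,i'\}}$ in terms of $x_i$, which I would do by a case split on whether $x_{i'}=0$ or $x_{i'}=1$.

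The case $x_{i'}=0$ uses only the monotonicity fact already recorded in this section, that $x_{S_1}\ge x_{S_2}$ whenever $S_1\subseteq S_2$. Applying it to $\{i'\}\subseteq\{i,i'\}$ gives $0 = x_{\{i'\}}\ge x_{\{i,i'\}}\ge 0$, so $x_{\{i,i'\}}=0$ and hence $x'_{i'}=0=x_{i'}$. The case $x_{i'}=1$ is the only place a genuine Sherali--Adams inequality is needed. I would instantiate constraint~\eqref{inequ:SA} for the trivial original inequality ``$x_{i'}\le 1$'' with the split $S=\emptyset$, $T=\{i\}$; this is legal since $|S|+|T|=1\le r-1$ as $r\ge 2$. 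Expanding the alternating sum over $T'\in\{\emptyset,\{i\}\}$ gives $x_{i'} - x_\emptyset - x_{\{i,i'\}} + x_{\{i\}}\le 0$, and substituting $x_\emptyset=1$, $x_{i'}=1$ collapses this to $x_i = x_{\{i\}}\le x_{\{i,i'\}}$. Combined with the reverse inequality $x_{\{i,i'\}}\le x_{\{i\}} = x_i$ from monotonicity, this forces $x_{\{i,i'\}}=x_i$, so $x'_{i'} = x_i/x_i = 1 = x_{i'}$.

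There is no real obstacle; the one point requiring care is choosing the right $(S,T)$ split when lifting ``$x_{i'}\le 1$'': placing $i$ in $T$ (not in $S$) is what produces the useful cross term $-x_{\{i,i'\}}+x_{\{i\}}$ after the alternating-sign expansion, whereas placing $i$ in $S$ would merely reproduce monotonicity. (The degenerate case $i=i'$ needs no separate treatment: then $\{i,i'\}=\{i\}$, so $x'_{i'}=x_i/x_i=1$, consistent with $x_{i'}=x_i>0$ and $x_{i'}\in\{0,1\}$.)
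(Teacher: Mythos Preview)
Your proof is correct and essentially identical to the paper's. The paper phrases the key inequality in the $x_{i'}=1$ case as ``expanding $(1-x_i)(1-x_{i'})\ge 0$ gives $1-x_i-x_{i'}+x_{\{i,i'\}}\ge 0$,'' which is exactly your instantiation of \eqref{inequ:SA} for the box constraint $x_{i'}\le 1$ with $S=\emptyset$, $T=\{i\}$, just stated in product form rather than via the template; the $x_{i'}=0$ case is handled the same way in both.
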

\begin{proof}
	If $x_{i'} = 0$, then $x'_{i'} = \frac{x_{\{i'\} \cup \{i\}}}{x_i} = 0$ since $x_{\{i'\} \cup \{i\}} \leq x_{i'} = 0$. Consider the case $x_{i'} = 1$. Expanding the constraint $(1 - x_{i})(1-x_{i'}) \geq 0$ gives the constraint $1 - x_i - x_{i'} + x_{\{i', i\}} \geq 0$. This implies $x_i = x_{\{i', i\}}$. Thus, $x'_{i'} = \frac{x_{\set{i,i'}}}{x_i} = 1$.
\end{proof}

The observation says that once an event $i'$ happens with extension 0 or 1 w.r.t.~a lifted solution $x$, then it will always happen with the same extension (0 or 1) w.r.t any solution $x'$ obtained from $x$ by conditioning. To understand the conditioning operation and the above observations better, it is useful to consider the ideal case where $x$ corresponds to a convex combination of integral solutions in $\calX$. Then we can view $x$ as a distribution over $\calX$. Then, conditioning on the event $i$ over the solution $x$ corresponds to conditioning on $i$ over the distribution $x$. 

\fi
\section{Minimizing Makespan Under Precedence Constraints For General Job Lengths}
\label{sec:makespan}
In this section we consider the problem of minimizing the makespan when jobs have arbitrary processing lengths, and prove Theorem \ref{thm:secondresult}.
The input to the problem consists of a set of jobs $J$, where each job $j \in J$ has a processing length $p_j$,
and the precedence constraints $\prec$ between jobs. 
We imagine $j$ as being made up of $p_j$ tasks (or atoms) $a_{1,j}, a_{2,j}, \ldots, a_{p_j,j}$. 
The set of tasks of a job $j$ is denoted by $A(j)$. 
Similarly, we define $A(J')$ as the set of tasks of jobs in $J'$ for every $J' \subseteq J$. 
Formally, $A(J') := \union_{j \in J'} A(j)$. 
We take the transitive closure of the precedence relations among jobs; that is, if $j_1 \prec j_2$ and $j_2 \prec j_3$, then we have $j_1 \prec j_3$.

We use the notation $a \sim a'$ to mean that the tasks $a$ and $a'$ belong to same job. 
Formally, $a \sim a'$ if there is $j \in J$ such that $a, a' \in A(j)$.
The precedence constraints between jobs are extended to the tasks in the following natural way. 
For each job $j$, first we assume that  $a_{1,j} \prec a_{2,j} \prec \ldots \prec a_{p_j,j}$. 
Consider any two jobs $j$ and $j'$ with precedence constraint $j \prec j'$. 
Then, for tasks $a \in A(j)$ and $a' \in A(j')$,  we introduce a precedence constraint $a \prec a'$.  
Sometimes we also overload the precedence relation and write $a \prec j'$ to mean that every task of $j'$ needs to be scheduled after the task $a$; that is, there is $j \in J$ such that $a \in A(j)$ and  $j \prec j'$.

During the description of our algorithm  we often go back and forth between two views. 
In the {\em task view} of the problem, we imagine our input as consisting of a set of tasks $A(J)$,  each task  with unit size, and precedence constraints as defined above. 
In the {\em job view}, we treat each job as a separate entity.

Our goal is to assign each job to a single machine and specify a schedule of tasks such that the precedence constraints among jobs are satisfied. 
Our objective is to minimize the makespan of the schedule, which is defined as the completion time of the last task. 
Formally, we define a valid schedule as follows. 

	A  schedule $\calS$ for a subset $A' \subseteq A(J)$ of tasks on an interval $I \subseteq [T]$ with integer length is a function 
	$\calS: A' \rightarrow [m] \times I$ that indicates the (machine, time slot) pair that each task is assigned to.  For every $a \in A'$, we then use $\calS_\mac(a)$ and $\calS_\tim(a)$ to denote the first and second component of $\calS(a)$ respectively. 

\begin{definition}
\label{def:valid}
	A schedule $\calS$ for $A' \subseteq A(J)$ is \emph{valid} if it satisfies the following constraints.
	\begin{itemize}
	\item Capacity Constraints: for every two tasks $a \neq a' \in A'$, we have $\calS(a) \neq \calS(a')$. 
	\item No-migration Constraints:  For every pair of tasks $a \sim a' \in A'$, we have $\calS_\mac(a) = \calS_\mac(a')$.
	\item Precedence Constraints:  For every pair of tasks $a \prec a' \in A'$, we have $\calS_\tim(a) < \calS_\tim(a')$.
		\end{itemize}
\end{definition}

So, a valid schedule for $A'$ guarantees that for a job $j$ all the tasks of $j$ in $A'$ are assigned to a single machine.  
We define  the completion time of job $j$ as the time at which the last task of $j$ is scheduled and denote it by $C_j$. 
If $j \prec j'$, then $C_j < C_{j'}$. 
Throughout the section, we use $N = \sum_{j} p_j$ to denote the total number of tasks.  We can assume by appropriate scaling of the input that $N\le n^2/\epsilon$; see \ifsoda the full version of the paper \else Appendix~\ref{sec:N-big} \fi for details.

\subsection{LP Relaxation}
Our algorithm that proves Theorem \ref{thm:secondresult} is based on rounding the Sherali-Adams lift of a natural LP for the problem. 
The variables of  LP are $x_{(a, i, t)}$, which are intended to be 1 if the task $a \in A(J)$ is assigned to machine $i \in [m]$ at time $t \in [T]$. 
A natural LP formulation to decide if there is a valid schedule with makespan at most $T$ is as follows:

\ifsoda\else
	\noindent
	\begin{minipage}{0.27\textwidth}
\fi
		\begin{alignat}{2}
			\sum_{i, t}x_{(a, i, t)} &= 1 &\quad &\forall a   \label{e:taskschedule} \\ 
			\sum_{a} x_{(a, i, t)} &\leq 1 &\quad &\forall i, t \label{e:capconstraints} \\
			\nonumber
	\end{alignat}
\ifsoda\else
	\end{minipage}
	\begin{minipage}{0.73\textwidth}
\fi
		\begin{alignat}{2}
			\sum_{i, t' \leq t+1} x_{(a', i, t')} &\leq \sum_{i, t' \leq t} x_{(a, i, t')} &\quad &\forall a \prec a', t \in [T-1] \label{e:precconstraints}\\
			\sum_{t}x_{(a', i, t)} &= \sum_{t}x_{(a, i, t)} &\quad &\forall a \sim a', i \label{e:nomigration} \\
			x_{(a', i, t)} &\in [0,1]  &\quad &\forall  a \in A(J),  i \in [m],  t \in [T] \label{e:nonnegativity}
	\end{alignat}
\ifsoda\else
	\end{minipage}
\fi
\bigskip

The constraints \eqref{e:taskschedule} guarantee that every task is feasibly scheduled, while \eqref{e:capconstraints} are the capacity constraints. 
The constraints (\ref{e:precconstraints}) impose the precedence order among tasks. 
Here we appeal to the task view of the problem. 
Finally,~\eqref{e:nomigration} are intended to enforce the no-migration constraints: in a non-migratory schedule all the tasks $A(j)$ of a job $j$ are scheduled on a single machine. 
Hence a valid schedule satisfies those constraints.
Therefore, if there is an optimal integral solution with makespan at most $T$, then there is a feasible solution to the LP. 
We use $\calP(T)$, and simply $\calP$ when $T$ is clear from the context, to denote the polytope defined by the above LP relaxation.

\medskip
Towards proving the main result (Theorem \ref{thm:secondresult}), we first design a LP rounding algorithm that only schedules a subset $A(J) \setminus \taskdis$ of tasks.
Our main goal in this section is to prove the following lemma.

\begin{lemma}
	\label{lem:mainmakespan}
	 Let $T$ be the smallest value for which the Sherali-Adams lift of LP (\ref{e:taskschedule}-\ref{e:nonnegativity}) to $r = (\log n)^{O((m^2/\epsilon^2).\log \log n)}$ rounds has a feasible solution $\vecx$. In time $n^{O(r)}$ we can find a valid schedule $\calS : A(J) \setminus \taskdis \rightarrow [m] \times [T]$ with $|\taskdis| \leq \epsilon T$. 
\end{lemma}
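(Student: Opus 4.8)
### Proof Plan for Lemma~\ref{lem:mainmakespan}

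The plan is to follow the recursive framework of Levey--Rothvoss, adapted to handle tasks that belong to common jobs. First I would set up the binary laminar family $\calI$ of intervals of $[T]$ (the root being $[T]$, each node split into two halves, leaves being single slots). Using the Sherali-Adams solution $\vecx$, I would assign each job $j$ to the smallest interval $I \in \calI$ that fully contains the fractional support of \emph{all} tasks in $A(j)$ (the ``owner'' of $j$), so that an entire job stays within a single recursive subproblem. The recursion will be on a window of $k^2$ consecutive levels of $\calI$ for $k = \Theta(\log\log n)$, with recursion depth $O_{m,\epsilon}(\log\log n / \log\log\log n)$ roughly; at each level I will pick a splitting level $\ell^* \in [k+1, k^2]$ by an averaging argument so that the number of jobs owned by middle intervals (levels $\ell^*-k$ to $\ell^*-1$) is small, giving a partition $J = \jtop \,\dot\cup\, \jmid \,\dot\cup\, \jbot$.

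The key new step is a preprocessing phase I will call ``chain-length reduction via conditioning and splitting.'' As long as there is a long chain (length $> \epsilon |I|$ for the current interval $I$) among jobs destined to be top jobs, I would condition the pseudo-distribution on a carefully chosen event (some task of some job in the chain being scheduled in a particular sub-interval). By a volume/potential argument — each conditioning substantially shrinks the support of many chain-members, as in \cite{LR16} — only $\gamma = (\log n)^{O(1)}$ conditionings per recursive call are needed. Jobs conditioned upon become \emph{special jobs}; for these I allow tasks to fall into different recursive branches, but I invoke the no-migration constraint \eqref{e:nomigration} (preserved under conditioning, since conditioning only shrinks supports) to guarantee the whole support of a special job lies on one machine. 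The new \emph{splitting} operation additionally forces, for each special job and each bottom interval, an exact integral count of how many of its tasks are scheduled (and completed) there; I would verify this is implementable as a sequence of SA conditionings with a bounded cost in levels. The invariant to carry down is: special tasks are never top or middle, so they are always pushed into the bottom recursion until eventually scheduled by conditioning at the base case (single-slot intervals).

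Next I would handle the recursion: solve each bottom instance (bottom interval $I$ at level $\ell^*$ together with the jobs owned below it, with $\vecx$ restricted to $I$) recursively, obtaining a valid partial schedule inside $I$. Then comes the second technical contribution, inserting $\jtop$: (i) a ``deadline'' stage where I assign each top job $j$ a window $[r_j, d_j]$ (possibly shorter than its LP support) chosen so that precedence between top and bottom jobs is automatically respected — here a top interval is $\geq 2^k$ times a bottom interval, so the shrink discards only few tasks, bounded via the Hall-type lemma of \cite{LR16}; and (ii) an EDF+ECT stage where I process top jobs in increasing $d_j$ order, compute the earliest feasible start $B_j \ge r_j$ respecting all precedences, and assign $j$ entirely to the machine minimizing its completion time, or — if no machine finishes it by $d_j$ — place it partially on the emptiest machine and discard the overflow tasks. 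The crucial observation is that in $[B_j, C_j]$ every other machine has at most $p_j$ empty slots (else ECT is contradicted), which combined with the small top-chain-length bounds the wasted space.

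Finally I would assemble the discard accounting: charge middle jobs to the good choice of $\ell^*$ (a $1/k$ fraction of volume, summed over recursion depth stays $\le \epsilon T / \text{poly}$), top jobs discarded to the shrink-and-fragmentation bounds, special-job splitting overhead to $\gamma \cdot \log T$ per level, and base-case conditioning failures; summing over all $O_{m,\epsilon}(\log\log n)$ recursion levels gives $|\taskdis| \le \epsilon T$. The running time is $n^{O(r)}$ with $r = (\log n)^{O((m^2/\epsilon^2)\log\log n)}$ since the total number of conditioning/splitting operations along any root-to-leaf path of the recursion is $(\log n)^{O((m^2/\epsilon^2)\log\log n)}$, bounding the SA level needed. \textbf{The main obstacle} I anticipate is the EDF+ECT insertion analysis under both non-migration and arbitrary lengths simultaneously: showing that fragmentation (partially scheduled top jobs) plus non-migratory wasted slots plus the top-bottom precedence shrink all stay within the $O(\epsilon T)$ budget, while the chain-length invariant among top jobs — which is what makes this go through — must itself be maintained across recursion levels via the special-job/splitting machinery without blowing up the SA level.
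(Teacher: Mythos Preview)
Your proposal is correct and follows essentially the same approach as the paper: the special-job/splitting machinery to push conditioned jobs to the leaves on a fixed machine, the EDF+ECT insertion of top jobs with the key observation that every other machine has at most $p_j$ idle slots in $[B_j,C_j]$, and the recursive discard accounting all match the paper's argument. A few quantitative details are off---the per-call conditioning count is $(\log n)^{\Theta((m/\epsilon)^2\log\log n)}$ rather than $(\log n)^{O(1)}$, the recursion depth is $\Theta(\log T/k)$ rather than $O(\log\log n)$, and the base case is $|I^*|<2^{k^2}$ rather than single slots---but these do not affect the structure, since the paper bounds discards via a recursive inequality (Lemma~\ref{lem:critical}) rather than by multiplying a per-level loss by the depth.
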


We give a brief sketch of the proof that a partial schedule $\calS$ satisfying the guarantees of the above lemma can be easily extended to a valid schedule $\calS^*$ for  $A(J)$ with makespan $[T + |\taskdis|]$ as follows.
Consider a discarded task $a$. 
Let $t$ be the earliest time slot such that scheduling $a$ at $t$ satisfies all the precedence constraints.
We create a separate time slot for $a$ at time $t$, and schedule it there. 
We do not schedule any other task at $t$, and shift the entire schedule of tasks following $t$ by one time step. Since this time slot can be created on any machine, the non-migratory constraints can be satisfied.
We repeat this procedure for every discarded task. 
Hence scheduling the tasks in $\taskdis$ increases the makespan by an additive factor of $|\taskdis|$.
As $|\taskdis| \leq \epsilon T$, this implies that makespan of our final schedule $\calS^*$ is at most $(1+\epsilon)T$.
As the Sherali-Adams lift of the linear program is a valid relaxation of the optimal solution, the optimal makespan has to be at least $T$. 
Moreover, one can solve the lifted linear program in time $n^{O(r)}$. 
Putting together all these facts, we conclude that Lemma \ref{lem:mainmakespan} implies Theorem \ref{thm:secondresult}.

\smallskip

\subsection{Rounding Algorithm}

Our algorithm to prove Lemma \ref{lem:mainmakespan} is a generalization of the algorithm in \cite{LR16}.
Hence, for easier reading we try to keep the notation and structure of our paper similar to~\cite{LR16} as much as possible. 

\medskip

We begin by partitioning the interval $[T]$ into a balanced binary family $\calI$ of intervals of length $T, T/2, T/4, \ldots, 2, 1$. 
W.l.o.g., we can assume that $T$ is a power of two using the padding trick in \cite{LR16}. One way to visualize $\calI$ is to think of a balanced binary tree, where the root node corresponds to the interval $[T]$, and nodes at level $\ell$ correspond to the $2^{\ell}$~intervals obtained by partitioning $[T]$ into sub-intervals of length $T/2^{\ell}$.

\smallskip

We define some notation that will be used throughout the paper.  For an interval $I \in \calI$ and a subset of jobs $J' \subseteq J$ we define 

\begin{equation*}
\label{e:jobcontained}
J'(I, \vecx) = \bigg \{j \in J':  \forall a \in A(j), \sum_{i} \sum_{t \in I} x_{(a, i, t)} = 1  \bigg \} 
\end{equation*}
as the subset of jobs with support completely in the interval $I$ in the LP-hierarchy solution $\vecx$. Similarly, we define 

\begin{equation*}
\label{e:taskcontained}
A(I,J',\vecx) = \bigg\{a \in A(J'):  \sum_{i} \sum_{t \in I} x_{(a, i, t)} = 1 \bigg\}
\end{equation*}
as the set of tasks belonging to jobs in $J'$ that have their entire support in the interval $I$. We emphasize that $I$ may contain partial support of some tasks of $A(J')$, but they are not included in the set  $A(I,J',\vecx)$. We use $A(I,j,\vecx) \subseteq A(j)$ as a shorthand for $A(I, \set{j}, \vecx)$, i.e, as the subset of tasks of job $j$ that have entire support in $I$.

Finally, we define four global constants $k, \delta, K$ and $K'$. We set $k = \frac{O(1) m}{\epsilon} \cdot \log \log T$ to be large enough and $\delta = \frac{\epsilon}{8k^2m2^{2k^2} \log T}$. Let $K = \frac1\delta \cdot m k^ 2  \cdot  2 ^ { k ^ 2 }$ and $K' = K\cdot 2 ^ {k ^ 2} =  \frac1   \delta \cdot   m   k ^2 \cdot 2^{2k^2}$. The parameter $\delta$ is used to define what constitutes  a long chain. 
The roles of constants $k, K$ and $K'$ will become clear when we give the description of our algorithm.  For a set $J' \subseteq J$ of jobs, we use $\Delta(J')$ to denote the maximum possible \emph{total size} of jobs in a precedence-chain in $J'$. 

\smallskip
The recursive algorithm $\recursive$ used to prove Lemma \ref{lem:mainmakespan} is given as Algorithm~\ref{alg:recursive}.
The main claim  of \cite{LR16} is that if we fix any interval $I \in \calI$ and consider the set of tasks that are entirely scheduled in the interval $I$ by the LP solution $\vecx$, then one can find a partial schedule of the tasks in the interval $I$ that discards few tasks. 
The main contribution of this paper is that a similar statement can be shown even when jobs have arbitrary lengths and we enforce no-migration constraints. 
In order to give the main lemma, we shall first define a \emph{partial-scheduling} instance.

\begin{definition}
	\label{def:partial-scheduling} In a partial-scheduling instance, we are given an interval ${I^*} \in \calI$, an LP-hierarchy solution $x \in \SA(\calP, r)$ for $r = \log |{I^*}| \log T   \cdot K'$, a set $J^* \subseteq J({I^*}, x)$ of jobs that have complete support in ${I^*}$ according to $x$, and a special set $\jspl^* \subseteq J$ of jobs disjoint from $J^*$, with $|\jspl^*| \leq \log \frac{T}{|I^*|}\cdot K$. Further, we are given a function $\sigma:\jspl^* \to [m]$ such that for every $j \in \jspl^*$, we have
	\begin{enumerate}
		\item $j$ is only scheduled on $\sigma(j)$ in $x$, i.e, $x_{(a, i, t)} = 0$ if $a \in A(j)$ and $i \neq \sigma(j)$, 
		\item every task $a$ of $j$ is completely scheduled in ${I^*}$ w.r.t.~$x$, or not scheduled in ${I^*}$ at all. That is, $\sum_{t \in {I^*}} x_{(a, \sigma(j), t)}$ is either $0$ or $1$.
	\end{enumerate}
	The goal of the scheduling problem is to schedule (a subset of) $A(J^*) \cup A({I^*}, \jspl^*, x)$ in ${I^*}$. We denote the partial-scheduling instance by $({I^*}, J^*, \jspl^*, \sigma, x)$.
\end{definition}

Observe in the above definition that for each special job $j \in \jspl^*$ in a partial-scheduling instance,  we are given a machine $\sigma(j)$ on which $j$ must be scheduled. Moreover, we know exactly the set of special-job tasks that must be scheduled: these are the tasks that are completely scheduled in $I^*$ on $\sigma(j)$ in the LP solution $x$, and the other tasks are not scheduled at all.  Notice that the tasks to be scheduled are consecutive in the task chain for job $j$ due to the precedence constraints.

The following main lemma bounds the number of tasks discarded by the recursive algorithm $\recursive$ that solves the partial scheduling problem.

\begin{lemma}[Main Lemma]
\label{lem:critical}
Let $({I^*}, J^*, \jspl^*, \sigma, x)$ be a partial-scheduling instance, and let $A^* := A(J^*) \cup A({I^*},\jspl^*,\vecx)$ be the tasks we need to schedule. Then, $\recursive$ returns a valid schedule $\mathcal{S}$ for $A^*\setminus \taskdis$ of makespan ${I^*}$ for a set $\taskdis$ of discarded tasks of size at most
$$
|\taskdis| \leq \frac{\epsilon}{2} \cdot \frac{\log |{I^*}|}{\log T} \cdot |{I^*}|  + \frac{\epsilon}{2m} \cdot |A^*|.
$$
Moreover, for every job $j \in \jspl^*$, the set $A({I^*},j,\vecx)$ of tasks is scheduled on the machine $\sigma(j)$.
\end{lemma}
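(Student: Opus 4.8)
The plan is to prove this by induction on $|I^*|$, following the recursive structure of $\recursive$. The base case is when $|I^*|$ is below some constant threshold (say $|I^*| \le 2^{k^2}$), where we can afford to discard everything or schedule brute-force; here the term $\frac{\epsilon}{2}\cdot\frac{\log|I^*|}{\log T}\cdot|I^*|$ is nonnegative and $|I^*|$ is constant, so charging is trivial. For the inductive step, I would first describe one invocation of $\recursive$ on $(I^*, J^*, \jspl^*, \sigma, x)$ as sketched in Section~\ref{sec:LR}, adapted to the general-length setting: we look at jobs owned by the first $k^2$ levels below $I^*$, perform at most $\gamma = O(K)$ conditioning and $\mysplit$ operations to (a) shrink the maximum total chain length $\Delta$ among the top jobs below $\delta|I^*|$ and (b) push all tasks of the $\le \log\frac{T}{|I^*|}\cdot K$ special jobs strictly into bottom intervals with consecutive-task structure intact; then we pick a level $\ell^* \in [k+1,k^2]$ minimizing the number of middle-owned tasks, giving the partition into $\jtop, \jmid, \jbot$ (with special jobs entirely in $\jbot$ by the invariant). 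We then recurse on each bottom interval $I$ at level $\ell^*$ with its inherited special set (of size $\le \log\frac{T}{|I|}\cdot K$, which holds since $\log\frac{T}{|I|} = \log\frac{T}{|I^*|}+\ell^*$ and $\ell^* \cdot K \le K$ absorbed by the $K$ budget after also adding the newly created special jobs — this bookkeeping needs care), obtaining schedules to which we insert the top jobs via the new EDF+ECT procedure.

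The discarded-task count then decomposes into four parts, each bounded separately: (1) tasks owned by middle intervals, bounded by averaging over the $k^2-k$ choices of $\ell^*$ — a $\frac{1}{k^2-k}$ fraction of $|A^*|$, which is at most $\frac{\epsilon}{4m}|A^*|$ by our choice $k = \frac{O(1)m}{\epsilon}\log\log T$; (2) tasks discarded when shrinking supports of top jobs to make the top/bottom precedence constraints loose (Property~2) — since a top interval is $\ge 2^k$ times longer than a bottom interval and we shrink by a $1/\mathrm{poly}$ fraction on each of $\le\log T$ scales, this is at most $\delta\cdot\mathrm{poly}(k,2^{k^2})\cdot|I^*|$, absorbed into $\frac{\epsilon}{4}\cdot\frac{1}{\log T}\cdot|I^*|$ by the definition of $\delta$; (3) tasks discarded by the top-job insertion routine, which by the ECT observation (at most $p_j$ empty slots on any other machine in $[B_j,C_j]$, plus the small chain length among top jobs) and the Hall-type argument from \cite{LR16} costs $O(m\cdot\Delta(\jtop)\cdot\log T) \le O(m\delta|I^*|\log T)$, again absorbed by $\delta$; (4) tasks discarded inside the recursive calls, handled by the inductive hypothesis. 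Summing the recursion over bottom intervals at level $\ell^*$: the $\frac{\epsilon}{2m}|A^*_I|$ terms sum to $\le\frac{\epsilon}{2m}|A^*|$ (using that $\sum_I |A^*_I| \le |A^*|$, since the recursive instances partition a subset of the tasks), and the $\frac{\epsilon}{2}\cdot\frac{\log|I|}{\log T}\cdot|I|$ terms sum to $\frac{\epsilon}{2}\cdot\frac{\log|I^*|-\ell^*}{\log T}\cdot|I^*|$, leaving a slack of $\frac{\epsilon}{2}\cdot\frac{\ell^*}{\log T}\cdot|I^*| \ge \frac{\epsilon}{2}\cdot\frac{k}{\log T}\cdot|I^*|$ to pay for parts (2) and (3). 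Finally I would verify the ``moreover'' clause: special jobs are by invariant always bottom jobs, so they are handled purely recursively, and at the bottom of the recursion their tasks get scheduled by conditioning on $x$, which by constraint~\eqref{e:nomigration} and condition~1 of Definition~\ref{def:partial-scheduling} forces them all onto $\sigma(j)$.

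The main obstacle I anticipate is part (3): proving that the EDF+ECT top-job insertion discards few tasks despite the non-migratory constraint and the fact that the bottom schedule was built obliviously to the top jobs. The delicate point is quantifying ``wasted'' slots from two sources — the precedence/non-migration interaction (the ECT observation bounds this by $p_j$ per machine per job scheduled) and fragmentation when no single machine has $p_j$ free slots before $d_j$. One must set up the right potential/charging argument showing the total waste is $O(m\cdot\Delta(\jtop))$ per time-scale and then sum over the $O(\log T)$ scales; this is precisely the place where the small-chain-length invariant (Property~1, via conditioning) is essential, and where the argument genuinely diverges from \cite{LR16}'s unit-length insertion. A secondary obstacle is the bookkeeping that the special-set size and the Sherali--Adams round budget $r = \log|I^*|\log T\cdot K'$ are correctly maintained across recursion levels: each level spends $\le k^2$ levels of conditioning and adds $\le k^2\cdot(\text{something})$ special jobs, and one must check $\sum$ over the $\le\log T$ recursion depth stays within the $K$ and $K'$ budgets — this is routine but must be stated precisely.
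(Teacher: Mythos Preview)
Your overall inductive structure matches the paper's: recurse on bottom intervals, discard middle jobs, insert top jobs, and verify the special-job and SA-budget invariants. You also correctly identify the EDF+ECT insertion as the genuinely new technical work (the paper encapsulates it as Lemma~\ref{lem:topdiscardedtsks} and Theorem~\ref{thm:edf}).

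However, there is a real gap in your part~(1) accounting that prevents the induction from closing. You bound the middle tasks by $\frac{\epsilon}{4m}|A^*|$, but then your recursive contribution to the second term is already $\frac{\epsilon}{2m}\sum_I|A^*_I|$, which can be as large as $\frac{\epsilon}{2m}|A^*|$. Together this gives $\frac{3\epsilon}{4m}|A^*|$, overshooting the target $\frac{\epsilon}{2m}|A^*|$; and the slack $\frac{\epsilon}{2}\cdot\frac{\ell^*}{\log T}\cdot|I^*|$ cannot absorb a term of order $\frac{\epsilon}{4}|I^*|$ since $k=\Theta(\frac{m}{\epsilon}\log\log T)\ll \log T$. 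The point (already in \cite{LR16} and used verbatim here) is that the middle bound must charge \emph{only} to the tasks that are consumed at this level, namely $|A(\jtop^*)|+|A(\jmid^*)|$, plus an additive $\frac{\epsilon}{4}\cdot\frac{T^*}{\log T}$. Those tasks do not reappear in the recursive $A^*_I$'s, so the $\frac{\epsilon}{2m}$ coefficients add up across $\jtop^*,\jmid^*,\jbot^*,A(I^*,\jspl^*,x)$ to exactly $\frac{\epsilon}{2m}|A^*|$. (Relatedly, the averaging gives a $\frac{1}{k}$ fraction of the top-$k^2$-level tasks, not $\frac{1}{k^2-k}$ of $|A^*|$.)

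A couple of smaller inaccuracies to fix when you write this out: the per-call conditioning count is not ``$\le k^2$'' but roughly $(\log\frac{T}{|I^*|}+1)\cdot K'$ (each of the at most $\log\frac{T}{|I^*|}\cdot K + K$ special jobs triggers up to $2^{k^2}$ $\mysplit$ conditionings), which is exactly why $r$ carries the $K'$ factor; and the top-insertion loss is $O(p^2 m\,\Delta(\jtop^*))$ with $p=2^{\ell^*}\le 2^{k^2}$, not $O(m\,\Delta\cdot\log T)$---the $2^{2k^2}$ is what forces $\delta$ to carry a $2^{-2k^2}$.
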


Observe that Lemma \ref{lem:mainmakespan} follows immediately by instantiating the above lemma for the entire interval $[T]$ and the set of jobs $J^* = J$ and $\jspl^* = \emptyset$. 
Then the total number of discarded tasks will be 
$$
\frac{\epsilon}{2} \cdot \frac{\log T}{\log T} \cdot T  + \frac{\epsilon}{2m} \cdot |A(J)| \leq \epsilon \cdot T,
$$
where we used the fact that $|A(J)| \leq mT$. 
Definition~\ref{def:partial-scheduling} also requires an LP-hierarchy solution $x$ of level $r$ at least $(\log T)^2    \cdot K'$.

We have $T \leq N \le \mathrm{poly}(n)$. As we set $k = \frac{O(1) m}{\epsilon} \cdot \log \log T$ and  $\delta = \frac{\epsilon}{8k^2m2^{2k^2} \log T}$  and $K' = \frac1\delta \cdot mk^2 \cdot 2^{2k^2}$, we have
$$
r = (\log T)^2    \cdot K' = (\log n)^{O((\frac{m}{\epsilon})^2 \log \log n)}.
$$

The term that determines the asymptotic running time of our algorithm in the above equation is  $2^{2k^2}$, which is at most $(\log n)^{O((\frac{m}{\epsilon})^2 \log \log n)}$.

\medskip

\begin{algorithm}[h]
	\caption{\ifsoda \newline \fi \textsf{PARTIAL-SCHEDULE}$\left({I^*}, J^*, \jspl^*, \sigma,  x\right)$}
	\label{alg:recursive}
	\textbf{Input:} a partial-scheduling instance $({I^*}, J^*, \jspl^*, \sigma, x)$ satisfying Definition \ref{def:partial-scheduling}\\
	\textbf{Output:} a schedule $\calS^*: A(J^*) \cup A({I^*}, \jspl^*, x) \setminus \taskdis \rightarrow [m] \times I^*$ for some $\taskdis$
	\vspace*{-5pt}

	\noindent\rule{\linewidth}{0.2pt}
	\begin{algorithmic}[1]
		\State \textbf{if} $|{I^*}| < 2^{k^2}$ \textbf{then} schedule tasks by conditioning and return
		\State \textbf{for} {every job $j \in \jspl^*$} \textbf{do}: $x \gets \mysplit(x,  j)$ \label{step:initial-condition}
		\While{there exists ${{I}} \in \calI^*_{0} \cup \ldots  \cup \calI^*_{k^2-1}$ and a chain $\mathcal{C}$ of jobs in $J^*$ owned by $I$ with total size at least $\delta|{{I}}|$} \label{step:reduce-chain-start}
		\State $j \gets$ the first job in $\calC$, $a \gets $ last task of $j$
		\State take $(i, t)$ such that $x_{(a, i, t)} > 0$  with the largest $t$
		\State $x \gets $ $x$ conditioned on the event $(a, i, t)$
		\State $J^* \gets J^* \setminus \{j\}, \jspl^* \gets \jspl^* \cup \{j\}, \sigma(j) \gets i$
		\State $x \gets \mysplit(x, j)$
		\EndWhile \label{step:reduce-chain-end}
		
		\State Partition the jobs in the set $J^*$ as follows:
		
		$ \jtop^* = \union^{\ell^*-k-1}_{\ell = 0} J^*_{\ell}(\vecx) $; 
		$ \jmid^* = \union^{\ell^*-1}_{\ell = \ell^*-k} J^*_{\ell}(\vecx) $;
		$ \jbot^* = \union^{\log T^*}_{\ell = \ell^*} J^*_{\ell}(\vecx) $,
		
		\State where $\ell^* \in \{k, \ldots, k^2\}$ is chosen satisfying the condition below:
		\begin{eqnarray*}
			|A(\jmid^*)| \leq \frac{\epsilon}{4} \cdot \frac{T^*}{\log T} + \frac{\epsilon}{2m} \cdot \left(|A(\jmid^*)| + |A(\jtop^*)| \right)
		\end{eqnarray*}
		
		\For{every interval ${I} \in \calI^*_{\ell^*}$} 
		\State  $\recursive\big({I}, \jbot^*({I},\vecx), \jspl^*, \sigma, \vecx\big)$
		\EndFor
		\State Insert $\jtop^*$ into ${I^*}$ using Lemma \ref{lem:topdiscardedtsks}.			
	\end{algorithmic}
\end{algorithm}	

\begin{algorithm}[h]
	\caption{$\mysplit(x, j)$, where $j \in \jspl^*$} \label{alg:splitting}
	
	\begin{algorithmic}[1]
		\For{every ${I} \in \calI^*_{k^2}$ from left to right} 
		\State $t^* := \max \left \{t \in {I}: (\exists a \in A(I^*, j, \vecx)) \ x_{(a, \sigma(j), t)} > 0 \right \}$
			\If {$t^*$ is defined}
				\State let $a^* \in A(j)$ be any task with $x_{(a^*,\sigma(j), t^*)} > 0$
				\State $x \gets x$ conditioned on $(a^*, \sigma(j), t^*)$
			\EndIf
		\EndFor
		\State \textbf{return} $x$
	\end{algorithmic}
\end{algorithm}

From now on we focus solely on proving Lemma \ref{lem:critical}, and we assume we are given an instance $(I^*, J^*, \jspl^*, \sigma, x)$. During our algorithm, $I^*$ does not change, but we shall move jobs from $J^*$ to $\jspl^*$ and extend $\sigma$ accordingly. The LP-hierarchy solution $x$ will also be updated using the conditioning operation.
Let $T^*$ always denote $|I^*|$; note that $T^*$ is some power of two. Let $\calI^*$ denote the set of intervals in $\calI$ that are sub-intervals of $I^*$. 
For an integer $\ell \in [0, \log T^*]$, we use $\calI^*_\ell$ to denote the intervals in $\calI^*$ with length $T^*/2^\ell$; thus,  $\calI^* : = \calI^*_{0} \cup \calI^*_{1} \cup \ldots \cup \calI^*_{\log T^*}$, and each $\calI^*_{\ell}$ contains $2^\ell$ intervals of length $\frac{T^*}{2^\ell}$ each.
  
\smallskip
For a job $j \in J^*$, let $I' \in \calI^*$ be the interval of smallest length such that it contains the entire support of $j$.  We say that $I'$ {\em owns} the job $j$. If $I' \in \calI^*_{\ell}$, then we also say that the level $\ell$ {\em owns} the job $j$. 
We use the notation $J^*_{\ell'}(\vecx)$ to denote the subset of jobs in $J^*$ that are owned by a level $\ell' \in [\log T^*]$. 
That is,
$$
J^*_{\ell'}(\vecx) : = \bigg \{j \in J^*: \text{ level } \ell' \text{ owns } j \bigg \}.
$$

Contrast this with notation $J^*(I,\vecx)$, which indicates the set of jobs that have full support in the interval $I$. Also notice that only jobs in $J^*$ are owned by intervals or levels.
We say that ``the algorithm conditions on job $j$'' or ``condition on an event $(a, i, t)$" to mean that our algorithm conditions on the event $x_{(a,i,t)} = 1$
for a task $a \in A(j)$. Similarly, we use the phrase ``the algorithm conditions on task $a$".  
Note that as the solution $x$ changes due to conditioning,
the sets of jobs owned by certain intervals and levels can change as well.
Recall that we call a subset of jobs $J' \subseteq J$ a {\em chain} if the precedence relation $\prec$ gives a {\em total ordering on $J'$}. For any subset  $J'$ of jobs, $\Delta(J')$ is defined as  the maximum of $\sum_{j \in J''} p_j$ over all chains $J'' \subseteq J'$.

\subsection{Main Steps of $\recursive$}

Now we give details about our algorithm, which consists of five main steps. The pseudo-code for the algorithm, which we call $\recursive$, is given in Algorithm~\ref{alg:recursive}.  

\medskip

\noindent \textbf{Step 1: Reducing Chain Length Among Top Jobs.} 
Let $\hat{J} : = \bigcup^{k^2-1}_{\ell = 0} J^*_{\ell}(\vecx) $ be the set of jobs that are owned by the first $k^2-1$ levels of $\calI^*$ in the solution $\vecx$. 
By appropriately choosing certain jobs in $\hat{J}$ and conditioning on events of the form $x_{a,i,t} = 1$, our algorithm maintains the invariant that there are no long chains in $\hat{J}$. 
In particular, the maximum chain length $\Delta(\hat{J})\leq k^2 \delta T^*$.

The LP-hierarchy solution $\vecx$ changes in the following way. 
For every job $j$ on which our algorithm does the conditioning, the entire support of job $j$ gets concentrated on a single machine. 
The reason is that when we condition on an event $x_{a, i, t} = 1$, the non-migratory constraints in our LP (Eq. \ref{e:nomigration}) force all the other tasks $a'$ of $j$ to also have their entire support on machine $i$. 
Since conditioning can only shrink the support of jobs, this property remains true regardless of future conditioning operations.   
For every job $j$ on which our algorithm does conditioning, we define $\sigma(j)$ as the machine on which the entire support of $j$ resides in $\vecx$.
Further, we insert the job into the set $\jspl^*$ of special jobs  and delete it from $J^*$. 
Hence, by induction, we are guaranteed that in any recursive call  to $\recursive$ with input parameters $J'$ and $\jspl'$, the invariant $J' \cap \jspl' = \emptyset$ is satisfied.

Another consequence of conditioning is that the support of some jobs in the set $\hat{J}$ shrinks, and they move down the levels in $\calI^*$. 
Note however that a job $j \in \hat{J}$ can move down only $k^2$ levels. 
We use this observation along with a simple counting argument to show that the total number of conditioning operations required to reduce the chain length among jobs in the set $\hat{J}$ is at most $K'$, which is independent of the length of the interval $I^*$.
In particular, we prove the following lemma in Section \ref{label:chain}. 

	\begin{restatable}{lemma}{cutchains}
	\label{lem:cutchains}
		Let $\hat{J} := \bigcup^{k^2-1}_{\ell = 0} J^*_{\ell} (\vecx) $ denote the set of jobs owned by the intervals in $\calI^*_0, \calI^*_1, \ldots, \calI^*_{k^2-1}$ according to the LP hierarchy solution $\vecx$. Then, after line~\ref{step:reduce-chain-end} in $\recursive$,  we have that
		$\Delta(\hat{J}) \leq k^2\delta T^*$. Moreover, the number of iterations we run the loop on line~\ref{step:reduce-chain-start} is at most $K$.
	\end{restatable}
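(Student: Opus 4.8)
The plan is to treat the two assertions separately: the chain-length bound $\Delta(\hat{J})\le k^2\delta T^*$ follows purely from the termination condition of the loop on line~\ref{step:reduce-chain-start} together with a counting argument over the laminar family $\calI^*$, while the bound of $K$ on the number of iterations follows from a level-weighted potential that drops by a fixed amount at every iteration. For the first bound: once the loop has terminated, every precedence-chain of $J^*$-jobs all owned by a single interval $I'\in\calI^*_0\cup\cdots\cup\calI^*_{k^2-1}$ has total size strictly less than $\delta|I'|$ (otherwise the loop would not have stopped). Take an arbitrary precedence-chain $\mathcal{D}\subseteq\hat{J}$; for a level $\ell\in\{0,\dots,k^2-1\}$ and an interval $I'\in\calI^*_\ell$, the jobs of $\mathcal{D}$ owned by $I'$ form a sub-chain of $\mathcal{D}$ all owned by $I'$, so their total size is $<\delta|I'|=\delta T^*/2^\ell$; summing over the $2^\ell$ intervals of $\calI^*_\ell$, the jobs of $\mathcal{D}$ owned by level $\ell$ have total size $<\delta T^*$, and summing over $\ell=0,\dots,k^2-1$ gives $\sum_{j\in\mathcal{D}}p_j<k^2\delta T^*$.

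For the iteration bound, for a job $j\in J^*$ owned by level $\ell(j)$ put $w(j):=\max\{k^2-\ell(j),0\}$ and let $\Phi:=\sum_{j\in J^*}w(j)\,p_j$, so that only jobs of $\hat{J}$ contribute, each with weight in $\{1,\dots,k^2\}$. Every job of $J^*$ has its full LP-support in $I^*$, so $|A(J^*)|\le mT^*$ and hence $\Phi\le k^2|A(J^*)|\le k^2 mT^*$ at the start. Conditioning and $\mysplit$ only shrink the fractional support, so each job's owning interval can only move down in $\calI^*$; therefore $\Phi$ is non-increasing over the loop and always non-negative, and it suffices to show that each execution of the loop body decreases $\Phi$ by at least $\delta T^*/2^{k^2-1}$, which bounds the number of iterations by $\tfrac{k^2 mT^*}{\delta T^*/2^{k^2-1}}=\tfrac{k^2 m 2^{k^2-1}}{\delta}\le K$.

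So consider one iteration, handling an interval $I\in\calI^*_\ell$ with $\ell\le k^2-1$ and a chain $\calC$ of $J^*$-jobs owned by $I$ with $\sum_{j\in\calC}p_j\ge\delta|I|=\delta T^*/2^\ell$. Let $j$ be the first job of $\calC$, let $a=a_{p_j,j}$ be its last task, and let $(i,t)$ with $x_{(a,i,t)}>0$ have the largest possible $t$. By \eqref{e:precconstraints} the support-maximum is non-decreasing along precedence chains, so $t$ is the latest time in the entire support of $j$; and since $I$ is the smallest interval of $\calI^*$ containing the support of $j$, that support meets both children of $I$, forcing $t$ into the right child, i.e.\ $t>\mathrm{mid}(I)$. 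After conditioning on $(a,i,t)$, task $a$ occupies the single slot $(i,t)$, so by \eqref{e:precconstraints} every task of every $j'\succ j$ has support strictly after time $t$, hence inside the right child of $I$, and the remaining conditionings of this iteration (including those inside $\mysplit$) only shrink supports further. Thus every $j'\in\calC\setminus\{j\}$ drops from level $\ell$ to some level $\ge\ell+1$ and, being in $\hat{J}$ (so $\ell\le k^2-1$), loses at least $1$ unit of weight, while $j$ leaves $J^*$ and loses its weight $w(j)\ge1$. Hence $\Phi$ decreases by at least $w(j)p_j+\sum_{j'\in\calC\setminus\{j\}}p_{j'}\ge\sum_{j\in\calC}p_j\ge\delta|I|\ge\delta T^*/2^{k^2-1}$, completing the argument.

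The main obstacle is precisely the claim used in the last step: that conditioning on the last task of the \emph{first} job of $\calC$ pushes the \emph{whole} remainder of $\calC$ down at least one level of $\calI^*$. This rests on combining the LP precedence constraints~\eqref{e:precconstraints} (to locate where successors' supports can sit once $a$ is pinned) with the structural fact that any job owned by $I$ has support straddling $\mathrm{mid}(I)$ (so that the pinned slot lies in $I$'s right half), and on verifying that the extra conditionings performed by $\mysplit$ cause no harm. One should also check that the $O(K)$ conditionings and $O(1)$ invocations of $\mysplit$ (each conditioning at most $2^{k^2}$ times) done in this call keep $x$ within the Sherali-Adams level guaranteed by Definition~\ref{def:partial-scheduling}, which is where the choice $K'=K\cdot 2^{k^2}$ enters.
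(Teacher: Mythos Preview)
Your proof is correct and follows essentially the same approach as the paper: the chain-length bound is argued identically, and your potential $\Phi=\sum_{j\in J^*}(k^2-\ell(j))_+\,p_j$ is exactly the paper's implicit count of ``task move-down events'' (at most $mT^*k^2$ total, at least $\delta|I|\ge \delta T^*/2^{k^2}$ consumed per iteration). The only difference is presentational---you make the potential explicit, whereas the paper phrases it as a counting argument---and your use of $|I|\ge T^*/2^{k^2-1}$ in place of the paper's slightly looser $T^*/2^{k^2}$ is harmless.
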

	
\medskip

\noindent \textbf{Step 2: Splitting Special Jobs.} For every job in the set $\jspl^*$, we perform the operation of {\em splitting}, which is given by the procedure $\mysplit$ defined in Algorithm~\ref{alg:splitting}. Recall that for every $j \in \jspl^*$, $\sigma(j)$ is defined. 
Our algorithm guarantees that all the tasks of $j$ are scheduled on $\sigma(j)$.
The idea behind splitting is to ensure that every task of a job $j \in \jspl^*$ is pushed to the lowest level of the recursion, where it will eventually get scheduled by conditioning. 
This guarantees that even if different tasks of a special job get assigned to different intervals, and hence may be part of different recursive calls, all of them will be scheduled on the machine $\sigma(j)$.
In order to do so, we need the sub-instances to satisfy Property 2 in Definition~\ref{def:partial-scheduling}. That is, each task $a \in A(j)$ should either be completely scheduled in the sub-interval $I$ in the solution $x$ or not scheduled at all.  This is exactly what the procedure $\mysplit$ does.

Here is how the splitting procedure for a job $j \in \jspl^*$ works.
Let $A(I^*, j, \vecx) := \{1,2, \ldots, g \}$. 
Starting from the left, let us denote the intervals in $\calI^*_{k^2}$ by $I_1, I_2, \ldots, I_{2^{k^2}}$. 
Now, consider an interval $I_u$ for $u \in [2^{k^2}]$. Define $t^* := \max \{t \in I_u : (\exists a \in A(I^*, j, \vecx)) \ x_{(a, \sigma(j), t)} > 0 \}$ as  the rightmost time slot $t$ in the interval $I_u$ for which some $a \in \{1,2, \ldots, g \}$ has positive support on machine $\sigma(j)$. 
Let $a^*$ be such a task. 
Now we condition on the event  $x_{a^*, \sigma(j), t^*} = 1$. 
After conditioning, the fractional solution has the following property: every task $a' > a^*$  is scheduled completely in $I_{u+1} \cup I_{u+2} \cup \cdots \cup I_{2^{k^2}}$ and every task $a' \leq a^*$ is scheduled completely in $I_1 \cup I_2 \cup \cdots \cup I_u$. This follows from the precedence constraints in our LP and the choice of $t^*$. So, by repeating the operation for $u$ from $1$ to $2^{k^2}$, we partition the tasks in $A(I^*, \jspl^*, \vecx)$ into $2^{k^2}$ ``chunks", each of which contains a subset of consecutive tasks from $A(I^*, j, \vecx)$. We apply the procedure first for all the jobs in the original set $\jspl^*$. When a new job $j$ is added to $\jspl^*$ because of Step 1 of our algorithm, we also apply the procedure to $j$. Notice that splitting may shrink the support of other jobs $j' \in J^* \cup \jspl^*$ because of the conditioning operations; the running of step 2 is actually interleaved with the running of step 1.

For each special job we perform at most $2^{k^2}$ conditioning operations during splitting, which we show is  acceptable for our targeted running time. A crucial observation is that the total number of special jobs at any level of recursion is small, hence the number of extra conditioning operations performed by our algorithm is small.\footnote{Note that our algorithm never conditions on negative events.} 
This guarantees that the number of levels left in the LP hierarchy solution $\vecx$ satisfies the requirements of Lemma \ref{lem:critical}.
 	
\medskip		
\noindent \textbf{Step 3: Partitioning Jobs into Top, Middle and Bottom Jobs.} Let $\vecx$ be the LP-hierarchy solution after the first two steps, and $J^*$ be the current set of jobs which have the entire support in $I^*$. Note that the set $J^*$ may have reduced in size after the first two steps, as we move some jobs from $J^*$ to $\jspl^*$ in Step 1. 
Next, we select an index $\ell^* \in \{k, \ldots, k^2\}$ and partition the jobs in $J^*$ into three sets.    
	
\begin{enumerate}
		\item A set of {\em top jobs} denoted by $\jtop^*$. These are the jobs owned by the levels $0, \ldots, \ell^*-k-1 $; formally, $ \jtop^* = \bigcup^{\ell^*-k-1}_{\ell = 0} J^*_{\ell}(\vecx) $.
		\item A set of {\em middle jobs} denoted by $\jmid^*$.  These are the jobs owned by the levels $\ell^*-k,\ldots, \ell^*-1 $; formally, $ \jmid^* = \bigcup^{\ell^*-1}_{\ell = \ell^*-k} J^*_{\ell}(\vecx) $.
		\item A set of {\em bottom jobs} denoted by $\jbot^*$.  These are the jobs owned by the levels $\ell^*,\ldots, \log T^* $; formally, $ \jbot^* = \bigcup^{\log T^*}_{\ell = \ell^*} J^*_{\ell}(\vecx) $.
\end{enumerate}
	
Our algorithm completely discards the middle jobs. 
\cite{LR16} showed using a counting argument that there exists an index $\ell^* \in \{k, \ldots, k^2\}$ satisfying the following relation:
\begin{eqnarray}
\label{e:jmiddle}
|A(\jmid^*)| \leq \frac{\epsilon}{4} \cdot \frac{T^*}{\log T} + \frac{\epsilon}{2m} \cdot \left(|A(\jmid^*)| + |A(\jtop^*)| \right) \label{discard1}
\end{eqnarray}
	
In this paper we assume that such an index exists and we refer the reader to \cite{LR16} for more details. \medskip

\noindent \textbf{Step 4: Recursing on Bottom Jobs.} In this step we find a partial schedule for the bottom jobs.
	Notice that $\jbot^*$ is the union of the $2^{\ell^*}$ disjoint sets $\set{J^*(I, \vecx)}_{I \in \calI^*_{\ell^*}}$.
	For each interval $I \in \calI^*_{\ell^*}$, the algorithm finds a partial schedule of jobs in $J^*(I, \vecx) \subseteq \jbot^*$ in the interval $I$ by recursively invoking the procedure with input parameters $(I, J^*(I, \vecx), \jspl^*, \sigma, \vecx)$. It is crucial to note that every invocation of the recursive algorithm receives an {\em independent} copy of the LP-hierarchy solution $\vecx$; a programmer might say that $x$ is passed by value. In other words,  the conditioning done in recursive call $\recursive(I, J^*(I, \vecx), \jspl^*, \vecx)$ has no effect on the recursive call 
	$\recursive\allowbreak(I', J^*(I', \vecx), \jspl^*, \vecx)$ if $I, I' \in \calI^*_{\ell^*}$ are different intervals. In fact, it can be imagined as being done in {parallel}.
	
	A recursive application of Lemma \ref{lem:critical} returns a feasible schedule of some tasks in $A(J^*(I, \vecx)) \cup A(I, \jspl^*, \vecx))$ satisfying the conditions stated in the lemma. Let $A_{I,\textrm{discarded}}$ be the set of tasks discarded by our algorithm in the interval $I \in \calI^*_{\ell^*}$. Let 
	$$\calS_I: A(J^*(I, \vecx)) \cup A(I, \jspl^*, \vecx) \setminus A_{I,\text{\textrm{discarded}}} \rightarrow [m] \times I$$
	denote the schedule returned by our recursive calls for each $I \in \calI^*_{\ell^*}$.
	Let $A_{\mathrm{bottom\hyphen discarded}} = \bigcup_{I \in \calI^*_{\ell^*}} A_{I,\text{\textrm{discarded}}}$. Then a schedule $\calS$ of non-discarded tasks in bottom jobs and the tasks of special jobs which have complete support in $I$  can be obtained by combining the schedules $\calS_I$. Let 
	$$
	\calS: A(\jbot^*) \cup A(I^*,\jspl,\vecx) \setminus A_{\mathrm{bottom\hyphen discarded}} \rightarrow [m] \times I^*
	$$
	denote this combined schedule. Formally, for a task $a$ that belongs to an interval $I$, $\calS(a) := \calS_I(a)$. From our construction, $\calS$ is a valid partial schedule.

\medskip		
\noindent \textbf{Step 5: Scheduling Top Jobs.}  At this stage, it remains to assign tasks in the set $A(\jtop^*)$ in a non-migratory fashion. Recall that after the first four steps of our algorithm, we have a partial schedule of tasks belonging to the bottom jobs and the special jobs.
For convenience, define  $\hat{A}:= A(\jbot^*) \cup A(I^*,\jspl,\vecx) \setminus A_{\mathrm{bottom\hyphen discarded}}$. We want to extend the schedule $\calS$ to include $A(\jtop^*)$. We achieve this in two stages.

\begin{itemize}
	\item  In the first stage, we build a {\em tentative assignment} of tasks in  $A(\jtop^*)$ in the slots left by $\calS$. 
	During this step, we pretend that each task in  $A(\jtop^*)$ is an independent entity with no precedence constraints to any other task in the set $A(\jtop^*)$. 
	Furthermore, we {\em do not} enforce the non-migratory constraints as well. 
	However, this step guarantees that the capacity constraints and the precedence constraints between tasks in  $A(\jtop^*)$  and $\hat{A}$ are satisfied. 
	The precedence constraints are satisfied using the following strategy.
	For every top job $j$ we define an interval $[r_j, d_j]$, such that if $j$ is scheduled within this interval, then it satisfies
	all the precedence constraints to jobs in the bottom intervals. 
	The interval $[r_j,d_j]$ is defined assuming the worst possible schedule of the bottom jobs.
	Hence, it is possible that $[r_j, d_j]$ may be shorter than the interval defined by the support of LP solution for $j$.
	This implies that that some of the slots used to schedule top jobs in the LP solution may not be available for our algorithm. 
	To argue that this does not create serious concerns, we make use of the fact that our algorithm completely discarded the jobs in $\jmid^*$ (middle jobs).
	Because of this, the length of a top interval is signficantly longer than a bottom interval.
	This helps us to argue that for most jobs the interval $[r_j, d_j]$ in which $j$ needs to be scheduled is not much shorter than the interval in which LP schedules job $j$.
	In fact, an easy argument from \cite{LR16} shows  that most of the jobs in $\jtop^*$ can still be scheduled in the intervals $[r_j, d_j]$, and the number of discarded tasks is small.
	
	\item In the second stage, we convert the tentative schedule into a feasible schedule. 
	Here we run into a bigger technical hurdle: How do we schedule the top jobs such that   
	a) Every job is scheduled in the interval $[r_j, d_j]$ on a single machine; b) The precedence constraints among the top jobs are satisfied. 
	To accomplish this, we design a new algorithm, which considers the top jobs in Earliest Deadline First (EDF) order of their $d_j$ values assigned in the first stage. 
	Then, the job is assigned to the machine on which it will have the Earliest Completion Time (ECT).
	During this process we give our algorithm the flexibility of scheduling a job partially and discard
	the tasks which it cannot schedule feasibly.
	We give a careful argument to show that EDF and ECT policies together will guarantee that only a small number of tasks are discarded during this process.
	Moreover, our partial schedule also satisfies the constraints (a) and (b).
\end{itemize} 

Our algorithms for the above two steps guarantee that the final partial schedule obtained is indeed a valid partial schedule respecting all the precedence constraints. 
Let $A_{\mathrm{top\hyphen discarded}}$ denote the total number of tasks discarded from the set $A(\jtop^*)$ in the above stages.  We show the following lemma.

\begin{restatable}{lemma}{topdiscardedtsks}
	\label{lem:topdiscardedtsks}
	The valid schedule $\calS: \hat{A} \rightarrow I^*$ can be extended to a valid schedule 
	$$
	\calS^*: \left (\hat{A} \cup A(\jtop^*) \right) \setminus A_{\mathrm{top\hyphen discarded}}  \rightarrow [m] \times I^*
	$$
	such that $|A_{\mathrm{top\hyphen discarded}}| \leq \frac{\epsilon}{4} \cdot \frac{T^*}{\log T} $.
\end{restatable}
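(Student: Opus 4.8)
\textbf{Proof plan for Lemma~\ref{lem:topdiscardedtsks}.} The plan is to implement the two-stage strategy sketched in Step~5: first build a migratory tentative assignment of $A(\jtop^*)$ into the free slots of $\calS$ that respects the top-bottom precedence constraints, and then convert it into a valid non-migratory schedule via an EDF+ECT procedure, bounding the discarded tasks at each stage. First I would fix, for each top job $j$, a release time $r_j$ and deadline $d_j$ so that scheduling $A(j)$ entirely inside $[r_j,d_j]$ automatically satisfies precedence to all bottom/special tasks. Concretely, if $I\in\calI^*$ is the bottom interval whose left endpoint is the earliest start among the bottom tasks $j$ must precede, set $d_j$ to (just before) the left endpoint of that interval; dually define $r_j$ from the bottom/special tasks that must precede $j$, using the fact that bottom intervals at level $\ell^*$ are only $T^*/2^{\ell^*}$ long while a top job's LP support spans an interval at level $\le \ell^*-k-1$, i.e.\ of length $\ge 2^k$ times longer. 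Thus $[r_j,d_j]$ loses at most $O(T^*/2^{\ell^*})$ from each side of $j$'s LP support interval, and since there are at most $2^{\ell^*-k-1}$ top-owning intervals, only a $2^{-k}$-ish fraction of the top LP-mass can be ``pushed out'' of the feasible windows. Applying the extended Hall's theorem of \cite{LR16} (as in \cite[Section~5.1]{LR16}) to the bipartite structure (top tasks vs.\ free slots of $\calS$ within the windows) gives a fractional, hence integral, migratory assignment of all but $\le \frac{\epsilon}{8}\cdot\frac{T^*}{\log T}$ of the top tasks.

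Next I would convert this migratory assignment into a non-migratory one. Process the top jobs in EDF order of the $d_j$ values. When job $j$ comes up, let $B_j\ge r_j$ be the earliest slot at which all precedence constraints on $j$ (to already-scheduled top tasks and to $\hat A$) are satisfied; assign $j$ to the machine $i$ minimizing its completion time $C_j$, scheduling its $p_j$ tasks greedily in the free slots of that machine in $[B_j,d_j]$. If no machine can fit all $p_j$ tasks by $d_j$, put $j$ on the machine with the most free slots in $[B_j,d_j]$ and discard the overflow tasks. The key structural claim, as noted in the overview, is that on the chosen machine $i$, within $[B_j,C_j]$ every other machine $i'$ has at most $p_j$ free slots at times $<C_j$ not already used for $j$ --- otherwise $j$ would have finished earlier on $i'$, contradicting ECT. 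Summing this over a maximal chain of top jobs, and using the Step~1 invariant $\Delta(\jtop^*)\le \Delta(\hat J)\le k^2\delta T^*$ (so chains of top jobs are short), bounds the total ``wasted'' free slots due to the non-migration and precedence restrictions; the wasted count is $O(m)$ times a chain's total size, i.e.\ $O(mk^2\delta T^*)$, which by the choice $\delta=\frac{\epsilon}{8k^2m2^{2k^2}\log T}$ is at most $\frac{\epsilon}{8}\cdot\frac{T^*}{\log T}$. Fragmentation losses (jobs whose deadline is hit) are charged the same way: a discarded job's tasks correspond to wasted slots before $d_j$ on all machines, again controlled by the chain-length bound.

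Adding the first-stage loss $\le \frac{\epsilon}{8}\cdot\frac{T^*}{\log T}$ and the second-stage losses (non-migration waste plus fragmentation) $\le \frac{\epsilon}{8}\cdot\frac{T^*}{\log T}$ gives $|A_{\mathrm{top\hyphen discarded}}|\le \frac{\epsilon}{4}\cdot\frac{T^*}{\log T}$, as required; validity (capacity, no-migration, precedence both among top jobs and between top and $\hat A$) holds by construction since each job is placed on one machine inside $[r_j,d_j]$ after all its predecessors. I would also check that $\calS$ itself does not obstruct this: $\hat A$ occupies slots only inside bottom intervals, so within a top job's window $[r_j,d_j]$ the only occupied slots are those of earlier-processed top jobs, keeping the free-slot accounting clean.

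\textbf{Main obstacle.} The delicate point is the second stage: making the ECT exchange argument rigorous when the free slots available to $j$ are an arbitrary, non-contiguous set (carved out both by $\calS$ and by previously placed top jobs), and simultaneously ensuring the EDF order does not force a later job past its deadline because an earlier job ``used up'' a good machine. I expect the crux to be proving the per-job claim --- that choosing the ECT machine wastes at most $p_j$ slots per other machine within $[B_j,C_j]$ --- in the presence of these pre-existing holes, and then aggregating these per-job bounds along chains without double-counting wasted slots shared by jobs in different chains. The short-chain invariant from Step~1 is exactly what makes the aggregation affordable, so the proof must carefully route every wasted/discarded slot to a chain and invoke $\Delta(\jtop^*)\le k^2\delta T^*$.
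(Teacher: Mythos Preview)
Your two-stage outline matches the paper's approach, and the first (tentative-schedule) stage is essentially right: the paper obtains $|A^1_{\mathrm{top\hyphen discarded}}|\le 4m\,2^{-k}T^*$ via the extended Hall argument of \cite{LR16}, which for the chosen $k$ is far below $\tfrac{\epsilon}{8}\cdot\tfrac{T^*}{\log T}$.

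The gap is in your second stage. You claim the EDF+ECT waste is ``$O(m)$ times a chain's total size, i.e.\ $O(mk^2\delta T^*)$'', but this does not follow from the ECT observation alone. The per-job fact (at most $p_j$ idle slots on any other machine inside $[B_j,C_j]$) is exactly the paper's Observation~\ref{o:conservation}, and summing it along a single chain yields the paper's Lemma~\ref{l:emptyslots}: at most $\Delta(J)$ idle slots on a machine \emph{within one bottom interval} $I_q$ while some job is still waiting. But different discarded jobs can be blocked by \emph{different} chains, and the per-interval idle bound does not by itself bound the number of discarded tasks. The paper closes this with a volume/double-counting step (proof of Theorem~\ref{thm:edf}): for each interval $I_h$, take the last discarded job, find the maximal suffix $I_g,\dots,I_h$ where every machine has at most $2\Delta(J)$ idle slots per interval, and then compare $|A(J'')|$ for the jobs forced into this range against $\capa(I')$, using the \emph{existence of the tentative schedule} to get $|A(J'')|\le\capa(I')$. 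This yields $\alpha\le 2pm\Delta(J)$ per interval and hence $|A^2_{\mathrm{top\hyphen discarded}}|\le 2p^2 m\,\Delta(J)$ overall, with $p=2^{\ell^*}\le 2^{k^2}$. The factor $p^2\le 2^{2k^2}$ is not an artifact: it is precisely why $\delta$ carries a $2^{2k^2}$ in its denominator, and your claimed $O(mk^2\delta T^*)$ bound (which would be smaller by that factor) is not achievable by the chain argument you sketch.

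One further slip: your last paragraph asserts that within a top job's window ``the only occupied slots are those of earlier-processed top jobs''. This is false. The bottom intervals $I_1,\dots,I_p$ partition $I^*$, so $\hat A$ occupies slots throughout $[r^*_j,d^*_j]$; the EDF+ECT procedure runs against the capacity function $\capa_i$ induced by $\calS$, not against an empty schedule. This is also why the Hall comparison to $\capa(I')$ in the volume step is needed rather than a raw slot count.
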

 
\subsection{Proof of the Main Lemma (Lemma~\ref{lem:critical})}
With all the components, we can finish the proof of the main lemma. 
\begin{proof}
We first check that the instance given to a recursive call of $\recursive$ satisfies the properties of Definition~\ref{def:partial-scheduling}.   By Lemma~\ref{lem:cutchains}, the total number of jobs we add to $\jspl^*$ in the loop on line~\ref{step:reduce-chain-start} is at most $K$, since we add one job to $\jspl^*$ in each iteration of the loop. Initially, we have $|\jspl^*| \leq \log \frac{T}{|I^*|} \cdot K$. Thus, after line~\ref{step:reduce-chain-end} in $\recursive$, we have that $$|\jspl^*| \leq \log \frac{T}{|I^*|} \cdot K + K \leq \log \frac{T}{|I^*|/2^{\ell^*}}\cdot K.$$ 

	Hence, the number of special jobs given as input is small. 
	Now we consider the number $r'$ of levels the LP-hierarchy solution $x$ has after line~\ref{step:reduce-chain-end}.  Initially, the number of levels $x$ had was $r = \log |I^*|(\log T)K'$.
	The total number of conditioning operations performed on line~\ref{step:initial-condition} and in the loop on line~\ref{step:reduce-chain-start} is at most
	\ifsoda
		\begin{align*}
			\log \frac{T}{|I^*|}\cdot K \cdot  2^{k^2} + K \cdot 2^{k^2} &= \left(\log\frac{T}{|I^*|} + 1\right) K' \\ &\leq (\log T)K'.
		\end{align*}
	\else
		\begin{align*}
			\log \frac{T}{|I^*|}\cdot K \cdot  2^{k^2} + K \cdot 2^{k^2} = \left(\log\frac{T}{|I^*|} + 1\right) K' \leq (\log T)K'.
		\end{align*}
	\fi
	Thus, the number $r'$ of levels $x$ has after line~\ref{step:reduce-chain-end} is at least $\log |I^*|(\log T)K' - (\log T)K' \geq \log\frac{|I^*|}{2^{\ell^*}}\cdot (\log T)K'$. 
	
	To sum up, since the interval $I$ in each sub-instance has length $\frac{|I^*|}{2^{\ell^*}}$, we have that $|\jspl^*|$ is small. Further, the number of levels $x$ has is sufficiently large for each sub-instance. All the other properties in Definition~\ref{def:partial-scheduling} follow directly from the description of $\recursive$. \medskip

	Now we prove that the algorithm returns a schedule satisfying the properties stated in the lemma.
	The schedule $\calS^*$ guarantees that  every job $j \in J^*$ is assigned to a single machine. 
On the other hand, our splitting operation guarantees that for every job $j \in \jspl^*$, the set $A(I,j,\vecx)$ of tasks that have complete support in $I$ is scheduled on machine $\sigma(j)$. 
This follows from the fact that all the  tasks of special jobs are scheduled by conditioning at the lowest level of recursion.

Thus to prove the lemma, it remains to bound the number of tasks discarded by our algorithm.
\ifsoda
This calculation is same as in~\cite{LR16}; we give it in the full version of the paper.
\else
This calculation is same as in~\cite{LR16} and we do it for the sake of completeness. In the following, the values of $J^*$ and $\jspl^*$ are considered at the time after line~\ref{step:reduce-chain-end}. 

\smallskip
Our recursive algorithm discards tasks in the following three steps.

\begin{itemize}
	\item The entire set of middle jobs is discarded. From the discussion in Step 3 (Equation \ref{e:jmiddle}) of our algorithm we know that  
\begin{eqnarray}
|A(\jmid^*)| \leq \frac{\epsilon}{4} \cdot \frac{T^*}{\log T} + \frac{\epsilon}{2m} \cdot \left(|A(\jmid^*)| + |A(\jtop^*)| \right). \label{discard1}
\end{eqnarray}

	\item For each interval $I \in \calI^*_{\ell^*}$, our algorithm recursively schedules tasks in $A(J^*(I, \vecx)) \cup A(I, \jspl^*, \vecx)$ in the interval $I$.  By recursive application of Lemma~\ref{lem:critical} we conclude that
	
$$
A_{I,\text{\textrm{discarded}}} \leq \frac{\epsilon}{2} \cdot \frac{\log (\frac{T^*}{2^{\ell^*}})}{\log T} \cdot \frac{T^*}{2^{\ell^*}} + \frac{\epsilon}{2m} \cdot |A(J^*(I, \vecx)) \cup A(I, \jspl^*, \vecx))|, 
$$
where we used the fact that $|I| = \frac{T^*}{2^{\ell^*}}$. Therefore,
\begin{eqnarray}
|A_{\mathrm{bottom\hyphen discarded}}| &=& \sum_{I \in \calI^*_{\ell^*}} |A_{I,\text{\textrm{discarded}}}| \nonumber \\
&\leq&  \sum_{I \in \calI^*_{\ell^*}} \bigg( \frac{\epsilon}{2} \cdot \frac{\log (\frac{T^*}{2^{\ell^*}})}{\log T} \cdot \frac{T^*}{2^{\ell^*}} + \frac{\epsilon}{2m} \cdot |A(J^*(I, \vecx)) \cup A(I, \jspl^*, \vecx))| \bigg)  \nonumber \\
&=& \frac{\epsilon}{2} \cdot \frac{\log T^* - \ell^*} {\log T} \cdot T^* + \frac{\epsilon}{2m} \cdot \bigg(|A(\jbot^*)| + |A(I^*,\jspl^*,x)| \bigg). \label{discard2}
\end{eqnarray}
We used the fact that for every $a \in A(I^*, \jspl^*, x)$, $a$ is completely scheduled inside some $I \in \calI^*_{\ell^*}$.

\item We discard some more tasks from the set $A(\jtop^*)$ while scheduling top jobs. By  Lemma~\ref{lem:topdiscardedtsks},
\begin{eqnarray}
	|A_{\mathrm{top\hyphen discarded}}| &\leq& \frac{\epsilon}{4} \cdot \frac{T^*}{\log T}. \label{discard3}
\end{eqnarray}
\end{itemize}

By combining \eqref{discard1}, \eqref{discard2} and \eqref{discard3}, we get 
\begin{eqnarray*}
|A_{\text{\textrm{discarded}}}| &= & |A(\jmid^*)|+ |A_{\mathrm{bottom\hyphen discarded}}| +  |A_{\mathrm{top\hyphen discarded}}| \nonumber \\
&\leq& \frac{\eps}4\cdot\frac{T^*}{\log T} + \frac{\eps}{2}\cdot \frac{\log T^* - \ell^*}{\log T} \cdot T^* + \frac{\eps}{4}\cdot \frac{T^*}{\log T}\\
& & +\quad \frac{\eps}{2m}\left(|A(J^*_\mid)| + |A(J^*_\top)| + |A(\jbot^*)| + |A(I^*, \jspl^*, x)|\right)\\
&\leq&  \frac{\epsilon}{2} \cdot \frac{\log T^*}{\log T} \cdot T^* + \frac{\epsilon}{2m} \cdot | A(J^*) \cup A(I^*,\jspl^*,\vecx) |,
\end{eqnarray*}
where we used $\ell^* \geq 1$ and the fact that the sets $\jtop^*, \jmid^*, \jbot^*$ define a partition of $J^*$ and $J^* \cap \jspl^* = \emptyset$.  We want to bound the number of discarded jobs using the original $J^*$ and $\jspl^*$, i.e, the sets specified in the input. However, it is fairly straightforward to see that the set $A(J^*) \cup A(I^*,\jspl^*,\vecx)$ does not change when moving jobs from $J^*$ to $\jspl^*$. The lemma follows by the definition of $T^*$ and $I^*$.
\fi
\end{proof}


\medskip

\textbf{Organization of the rest of Section \ref{sec:makespan}:} The rest of this section is devoted to proving the lemmas mentioned in Steps 1 and 5. In Subsection \ref{label:chain} we prove Lemma \ref{lem:cutchains}. In Subsection \ref{label:topjobs} we prove Lemma \ref{lem:topdiscardedtsks}.

\subsection{Reducing the Chain Length and Bounding Number of Conditionings}
\label{label:chain}
In this section, we give more details about the first two steps of our algorithm. 
We first show that the total number of conditionings required to reduce chain length among top jobs is small, which in turn implies small number of conditionings during splitting operations. 
We begin by proving Lemma \ref{lem:cutchains} from Step 1, which we restate below for convenience.

\cutchains*

\begin{proof}
	After line~\ref{step:reduce-chain-end} in $\recursive$, we have that for every $\ell \in \set{0, 1, \ldots, k^2-1}$ and every interval $I \in \calI^*_\ell$, the maximum chain length of jobs owned by $I$ is at most $\delta|I| = \delta \frac{T^*}{2^\ell}$.  Thus, the maximum chain length of jobs owned by level $\ell$ is at most $\delta T^*$, since there  are exactly $2^\ell$  intervals.
	   There are at most $k^2$  levels, which implies that the maximum chain length of $\hat J$, i.e, jobs owned by levels $0, 1, \ldots, k^2-1$, is at most $ k^2 \delta T^*$.

	Thus it remains to bound the number of iterations of the loop on line~\ref{step:reduce-chain-start}.	
	We focus on some interval $I \in \calI^*_{\ell}$ for some $\ell \in \set{0, 1, \ldots, k^2-1}$. Consider the chain $\calC$ of jobs owned by the interval $I$ of length (total size) at least $\delta |I|$ and the job $j \in \mathcal{C}$ that we condition on in an iteration of the loop.  Let $I_1$ and $I_2$ be the left and right children of $I$ in the laminar tree; so $I_1, I_2 \in \calI^*_{\ell+1}$.
	Notice that $j$ will be removed from $J^*$ and thus $\hat J$ during the iteration.  
	On the other hand, all the other jobs in $\mathcal{C}$ will be owned by some sub-interval of $I_2$ after the conditioning.
	This is true for the following reason. 
	Since $j$ was owned by the interval $I$, it means that for the last task $a$ of $j$ and some time $t \in I_2$ we have $x_{(a,i,t)} > 0$.
	Furthermore, $j$ was the first job in the chain $\mathcal{C}$. 
	Therefore the entire support of the jobs in $\mathcal{C} \setminus \{j\}$ will lie inside $I_2$ after the conditioning on $(a, i, t)$. Thus, every $j' \in \calC \setminus \{j\}$ is now owned by some level $\ell' \geq \ell + 1$.

	Appealing to the fact that support can only shrink upon conditioning, a single job (thus a task) can move down at most $k^2$ times before it is removed from $\hat J$.
	Due to the capacity constraints, there are at most $m T^*$ tasks owned by all intervals $I \subseteq I^*$ in total,
	so there will be at most $m T^* k^2$ events that a task moves down.
	On the other hand, in one iteration of the loop on line~\ref{step:reduce-chain-start}, each task in $\calC$, of which there are at least $\delta |I|$ many, is either removed from the set $A(\hat{J})$ or moved down by at least one level.  
	Note that  $\delta |I| \geq \frac{\delta T^*}{2^{k^2}}$. Together we get that the number of iterations of the loop is at most
	$$
		\frac{mT^* k^2}{ \delta \frac{T^*}{2^{k^2}}} = \frac{mk^2 \cdot 2^{k^2}}{\delta} = K.\hfill \ifsoda\else\qedhere\fi
	$$ 
\end{proof}

\subsection{Scheduling the Top Jobs}
\label{label:topjobs}
In this section we give an algorithm to schedule the top jobs $\jtop^*$.  
Recall that after the first four steps of our algorithm, we have a partial schedule of tasks belonging to the bottom jobs and the special jobs. Formally,
\newcommand{\thefunctionsignature}{\calS: (A(\jbot^*) \cup A(I^*,\jspl,\vecx)) \setminus A_{\mathrm{bottom\hyphen discarded}} \rightarrow [m] \times I^*}
\ifsoda
$\thefunctionsignature$.
\else
\[ \thefunctionsignature \,. \]
\fi

 Recall that for convenience we have defined  $\hat{A}:= (A(\jbot^*) \cup A(I^*,\jspl,\vecx)) \setminus A_{\mathrm{bottom\hyphen discarded}}$. 
 We want to extend $\calS$ to a schedule $\calS^*$ that includes most tasks of the set $A(\jtop^*)$. 
 
 \smallskip
 As described in Step 5 of the algorithm, we accomplish this in two stages.
 In the first stage, we build a {\em tentative assignment} of tasks in  $A(\jtop^*)$ in the slots left by $\calS$. 
 At this stage, we {\em do not} satisfy the non-migratory constraints but we guarantee that the capacity constraints and the precedence constraints between tasks in  $A(\jtop^*)$  and $\hat{A}$ are satisfied. 
 During this step we discard some tasks from the set $A(\jtop^*)$, which we denote by $A^1_{\mathrm{top\hyphen discarded}}$.
 In the second stage, we convert the tentative schedule into a valid partial schedule respecting all the precedence constraints.  
 This step also involves discarding some tasks from the set $A(\jtop^*)$, which we denote by $A^2_{\mathrm{top\hyphen discarded}}$. 

 Define $A_{\mathrm{top\hyphen discarded}} :=  A^1_{\mathrm{top\hyphen discarded}} + A^2_{\mathrm{top\hyphen discarded}}$. 
 The goal of this section is to prove Lemma \ref{lem:topdiscardedtsks}, which we re-state for convenience.

\topdiscardedtsks*

\subsubsection{Tentative Schedule}  
\label{tentativeschedule}
As we allow migration of jobs in this step, it gives us the flexibility to treat each task in $A(\jtop^*)$ almost independently. 
This lets us use the algorithm in \cite{LR16} as a black box to build the tentative schedule. 
We briefly sketch proofs of the claims made in this subsection, and refer the reader to Section 5 in \cite{LR16} for full details.  

Recall that $T^*$ denotes the length of interval $I^*$. 
For convenience we re-index the time slots in $I^*$ so that $I^* := \{1,2,\ldots, T^*\}$. 
Let $\calI^*$ denote the binary laminar family of intervals, and let $I_1, I_2, \ldots, I_p$ be the set of bottom intervals in $\calI^*_{\ell^*}$. 
From our construction $p = 2^{\ell^*}$. 
For each interval $I_{v}, v \in \{1,2,\ldots, p\}$, let $\text{begin}(I_v)$ and $\text{end}(I_v)$ denote the first and  the last time slots that belong to $I_v$.

Our recursive algorithm schedules most of the tasks in $A(\jbot^*) \cup A(I, \jspl^*, x)$ in the intervals $I_1, I_2, \ldots, I_p$.  
Recall that our algorithm recurses for each interval $I \in \calI^*_{\ell^*}$ on the subset of jobs $J^*(I, \vecx)$ and $\jspl^*$. 
By definition, every job $j \in J^*(I, \vecx)$ has its entire support in the interval $I$. 
Furthermore, the tasks of jobs in the set $\jspl^*$ that are scheduled in $I$ are precisely those tasks which have entire support in $I$. 
For each $I \in \calI^*_{\ell^*}$, our algorithm schedules a subset of tasks in $A(J^*(I, \vecx)) \cup A(I, \jspl^*, \vecx)$. 
This implies that $|A(\jtop^*)| \leq m|I^*| - |\hat{A}|$. 
In other words, there are enough slots to schedule tasks in $A(\jtop^*)$ in the slots left open by $\calS$.

In fact, we can say something stronger. 
Let $[r_j, d_j]$ denote the {\em smallest} interval in which the entire support of any job $j \in \jtop^*$ lies in the solution $\vecx$.
That is, $\sum_{i} \sum_{t \in [r_j, d_j]} x_{(i,a,t)} = 1$ for  every task $a \in A(j)$. 
Let $u_j$ be the largest index such that $\text{begin}(I_{u_j}) \leq r_j$.
Similarly let $v_j$ be the smallest index such that $\text{end}(I_{v_j}) \geq  d_j$.
In other words, the interval $[\text{begin}(I_{u_j}), \text{end}(I_{v_j})]$ is the smallest interval that completely contains $[r_j, d_j]$ and whose beginning and ending coincide with a beginning and ending of intervals in $I_1, I_2, \ldots, I_p$.
Let $[r'_j, d'_j] := [\text{begin}(I_{u_j}), \text{end}(I_{v_j})]$.
The current LP solution $\vecx$ also guarantees that there is enough space to {\em fractionally} schedule tasks in $A(\jtop^*)$ so that for every job $j \in \jtop^*$, all the tasks in $A(j)$ are scheduled in the interval $[r'_j, d'_j]$. 
This can be proved by setting up a fractional bipartite matching between the jobs and the empty slots in the intervals $I_1, I_2, \ldots, I_p$.  
The existence of a fractional matching in a bipartite graph implies that there exists an integral matching. 
Hence, there is an assignment of tasks of $A(\jtop^*)$ such that every job $j$ is  scheduled in the interval $[r'_j, d'_j]$.

Unfortunately, the schedule obtained using the above argument can violate the precedence constraints between tasks in $A(\jtop^*)$ and $\hat{A}$. 
The main reason is that the schedule of tasks in $\hat{A}$ had been constructed without taking $A(\jtop^*)$ into account. 
We will define a {\em release time} $r^*_j$ and a {\em deadline} $d^*_j$ for all jobs $j \in \jtop^*$ in such a way that if all the tasks of $j$ are scheduled  in the interval  $[r^*_j, d^*_j]$, then  the precedence  constraints between $A(\jtop^*)$ and $\hat{A}$ will be satisfied.  
A natural way to define the interval $[r^*_j, d^*_j]$ is as follows:
\begin{equation}
\label{e:release-dead}
r^*_j := \text{begin}(I_{u_{j} + 1}) \quad \text{and} \quad d^*_j := \text{end}(I_{v_{j} - 1}).
\end{equation}

We claim that if our algorithm schedules all tasks of the job $j$ in the interval $[r^*_j, d^*_j]$, then the precedence  constraints between tasks in $A(\jtop^*)$ and $\hat{A}$ will be satisfied.
For this, note that every task in $\hat{A}$ is supported on a single interval in $\calI^*_{\ell^*}$ in $x$; for tasks of bottom jobs this follows from the definition, and for tasks of special jobs it is a consequence of our splitting procedure (indeed every task of a special job is supported on a single interval in $\calI^*_{k^2}$, and $\ell^* \le k^2$).
Consider a task $a \in \hat{A}$ and a job $j \in \jtop^*$ with $a \prec j$.
Then $a$ must be supported on an interval $I_u$ with $u \le u_j$.
(Suppose otherwise that $u > u_j$, and let $a'$ be the first task of $j$; then $a'$ is scheduled with positive $x$-fraction on the interval $I_{u_j}$ by the definition of $r_j$, yet the entire support of $a$ in $x$ lies to the right of $I_{u_j}$, contrary to $a \prec a'$ and the constraints (\ref{e:precconstraints}) of the LP.)
Thus, as long as $j$ is scheduled after $r^*_j = \text{begin}(I_{u_{j} + 1})$, the constraint $a \prec j$ will be satisfied.
The argument for precedence constraints of the form $j \prec a$ is symmetric.

Now observe that $[r^*_j, d^*_j] \subsetneq [r'_j, d'_j]$: the interval in which we are allowed to schedule the job $j$ is shorter than the interval in which the LP solution schedules the tasks.
Therefore, it is no longer clear that we will have enough slots to schedule all tasks in $A(\jtop^*)$ and our algorithm may have to discard some tasks. 
An argument in~\cite{LR16}  shows that the number of tasks discarded is not large.

\begin{lemma}[\cite{LR16}]
\label{lem:tentative}
	A feasible partial schedule $\calS: \hat{A} \rightarrow [m] \times I^*$ can be extended to a new schedule $\calS':  \left (\hat{A} \cup A(\jtop^*) \right) \setminus A^1_{\mathrm{top\hyphen discarded}} \rightarrow [m] \times  I^*$ satisfying the following properties: 
	\begin{enumerate}
		\item Consider $j \in \jtop^*$ and let $a \in A(j)$. Then in the schedule $\calS'$,  either $a$ is assigned in the interval $[r^*_j, d^*_j]$ or $ a \in A^1_{\mathrm{top\hyphen discarded}}$.
			\item The total number of discarded tasks $| A^1_{\mathrm{top\hyphen discarded}}| \leq 4m2^{-k}T^*$. 
		\item The precedence constraints between tasks in $A(\jtop^*) \setminus A^1_{\mathrm{top\hyphen discarded}}$ and $\hat{A}$ are respected.
		\item The capacity constraints are satisfied.
	\end{enumerate}
	\end{lemma}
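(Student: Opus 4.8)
The plan is to notice that, once the windows $[r^*_j,d^*_j]$ from \eqref{e:release-dead} are fixed, properties (1), (3), (4) become automatic, so the only real content is the cardinality bound (2). Since migration is allowed and we impose no precedence constraints among the tasks of $A(\jtop^*)$ in this stage, extending $\calS$ is purely a transportation (fractional $b$-matching) problem: connect each task $a\in A(j)$, $j\in\jtop^*$, to every slot $(i,t)$ that is empty under $\calS$ and has $t\in[r^*_j,d^*_j]$, each slot having capacity $1$; a task with empty window has no edge and is forced out. Following \cite[Section~5.1]{LR16}, I would invoke the defect (deficiency) form of Hall's theorem for this interval system — all the endpoints $r^*_j,d^*_j$ lie on bottom-interval boundaries, so it suffices to test sub-windows $[a,b]$ that are unions of consecutive intervals of $\calI^*_{\ell^*}$ — giving that the minimum possible $|A^1_{\mathrm{top\hyphen discarded}}|$ equals $\max_{[a,b]}\big(D([a,b])-F([a,b])\big)^+$, where $D([a,b])=\sum_{j:\,[r^*_j,d^*_j]\subseteq[a,b]}p_j$ is the demand whose window is forced into $[a,b]$ and $F([a,b])$ is the number of free slots of $\calS$ inside $[a,b]$.

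The bulk of the work is bounding this deficiency, and I would route everything through the LP's ``top load'' $L([a,b]):=\sum_{a\in A(\jtop^*)}\sum_{i,\,t\in[a,b]}x_{(a,i,t)}$. On the capacity side I would show $F([a,b])\ge L([a,b])$: every task of $\hat A$ that $\calS$ places inside $[a,b]$ lies in some bottom interval $I_v\subseteq[a,b]$ (each recursive call has makespan equal to its interval, by Lemma~\ref{lem:critical}) and is, by Definition~\ref{def:partial-scheduling}, entirely $x$-supported inside that $I_v$; summing these unit $x$-masses over disjoint bottom intervals and applying the capacity constraints \eqref{e:capconstraints} yields $|\hat A\cap[a,b]|\le m|[a,b]|-L([a,b])$, hence $F([a,b])=m|[a,b]|-|\hat A\cap[a,b]|\ge L([a,b])$. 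On the demand side I would show $D([a,b])\le L([a,b])+4m\,2^{-k}T^*$. If $j$ has a nonempty window with $[r^*_j,d^*_j]\subseteq[a,b]$, then since $[r'_j,d'_j]$ is exactly $[r^*_j,d^*_j]$ padded by one bottom interval on each side and $[r_j,d_j]\subseteq[r'_j,d'_j]$, the $x$-mass of $A(j)$ lying outside $[a,b]$ sits entirely in the two fixed bottom intervals bracketing $[a,b]$; summing this escaped mass over all such $j$ and again using \eqref{e:capconstraints} bounds it by $2m\,T^*/2^{\ell^*}$, while the part of these jobs' mass inside $[a,b]$ is at most $L([a,b])$. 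The jobs with empty window ($v_j\le u_j+1$) must be accounted separately: each such top job, being owned by a top interval $I$ at level $\le\ell^*-k-1$ but not by a child of $I$, has its whole support confined to the two adjacent bottom intervals straddling the midpoint of $I$; summing the capacity bound $2m\,T^*/2^{\ell^*}$ over the fewer than $2^{\ell^*-k}$ top intervals shows their total size is below $2m\,2^{-k}T^*$. Combining, and using $\ell^*\ge k$, the deficiency is at most $2m\,T^*/2^{\ell^*}+2m\,2^{-k}T^*\le 4m\,2^{-k}T^*$, which is the claimed bound.

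Finally I would close out properties (1), (3), (4): any feasible assignment of the transportation instance puts every surviving task $a\in A(j)$ into a slot of $[r^*_j,d^*_j]$ (property (1)) and respects slot capacities (property (4)); and by the discussion preceding the lemma — where $r^*_j$ and $d^*_j$ were chosen precisely so that a task of $\hat A$ related to $j$ is $x$-supported on an interval $I_u$ with $u\le u_j$ (resp.\ $u\ge v_j$), using \eqref{e:precconstraints} — scheduling each top job inside its window is consistent with all precedence relations between $A(\jtop^*)$ and $\hat A$ (property (3)). The step I expect to be the main obstacle is the demand bound $D\le L+4m\,2^{-k}T^*$: one must be careful to charge the escaped $x$-\emph{mass} (not the job sizes $p_j$) to the two boundary bottom intervals, to treat the empty-window jobs correctly, and to exploit the $2^k$-fold length gap between top and bottom intervals — which is exactly what discarding the middle jobs buys — to make this term negligible; the defect-Hall reduction, and the reduction to bottom-interval-aligned test windows, are the other points that need care.
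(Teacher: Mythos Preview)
Your proposal is correct and follows essentially the same approach the paper takes (by deferring to \cite{LR16}): the paper's own sketch also reduces to a bipartite matching between top tasks and free slots, invokes the Hall/defect argument from \cite{LR16}, and bounds the loss using exactly the $2^k$-factor length gap between top and bottom intervals that discarding the middle levels provides. Your decomposition of the deficiency into $F\ge L$ (via LP capacity and the fact that every scheduled task of $\hat A$ has its full $x$-support in its bottom interval) and $D\le L+4m2^{-k}T^*$ (escaped mass into the two bracketing bottom intervals, plus the empty-window jobs charged to midpoints of top intervals) is precisely the content of \cite[Section~5.1]{LR16} that the paper is citing.
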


The proof of this lemma in \cite{LR16} crucially relies on the following observation. Notice that although the interval in which we can schedule the job $j$ shrinks, the loss is small. More precisely, the LP solution $\vecx$ would have scheduled some portion of the job in the intervals $I_{u_j}$ and $I_{v_j}$, and now we will not be able to use them.
However,  the intervals in $\calI^*_{\ell^*}$ are much shorter than the intervals that own the top jobs.
This is because we discarded  a set of $k$ consecutive middle intervals.
Hence, each interval $I \in \calI^*_{\ell^*}$ is at least $2^{-k}$ times shorter than an interval $I' \in \calI_0 \cup \cdots \cup \calI_{\ell^*-k-1}$.
This fact can be used to show that we only need to throw away few tasks to account for the loss of flexibility in scheduling top jobs.

\subsubsection{From Tentative Schedule to Final Schedule}
\label{deadlinproblem}

In this final step of our algorithm, we convert the schedule $\calS':  \left (\hat{A} \cup A(\jtop^*) \right) \setminus A^1_{\mathrm{top\hyphen discarded}} \rightarrow [m] \times I^*$ into a non-migratory schedule that satisfies all the precedence constraints among top jobs.  
The main goal of this subsection is to prove Lemma~\ref{lem:topdiscardedtsks}.  
Our final schedule $\calS^*$ satisfying the requirements of the lemma  guarantees the following two invariants:
\begin{itemize}
\item For every job $j \in \jtop^*$, all the non-discarded tasks are scheduled within $[r^*_j, d^*_j]$.
\item The assignment of tasks in $\hat{A}$ remains the same as that in the schedule $\calS'$.
\end{itemize}

The above two invariants together will imply that all the precedence constraints are satisfied among non-discarded tasks in  $\calS^*$.
We build schedule $\calS^*$  by first designing an algorithm for a new stand-alone scheduling problem, then using it as a black box for scheduling top jobs.  
We first define the new scheduling problem.  

\medskip
\textbf{A Deadline Scheduling Problem with Precedence Constraints:}  Consider a set of jobs $J$ with processing lengths $p_j$, release times $r_j$, deadlines $d_j$ and precedence constraints. 
As before, we assume that each job $j \in J$ is made up of $p_j$ unit-length tasks. 
The precedence constraints among jobs satisfy the property that  if $j \prec j'$, then $r_j \leq r_{j'}$ and  $d_j \leq d_{j'}$. 
Recall that we use $\Delta(J)$ to denote the maximum chain length in $J$. 
The time horizon $[T]$ is partitioned into $p$ equal sized intervals $I_1, I_2, \ldots, I_p$. 
The release times and deadlines of jobs fall at beginnings and ends of the intervals. 

For each machine $i$, we are given a capacity function $\capa_i:[T] \rightarrow \{0, 1\}$. 
If $\capa_i(t) = 1$, then the time slot $t$ on machine $i$ is available to schedule a task in $A(J)$.  
For an interval $I =  \{t', \ldots, t''\}$, we overload the notation to refer $\capa_i(I)$ as the number of available slots in the interval $I$; that is, $\capa_i(I) = \sum^{t''}_{t = t'}\capa_i(t)$.  
Similarly, we use $\capa:[T] \rightarrow [m]$ to denote the number of available slots at time $t$ across all the $m$ machines; $\capa(t) = \sum^{m}_{i = 1}\capa_i(t)$. 

Suppose there is a schedule of tasks $\calS': A(J) \rightarrow [m] \times [T]$ that assigns each task in $A(J)$ to a machine-timeslot pair such that no two tasks are assigned to the same machine and the same time slot; that is, the capacity constraints on machines are satisfied. 
Moreover, $\calS'$ ensures that for each job $j \in J$,  all the tasks in $A(j)$ are scheduled within $[r_j, d_j]$. We remark that $\calS'$ may not respect the precedence constraints in $J$ and the no-migration constraints. 
Our goal is to schedule each $j \in J$ on a single $i$ so that precedence constraints among the jobs are satisfied and enforce the no-migration constraints.

\begin{algorithm}[h!]
		\caption{\textsf{EDF+ECT}}
		\label{alg:edf+ect}
		\textbf{Input:} A set of jobs $J$ with release times, deadlines and precedence constraints; capacity function $\capa_i: [T] \rightarrow \{0, 1 \}$ for each machine $i$.\\
		\textbf{Output:} Schedule $\calS: A(J) \rightarrow [m] \times [T]$  such that for $a \sim a'$, either $a$ or $a'$ belongs to $\taskdis$ or $\calS_\mac(a) = \calS_\mac(a')$.
		\vspace*{-5pt}
		
		\noindent\rule{\linewidth}{0.5pt}
		
	\begin{algorithmic}[1]
		\State Sort the jobs in $J$ in the increasing order of their deadlines. Reindex the jobs so that $J:= \{1,2, \ldots, n\}$ and $d_1 \leq d_2 \leq  \ldots \leq d_n$. 
		\State Initialize $\taskdis = \emptyset$.
	
	\For{ $j = 1$ to $n$}
		\State Find the earliest time slot $t \in [T]$ such that following conditions hold:
			i) $\capa_i(t) = 1$ for some machine $i \in [m]$ and  $r_j \leq t \leq d_j$;
			ii) $C_{j'} < t$ for all $j' \prec j$.   \label{invariant3}
			
		\State  If no such $t$ exists, then set $B_j := \drop$ and add all the tasks $A(j)$ to the set $\taskdis$.
		\State  If there is a $t$ satisfying the conditions above, set $B_j = t$ and do the following.
		\State  Find the set of machines $M' = \{ i \in [m] : \capa_i(B_j, d_j) \geq p_j \}$.
	
		\If {$|M'| = 0$}
			\State $i^* = \displaystyle \text{argmax}_{i} \{ \capa_i(B_j, d_j) \}.$ \label{invariant1}
			\State Schedule $\capa_{i^*}(B_j, d_j)$ tasks of the job $j$ in the interval $[B_j, d_j]$ on the machine $i^*$.
			\State Set $C_j = d_j$. Add the remaining $p_j - \capa_{i^*}(B_j, d_j)$ tasks to the set $\taskdis$. 
			\State Update the capacity function $\capa_{i^*}$ for the machine $i^*$ and the schedule $\calS$.
		\EndIf
			
		\If {$|M'| \geq 1$}
			\State Find the earliest time slot $t^*$ such that there exists a machine $i^* \in M'$  and $\capa_{i}(B_j, t^*) = p_j$. \label{invariant2}
			\State Set $C_j = t^*$. Schedule all the tasks $A(j)$ of $j$ in the interval $[B_j, C_j]$ on the machine $i^*$.
			\State Update the capacity function $\capa_{i^*}$ for the machine $i^*$ and the schedule $\calS$.
		\EndIf
	\EndFor
	\State \textbf{return} $\calS$.
	\end{algorithmic}
\end{algorithm}

We prove the following theorem in this section.

\begin{theorem}
	\label{thm:edf}
	There exists an algorithm that in polynomial time  converts the schedule $\calS'$ into a valid schedule that satisfies the following properties:
	\begin{enumerate}
	\item
	It partially schedules every job on exactly one machine (no-migration). 
	For a job that is partially scheduled, we discard the remaining tasks;  
	for the sake of the precedence constraints, we assume that every partially scheduled job or a fully discarded job is completely processed. 
	\item The precedence constraints among the jobs are satisfied. 
	\item The total number of discarded tasks is at most $2p^2m\Delta(J)$.
	\end{enumerate}
\end{theorem}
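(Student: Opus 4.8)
The plan is to run Algorithm~\ref{alg:edf+ect} and analyze it in three stages: establish the structural invariants (properties~1 and~2), reduce the discard bound to a statement about idle capacity slots, and then prove that bound by a charging argument that exploits the EDF and ECT choices together with the bounded chain length.

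\textbf{Structural invariants.} First I would fix the convention that a fully discarded job $j$ (with $B_j = \drop$) is deemed to complete at $C_j := d_j$. No-migration (property~1) is then immediate, since line~\ref{invariant1}/\ref{invariant2} place every task of $j$ on a single machine $i^*$ (or discard all of them). For property~2, I would first check that $B_j \le C_j \le d_j$ always holds: in the branch $|M'| \ge 1$ we have $C_j = t^* \le d_j$ because $i^* \in M'$, and in the branch $|M'| = 0$ we set $C_j = d_j$ directly. Since all scheduled tasks of $j$ lie in $[B_j, C_j]$ while line~\ref{invariant3} forces $B_j > C_{j'}$ for every $j' \prec j$, and since $C_{j'} \le d_{j'} \le d_j$ makes the discarded-job convention consistent along chains, every task of $j$ is scheduled strictly after every task of each predecessor; hence all precedence constraints are respected.

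\textbf{Reduction to idle slots.} Call a pair $(i,t)$ with $\capa_i(t)=1$ \emph{available}, and say it is \emph{idle} if the output schedule $\calS$ puts no task there. Since $\calS'$ schedules all of $A(J)$ and, obeying the capacity function, uses only available slots, we have $|A(J)| \le \sum_{i,t}\capa_i(t)$; and $\calS$ schedules exactly $\sum_{i,t}\capa_i(t) - (\text{number of idle slots})$ tasks, so the number of discarded tasks is at most the number of idle available slots. So it suffices to bound the latter by $2p^2 m \Delta(J)$. The key structural observation I would isolate here is that the set of idle slots on a machine $i$ is exactly the set of available $(i,t)$ such that $t$ lies in no window $[B_j,C_j]$ of a job $j$ placed on $i$; this is because each placed job saturates its entire window on its machine (it consumes every currently-free available slot of $[B_j,C_j]$ on $i^*$, so nothing free remains there afterwards). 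I would pair this with the \emph{ECT property}: when $j$ is placed on $i^*$, then for every other machine $i'$ the number of currently-free available slots of $i'$ in $[B_j, C_j]$ is at most $p_j$ --- in the $|M'| \ge 1$ branch because having $p_j$ free slots on $i'$ before $t^*$ would contradict minimality of $t^*$, and in the $|M'|=0$ branch because $\capa_{i'}(B_j,d_j)\le \capa_{i^*}(B_j,d_j)<p_j$ by the $\mathrm{argmax}$ choice.

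\textbf{The charging argument.} This is where the real work --- and the main obstacle --- lies. I would fix a machine $i$ and bound the number of available slots in each maximal \emph{uncovered} segment $[a,b]$ on $i$. For an idle slot $t\in[a,b]$, every job $j$ with $r_j \le t \le d_j$ that is not placed on $i$ covering $t$ must, at the time it was processed, fall into one of three cases: (i) $B_j > t$, which since $r_j \le t$ forces (via line~\ref{invariant3}) a predecessor $j' \prec j$ with $C_{j'} \ge t$; (ii) $j$ was placed on another machine, which by the ECT property means $i$ had fewer than $p_j$ free available slots in the relevant part of $[B_j,d_j]$; or (iii) $j$ was placed on $i$ but completed before $a$, which applies to at most one job adjacent to the segment. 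The plan is to attribute each idle slot in $[a,b]$ either to a blocking predecessor chain --- whose completion times all lie in a single window, so its total size is at most $\Delta(J)$ --- or to an ECT-induced near-fullness event, and then to sum over machines using that release times and deadlines, hence job windows, fall on the $O(p)$ interval boundaries $I_1,\dots,I_p$, which controls how many distinct segments and source-windows can arise. Carefully carrying this out, without double-counting and while tracking the fact that $B_j$ and $C_j$ themselves need not be interval-aligned, should yield a bound of the form $2p\,\Delta(J)$ per (machine, interval) pair and thus $2p^2 m \Delta(J)$ in total; this establishes property~3. The delicate point I expect to fight with is case (ii) together with the ECT branch $i \in M'$ but late completion, where machine $i$ is full only on a prefix of $j$'s window, so the bookkeeping of which later jobs can still reach into $[a,b]$ must be done with care.
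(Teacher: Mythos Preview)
Your treatment of properties~1 and~2 is fine and essentially matches the paper. The gap is in your reduction for property~3. You correctly observe that $|\taskdis| \le (\text{number of idle slots})$, but you then claim it suffices to bound the total number of idle slots by $2p^2m\Delta(J)$. That bound is simply false in general: nothing in the hypotheses prevents $\sum_{i,t}\capa_i(t)$ from vastly exceeding $|A(J)|$, in which case almost every available slot is idle regardless of how well the algorithm performs. (Concretely: one unit job with window $[0,T]$, one interval, $m$ machines; idle is $mT-1$, your target is $2m$.) Your charging sketch implicitly assumes that every idle slot $t$ is covered by the window $[r_j,d_j]$ of some job, but that need not hold, and even when it does your case analysis does not turn into a bound of the right shape.

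The paper does not try to bound idle slots globally. Instead it fixes an interval $I_h$, lets $\alpha$ be the number of tasks discarded with deadline in $I_h$, and finds a \emph{witness} contiguous block $I' = I_g \cup \cdots \cup I_h$ with two properties: (i) every interval in $I'$ has at most $2\Delta(J)$ idle slots per machine (this uses the chain and ECT observations you identified, packaged as Lemmas~\ref{l:emptyslots} and~\ref{l:totalemptyslots}), and (ii) every job scheduled or discarded in $I'$ has its entire window $[r_j,d_j]$ contained in $I'$ (this uses that $I_{g-1}$, if it exists, has more than $2\Delta(J)$ idle slots on some machine, via Corollary~\ref{c:reverseempty}). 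Now the feasible schedule $\calS'$ is invoked \emph{locally}: since all these jobs have windows in $I'$, $\calS'$ packs all their tasks into $I'$, giving $|A(J'')| \le \capa(I')$. Combined with (i), this yields $\alpha \le 2pm\Delta(J)$; summing over the $p$ intervals gives the theorem. The point is that $\calS'$ is used not once globally but once per witness block, and only on the jobs whose windows are trapped there; this is what your global idle-slot reduction loses.
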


We prove the above theorem using the following scheduling algorithm, which considers jobs in the Earliest Deadline First (EDF) order and assigns  jobs to machines on which they will have {\em the earliest completion time} (ECT). Our algorithm is given in Algorithm 3, which we call $\textsf{EDF+ECT}$.
See Figure \ref{fig:topbjobs} for an illustration.

For a job $j$, we call the interval $[B_j, C_j]$ the {\em active} interval. 
$B_j$ denotes the first time instant when the job is {\em ready} to be scheduled {\em and there is an available slot} on some machine $i$. 
Note that it can be the case that no task of job $j$ is scheduled at the time slot $B_j$, because we choose the machine on which the job will have the earliest completion time.
If for a job~$j$, $B_j = \drop$ then we say that the job is {\em fully discarded}; otherwise, if not all of its tasks are scheduled, we say that it is {\em partially discarded}. 
For a fully discarded job, all its tasks belong to the set $\taskdis$. 
We say that a job $j$ is either {\em partially or fully discarded in the interval $I \in \{I_1, I_2, \ldots, I_p \}$} if $d_j \in I$ and $j$ is either partially or fully discarded.

We call a time slot $t$ on machine $i$ {\em idle} if $\capa_i(t) = 1$ and our schedule $\calS$ does not assign any task at time $t$ on machine $i$. 
Roughly speaking, the idle time slots correspond to the number of tasks we discard, and our goal going forward is to show that there are not too many idle time slots in $\calS$. 
We make the following observations that will help in proving Theorem \ref{thm:edf}.

\begin{observation}
\label{o:activejob}
	Fix an interval $I_q$ for some $q \in [p]$. Suppose the following two conditions hold:
	\begin{itemize}
	\item  There is a time slot $t^* \in I_q$ that is idle on machine $i$ in $\calS$.
	\item  There exists a job $j^*$ with $t^* \in [r_{j^*}, d_{j^*}]$ and either $B_{j^*} > t^*$ or $B_{j^*} = \drop$.
	\end{itemize}
	 Then there exists a job $j$ such that $t^* \in [B_{j}, C_{j}]$ and $j \prec j^*$.
	 Moreover, $j$ is not scheduled on machine $i$.
\end{observation}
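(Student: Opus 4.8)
The plan is to reason about why the time slot $t^*$ on machine $i$ was left idle by $\textsf{EDF+ECT}$, despite being available ($\capa_i(t^*) = 1$) and within the window $[r_{j^*}, d_{j^*}]$ of a job $j^*$ that was not scheduled there (either $B_{j^*} > t^*$ or $j^*$ was fully discarded). First I would invoke the EDF ordering: when $j^*$ was processed in the main loop, the algorithm searched for the earliest time $t$ with an available slot on some machine, with $r_{j^*} \le t \le d_{j^*}$, and with $C_{j'} < t$ for all $j' \prec j^*$ (line~\ref{invariant3}). Since $t^*$ itself satisfies the first condition ($\capa_i(t^*)=1$, and $t^* \in [r_{j^*}, d_{j^*}]$), but $B_{j^*} \ne t^*$ and in fact $B_{j^*} > t^*$ or $B_{j^*} = \drop$, the \emph{only} reason $t^*$ failed to be chosen as $B_{j^*}$ is that condition (ii) was violated at $t^*$: there exists a predecessor $j \prec j^*$ with $C_j \ge t^*$.

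Next I would pin down that this predecessor $j$ in fact has $t^* \in [B_j, C_j]$. We have $C_j \ge t^*$ from the previous paragraph. For the lower bound $B_j \le t^*$: since $j \prec j^*$, the problem's assumption on release times gives $r_j \le r_{j^*} \le t^*$, so $j$ is released by time $t^*$. Moreover $j$ was processed before $j^*$ in the EDF order (because $d_j \le d_{j^*}$, and ties are handled by the fixed reindexing), so $B_j$ is already determined when we process $j^*$. I need to rule out $B_j = \drop$: if $j$ were fully discarded, then by convention we treat it as completely processed, but then $C_j$ would not be a real time slot — here I would appeal to the convention in the theorem statement that a fully discarded job is "assumed completely processed," which in the algorithm corresponds to treating its completion as having occurred; the cleanest route is to note that condition (ii) on line~\ref{invariant3} refers to $C_{j'}$, so for the check to block $t^*$ we must have an actual numeric $C_j \ge t^*$, hence $B_j$ is a genuine time slot with $B_j \le C_j$. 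Combined with $r_j \le t^* \le C_j$ and the fact that $j$ occupies (or reserves) the interval $[B_j, C_j]$, and that $B_j$ is the first slot at which $j$ became schedulable, I would argue $B_j \le t^*$: indeed $t^*$ was available on machine $i$, $j$ was released by $t^*$, and all predecessors of $j$ completed before... — actually the tightest argument is that $B_j$ is by definition $\le C_j$, and if $B_j > t^*$ we would separately need $t^* \notin [B_j,C_j]$; to get $B_j \le t^*$ I would instead directly use that $j$'s active interval must contain $t^*$ because $C_j \ge t^*$ forces the scheduling of $j$'s last task at or after $t^*$, while $B_j$, being the earliest ready-and-available moment for $j$, cannot exceed $t^*$ once we know $t^*$ lies in $[r_j, d_j]$ and was available. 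This is the step I expect to require the most care.

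Finally, for the "not scheduled on machine $i$" conclusion: $t^*$ is idle on machine $i$ in $\calS$, meaning no task is assigned there; in particular no task of $j$ is assigned to slot $t^*$ on machine $i$. But $j$ is scheduled entirely on a single machine (no-migration), occupying a contiguous set of available slots in $[B_j, C_j]$; if $j$ were on machine $i$, then since $t^* \in [B_j, C_j]$ and $\capa_i(t^*) = 1$, the ECT scheduling would have placed one of $j$'s tasks at $t^*$ (the algorithm fills all available slots of the chosen machine within the active interval, on lines for both the $|M'|=0$ and $|M'| \ge 1$ cases), contradicting idleness. Hence $j$ is on some machine $i' \ne i$.

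I would flag the main obstacle as the careful bookkeeping around $B_j \le t^*$ and the handling of the fully-discarded convention for predecessors; the EDF/ECT structural facts (line~\ref{invariant3}'s two conditions, the "fills all available slots in the active interval" property, and the release-time monotonicity $r_j \le r_{j^*}$) are the workhorses and should be quoted explicitly.
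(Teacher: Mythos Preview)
Your proposal follows the paper's approach: condition (i) of line~\ref{invariant3} holds at $t^*$ (the slot is idle, hence available throughout the run, and $t^* \in [r_{j^*},d_{j^*}]$), so condition (ii) must fail there, producing some $j \prec j^*$ with $C_j \ge t^*$; and if $j$ were on machine $i$ it would have filled the available slot $t^*$ in $[B_j,C_j]$, contradicting idleness. The paper's own proof is equally terse.

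On the step you flag --- showing $B_j \le t^*$ --- your direct argument (``$B_j$ cannot exceed $t^*$ once we know $t^* \in [r_j,d_j]$ and was available'') does not close it, because condition (ii) for $j$ may itself fail at $t^*$. The fix is to choose, among all predecessors $j \prec j^*$ with $C_j \ge t^*$, one with minimal $B_j$. If this $j$ had $B_j > t^*$, then since $r_j \le r_{j^*} \le t^* < B_j \le d_j$ and $(i,t^*)$ is available, condition (ii) for $j$ would fail at $t^*$, giving $j' \prec j \prec j^*$ with $C_{j'} \ge t^*$; but condition (ii) for $j$ \emph{does} hold at $B_j$, so $C_{j'} < B_j$, whence $B_{j'} \le C_{j'} < B_j$, contradicting minimality. (The paper's proof elides this point too.) With this refinement your argument is complete.
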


\begin{proof}
If no such job $j$ exists, then, when our algorithm considers the job $j^*$ for scheduling, it would set $B_{j^*} = t^*$. This follows from line \ref{invariant3} of our algorithm.
Moreover, if $j$ were scheduled on $i$, then it would have been scheduled on every time slot between $B_j$ and $C_j$ that is idle on $i$, which contradicts the facts that $t^* \in [B_{j}, C_{j}]$ and that $t^*$ is idle on $i$.
\end{proof} 

\begin{observation}
\label{o:conservation}
	Consider a job $j \in J$ with $B_j \neq \drop$ that is active in the interval $[B_j, C_j]$ on machine $i$. Let $p'_j$ be the total number of tasks of $j$ scheduled in $[B_j, C_j]$. Then for any other machine $i' \neq i$, the number of idle time slots in the interval $[B_j, C_j]$ is at most $p'_j$.
\end{observation}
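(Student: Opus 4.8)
The plan is to compare idle-ness in the final schedule $\calS$ with availability at the moment Algorithm~\ref{alg:edf+ect} processes job $j$. First I would record the following reduction: for the fixed machine $i' \neq i$, the number of idle slots of $\calS$ on $i'$ inside $[B_j, C_j]$ is at most $\capa_{i'}(B_j, C_j)$, where the capacity function is taken as it stands at the time $j$ is considered. This holds because the capacity functions are only ever decreased over the run of the algorithm (each update zeroes out exactly the slots just filled), so a slot $(i', t)$ that is idle in $\calS$ — i.e.\ never used by any job — was, in particular, not used by any job processed before $j$, hence still had positive capacity when $j$ was considered; and since $j$ is assigned to $i \neq i'$, processing $j$ itself fills no slot on $i'$. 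Thus the idle slots of $\calS$ on $i'$ in $[B_j, C_j]$ inject into the set of slots on $i'$ in $[B_j, C_j]$ of positive capacity at that moment, giving the stated inequality. It then suffices to show $\capa_{i'}(B_j, C_j) \leq p'_j$.

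The remaining part is a case analysis following the two branches of the algorithm. If $|M'| = 0$, then $C_j = d_j$, the machine $i = i^*$ is chosen to maximize $\capa_{i^*}(B_j, d_j)$, and $p'_j = \capa_{i^*}(B_j, d_j)$; hence for any $i' \neq i$ we get $\capa_{i'}(B_j, C_j) = \capa_{i'}(B_j, d_j) \leq \capa_{i^*}(B_j, d_j) = p'_j$. If $|M'| \geq 1$, then $j$ is scheduled in full, so $p'_j = p_j$, $i = i^*$, and $C_j = t^* \leq d_j$, where $t^*$ is the earliest time at which some machine of $M'$ accumulates $p_j$ available slots in $[B_j, t^*]$. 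Here I split on whether $i' \in M'$. If $i' \notin M'$, then $\capa_{i'}(B_j, d_j) < p_j$, so $\capa_{i'}(B_j, C_j) \leq \capa_{i'}(B_j, d_j) \leq p_j - 1 < p'_j$. If $i' \in M'$, I claim $\capa_{i'}(B_j, t^*) \leq p_j$: the partial sums $t \mapsto \capa_{i'}(B_j, t)$ start from $\capa_{i'}(B_j) \leq 1 \leq p_j$ and grow in unit increments, so if the value at $t^*$ exceeded $p_j$ it would equal $p_j$ at some strictly earlier time, contradicting $i' \in M'$ together with the minimality of $t^*$. Hence $\capa_{i'}(B_j, C_j) = \capa_{i'}(B_j, t^*) \leq p_j = p'_j$. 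In every case $\capa_{i'}(B_j, C_j) \leq p'_j$, which combined with the reduction above completes the proof.

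I do not expect a genuine obstacle here; the only point requiring care is keeping the bookkeeping straight between the capacity function \emph{at the time $j$ is processed} (which governs the algorithm's decisions and the quantity $\capa_{i'}(B_j,C_j)$) and idle-ness \emph{in the final schedule} $\calS$, together with the elementary unit-increment argument that upgrades the minimality of $t^*$ into the bound $\capa_{i'}(B_j, t^*) \leq p_j$ in the subcase $i' \in M'$.
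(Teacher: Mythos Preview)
Your proof is correct and follows the same two-branch case analysis as the paper's own argument. The paper's proof is terser: it simply appeals to the ECT choice for the fully-scheduled case and to the $\text{argmax}$ in line~\ref{invariant1} for the partially-scheduled case, leaving implicit both the reduction from idle-in-$\calS$ to capacity-at-the-moment-$j$-is-processed and the unit-increment argument for the subcase $i' \in M'$; you have made both of these steps explicit, which is an improvement in rigor but not a different approach.
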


\begin{proof}
Consider the case when $p'_j = p_j$. In this case, the lemma follows since our algorithm assigns jobs to machines on which they will have the earliest completion time.
Now consider the case when $p'_j \neq p_j$. 
As job $j$ was scheduled on machine $i$, from line~\ref{invariant1}, it follows that $i$ had the maximum number of empty slots in the interval $[B_j, d_j]$, which is equal to $p'_j$.
Since in this case we set $C_j = d_j$, no machine $i' \neq i$ had more than $p'_j$ empty slots.
\end{proof}

\ifsoda
	\begin{figure*}[h]
\else
	\begin{figure}[t]
\fi
\centering
\ifsoda
	\includegraphics[width=0.7\textwidth]{figures/topjobs}
\else
	\includegraphics[width=1.0\textwidth]{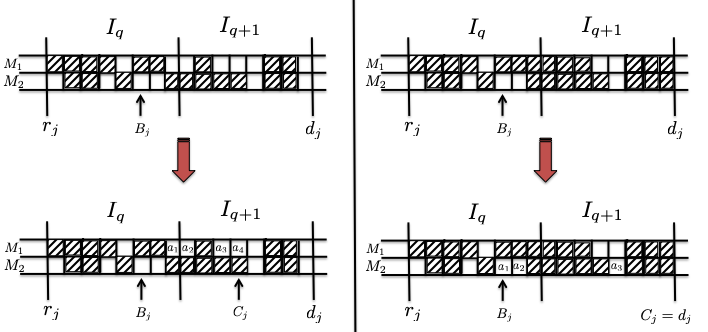}
\fi
 \caption{Figure illustrates scheduling a job $j$ with $p_j = 4$ in the non-migratory fashion when $m = 2$. On the left, $j$ is fully scheduled in the interval $[B_j, C_j]$ on $M_1$. On the right, the first three tasks are scheduled on machine $M_2$, and the last task $a_4$ is discarded.} \label{fig:topbjobs}
\ifsoda
	\end{figure*}
\else
	\end{figure}
\fi

The above two observations will help us argue that the number of idle time slots on any machine is small.

\begin{lemma}
\label{l:emptyslots}
	Consider any arbitrary time  interval  $I := \{t', \ldots, t''\} \subseteq I_q$ for some $q \in [p]$. Suppose there is at least one job $j^*$ with $I\subseteq [r_{j^*}, d_{j^*}]$ and either $B_{j^*} > t''$ or $B_j = \drop$. Then, for any machine $i \in [m]$ the number of idle time slots in $I$ is at most $\Delta(J)$.
\end{lemma}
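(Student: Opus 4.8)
Fix a machine $i \in [m]$ and let $N_i$ be the number of idle time slots of $\calS$ on machine $i$ inside $I$; if $N_i = 0$ we are done, so assume $N_i \ge 1$. The plan is to build a precedence-chain $j_1 \succ j_2 \succ \cdots \succ j_K$ in $J$, all lying strictly below $j^*$, whose active intervals $[B_{j_k}, C_{j_k}]$ are pairwise disjoint and together cover every idle slot of machine $i$ inside $I$. Granting this, Observation~\ref{o:conservation} (applied with the roles of the two machines swapped, since each $j_k$ is active on some machine $\ne i$) bounds the number of idle slots of machine $i$ inside $[B_{j_k},C_{j_k}]$ by $p'_{j_k}\le p_{j_k}$, and summing over the disjoint intervals of the chain gives $N_i \le \sum_{k=1}^K p_{j_k} \le \Delta(J)$, because $\{j_1,\dots,j_K\}$ is a chain in $J$.

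The chain is constructed greedily from the right. Let $\tau_1$ be the rightmost idle slot of machine $i$ in $I$. Since $\tau_1 \in I \subseteq [r_{j^*}, d_{j^*}]$ and $\tau_1 \le t'' < B_{j^*}$ (or $B_{j^*} = \drop$), Observation~\ref{o:activejob} applied to the job $j^*$ and the slot $\tau_1$ produces a job $j_1 \prec j^*$ with $\tau_1 \in [B_{j_1}, C_{j_1}]$ and $j_1$ not scheduled on machine $i$. Inductively, having chosen $j_{k-1} \prec \cdots \prec j_1 \prec j^*$, let $\tau_k$ be the rightmost idle slot of machine $i$ in $I$ that is strictly to the left of $B_{j_{k-1}}$; if none exists we stop, with $K := k-1$. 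Otherwise Observation~\ref{o:activejob} applies again to the job $j_{k-1}$ and the slot $\tau_k$: we have $\tau_k < B_{j_{k-1}}$ by choice, and $\tau_k \in [r_{j_{k-1}}, d_{j_{k-1}}]$ since $j_{k-1} \prec j^*$ gives $r_{j_{k-1}} \le r_{j^*} \le \tau_k$, while $\tau_k < B_{j_{k-1}} \le d_{j_{k-1}}$ (the algorithm only sets $B_{j_{k-1}}\in[r_{j_{k-1}},d_{j_{k-1}}]$ when it is not $\drop$, and here it is not, as $\tau_{k-1}\in[B_{j_{k-1}},C_{j_{k-1}}]$). This yields $j_k \prec j_{k-1}$ with $\tau_k \in [B_{j_k}, C_{j_k}]$ and $j_k$ not scheduled on machine $i$.

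Two facts close the argument. First, $j_k \prec j_{k-1}$ together with condition~ii) on line~\ref{invariant3} of Algorithm~\ref{alg:edf+ect} forces $C_{j_k} < B_{j_{k-1}}$ (the algorithm only fixes $B_{j_{k-1}}$ after $C_{j_k}$ is determined and strictly earlier); hence the intervals $[B_{j_k},C_{j_k}]$ are pairwise disjoint, the starts satisfy $B_{j_1} > B_{j_2} > \cdots$, so the construction terminates with a genuine chain $j_1 \succ \cdots \succ j_K$ (distinctness follows from strictness and transitivity of $\prec$). Second, every idle slot $\tau$ of machine $i$ in $I$ is covered: $\tau \le \tau_1 \le C_{j_1}$, so if $\tau \ge B_{j_1}$ then $\tau\in[B_{j_1},C_{j_1}]$; otherwise $\tau < B_{j_1}$, hence $\tau \le \tau_2 \le C_{j_2}$, and we iterate — the stopping rule guarantees no idle slot lies strictly left of $B_{j_K}$, so each idle slot falls into some $[B_{j_k},C_{j_k}]$. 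Combining coverage, disjointness, and Observation~\ref{o:conservation} gives the claimed bound $N_i \le \Delta(J)$, and since $i$ was arbitrary the lemma follows. The one delicate point is the repeated invocation of Observation~\ref{o:activejob}: at each step one must verify that the new obstructing job $j_{k-1}$ still has the current idle slot within its own release–deadline window, which is precisely where the structural assumption $r_j \le r_{j'}$ for $j \prec j'$ and the behaviour of \textsf{EDF+ECT} on the $B$-values are used; the rest is bookkeeping with the two Observations.
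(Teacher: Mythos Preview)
Your proof is correct and follows essentially the same approach as the paper: build a precedence chain $j_K \prec \cdots \prec j_1 \prec j^*$ by repeatedly invoking Observation~\ref{o:activejob} at the rightmost uncovered idle slot, use $C_{j_k} < B_{j_{k-1}}$ to get disjointness, and then bound the idle slots in each active interval via Observation~\ref{o:conservation}. If anything, you are slightly more explicit than the paper in verifying the hypothesis $\tau_k \in [r_{j_{k-1}}, d_{j_{k-1}}]$ at each inductive step and in spelling out the coverage argument.
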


\begin{proof}
	Consider a machine $i$; we will show that it has at most $\Delta(J)$ idle time slots in the interval $I$. 
	Let $t^* \in I$ be the last such time slot. 
	Since $j^*$ is available at $t^*$, by Observation \ref{o:activejob} there must exist a  job $j_1 \prec j^*$ that is active at time $t^*$, i.e.~$t^* \in [B_{j_1}, C_{j_1}]$. 
	Now consider the last time slot $t_1 < B_{j_1}$ that is empty on machine $i$. 
	We claim that $j_1$ is available for processing at time $t_1$.
	This follows from our assumptions that if $j_1 \prec j^*$, then $r_{j_1} \leq r_{j^*}$ and the release times and the deadlines of $J$ align with the beginnings and the endings of the intervals.
	Therefore, there must be a job $j_2 \prec j_1$ such that $t_1 \in [B_{j_2}, C_{j_2}]$. 
	Moreover, $C_{j_2} < B_{j_1}$ as  $j_2 \prec j_1$. 
	We continue by induction to construct a chain of jobs $j_y \prec j_{y-1} \prec  \ldots \prec j_1$ such that $[B_{j_y}, C_{j_y}] \cup [B_{j_{y-1}}, C_{j_{y-1}}] \cup \ldots \cup [B_{j_1}, C_{j_1}]$ covers all the empty slots on machine $i$ in the interval $I$. These intervals are pairwise disjoint.
	The total size of the jobs in the chain $j_y \prec j_{y-1} \prec  \ldots \prec j_1$, $\sum^{y}_{v = 1}p_{j_v}$, is at most $\Delta(J)$, since the maximum chain length among $J$ is at most $\Delta(J)$.
By Observation~\ref{o:activejob}, jobs $j_1, \ldots, j_y$ are not scheduled on $i$.
Thus, by Observation \ref{o:conservation},  for any job $j_v$ belonging to the chain, there can be at most $p_{j_v}$ empty slots on the machine $i$ in the interval $[B_{j_v}, C_{j_v}]$. 
	Therefore, the number of all time slots idle on $i$ in the union of these intervals is at most $\sum^{y}_{v = 1}p_{j_v} \leq \Delta(J)$. 
	We conclude by recalling that $[B_{j_y}, C_{j_y}] \cup [B_{j_{y-1}}, C_{j_{y-1}}] \cup \ldots \cup [B_{j_1}, C_{j_1}]$ covers all the idle slots on machine $i$.
\end{proof}

The above lemma implies the following useful corollary.

\begin{corollary}
\label{c:reverseempty}
Suppose there is an interval $I_q$ and a machine $i$ with more than $\Delta(J)$ idle time slots. If there is a job $j^*$ such that $B_{j^*} \in I_{q+1} \cup I_{q+2} \cup \ldots \cup I_p$, then its release time $r_{j^*} \in I_{q+1} \cup I_{q+2} \cup \ldots \cup I_p$.
\end{corollary}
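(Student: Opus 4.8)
The plan is to prove the corollary by contradiction, reducing it immediately to Lemma~\ref{l:emptyslots}. Suppose the conclusion fails, i.e.\ $r_{j^*}\notin I_{q+1}\cup I_{q+2}\cup\cdots\cup I_p$. Recall that in the deadline-scheduling instance all release times fall on the boundaries between the intervals $I_1,\dots,I_p$; hence $r_{j^*}$ is the left endpoint of some $I_s$, and not lying in $I_{q+1},\dots,I_p$ forces $s\le q$, so $r_{j^*}\le\text{begin}(I_q)$.

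Next I would check that the triple $(I,i,j^*)$ with $I:=I_q$ satisfies the hypotheses of Lemma~\ref{l:emptyslots}. Write $I_q=\{t',\dots,t''\}$, so $t'=\text{begin}(I_q)$ and $t''=\text{end}(I_q)$. Since $B_{j^*}\in I_{q+1}\cup\cdots\cup I_p$, in particular $B_{j^*}\neq\drop$ and $B_{j^*}\ge\text{begin}(I_{q+1})>t''$, so the alternative ``$B_{j^*}>t''$'' of the lemma holds. It remains to see that $I_q\subseteq[r_{j^*},d_{j^*}]$. The lower bound $r_{j^*}\le t'$ was established in the first paragraph. For the upper bound, note that $B_{j^*}$ is the actual slot selected on line~\ref{invariant3} of \textsf{EDF+ECT}, which by construction satisfies $r_{j^*}\le B_{j^*}\le d_{j^*}$; combined with $B_{j^*}>t''$ this gives $d_{j^*}\ge B_{j^*}>t''=\text{end}(I_q)$, so indeed $I_q\subseteq[r_{j^*},d_{j^*}]$.

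With these hypotheses verified, Lemma~\ref{l:emptyslots} yields that every machine, and in particular machine $i$, has at most $\Delta(J)$ idle time slots inside $I_q$. This contradicts the standing assumption that machine $i$ has more than $\Delta(J)$ idle slots in $I_q$, so the conclusion $r_{j^*}\in I_{q+1}\cup\cdots\cup I_p$ must hold. I do not anticipate any real obstacle here: the argument is essentially a one-line invocation of Lemma~\ref{l:emptyslots}, and the only point needing a moment's attention is confirming that $d_{j^*}$ extends past $I_q$, which is precisely the feasibility invariant $r_j\le B_j\le d_j$ maintained by the scheduling rule on line~\ref{invariant3}.
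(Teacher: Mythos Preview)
Your proof is correct and follows essentially the same approach as the paper: assume for contradiction that $r_{j^*}$ lies to the left of $I_{q+1}$, then apply Lemma~\ref{l:emptyslots} to the interval $I_q$ and the job $j^*$ to bound the number of idle slots on machine $i$ by $\Delta(J)$, contradicting the hypothesis. Your write-up is in fact more careful than the paper's in that you explicitly verify the hypothesis $I_q \subseteq [r_{j^*}, d_{j^*}]$, and in particular the upper bound $d_{j^*} > \text{end}(I_q)$ via the invariant $B_{j^*} \le d_{j^*}$ from line~\ref{invariant3}.
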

\begin{proof}
For contradiction, let us assume that the release time of $j^*$ belongs to $I_{q'}$, where $q' < q+1$.
Recall that all jobs are released at the beginning of the intervals.
Now, we invoke the previous lemma on the interval $I_q$ and the job $j^*$, which gives us the contradiction.
\end{proof}

The next lemma shows that even if a job is partially discarded, the number of idle time slots on a machine cannot be too large.
This lemma accounts for the number of idle slots due to the no-migration constraints, that is, due to the fragmentation of jobs.

\begin{lemma}
\label{l:totalemptyslots}
	Consider any arbitrary time  interval  $I := \{t', \ldots, t''\} \subseteq I_q$ for some $q \in [p]$. Suppose there is at least one job $j^*$ with $I\subseteq [r_{j^*}, d_{j^*}]$ and $A(j^*) \cap \taskdis \ne \emptyset$.
	Then, for any machine $i$ the number of idle time slots in $I$ is at most $2\Delta(J)$.
\end{lemma}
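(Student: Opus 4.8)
The plan is to reduce to Lemma~\ref{l:emptyslots} as much as possible, treating the one genuinely new case — when $j^*$ is only \emph{partially} discarded — by exploiting that the stretch of time during which $j^*$ runs toward its deadline absorbs few idle slots on any machine. First I would split on $B_{j^*}$. If $B_{j^*} = \drop$ (that is, $j^*$ is fully discarded) or $B_{j^*} > t''$, then the hypotheses of Lemma~\ref{l:emptyslots} hold verbatim for the interval $I$ and the job $j^*$, so every machine has at most $\Delta(J) \le 2\Delta(J)$ idle slots in $I$ and we are done.

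So assume $B_{j^*}$ is an actual time slot with $B_{j^*} \le t''$. I would first argue that $j^*$ is then \emph{partially} discarded: since $B_{j^*} \ne \drop$ the algorithm reaches the scheduling step for $j^*$, and if the set $M'$ were nonempty then \emph{all} tasks of $j^*$ would be placed, contradicting $A(j^*)\cap\taskdis \ne \emptyset$; hence $M' = \emptyset$, and the chosen machine $i^*$ satisfies $\capa_{i^*}(B_{j^*},d_{j^*}) \ge \capa_{i}(B_{j^*}) = 1$, where $i$ is a machine witnessing the choice of $B_{j^*}$, so at least one task of $j^*$ is scheduled and $j^*$ is neither fully scheduled nor fully discarded. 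In particular $C_{j^*} = d_{j^*}$ and $j^*$ is placed entirely on the single machine $i^*$.

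Now fix a machine $i$ and partition $I$ into $I_1 := \{t \in I : t < B_{j^*}\}$ and $I_2 := \{t \in I : t \ge B_{j^*}\}$. The interval $I_1$ is contained in $I_q$, contained in $[r_{j^*},d_{j^*}]$, and has right endpoint strictly below $B_{j^*}$, so Lemma~\ref{l:emptyslots} applied to $I_1$ and $j^*$ bounds the idle slots of $i$ in $I_1$ by $\Delta(J)$ (vacuously if $I_1=\emptyset$). For $I_2$: since $t'' \le d_{j^*} = C_{j^*}$ we have $I_2 \subseteq [B_{j^*},C_{j^*}]$. If $i = i^*$, the algorithm scheduled $\capa_{i^*}(B_{j^*},d_{j^*})$ tasks of $j^*$ on $i^*$ in that interval, i.e.\ it used every slot of $i^*$ still available there, so $i^*$ has no idle slot in $[B_{j^*},d_{j^*}] \supseteq I_2$. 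If $i \ne i^*$, then Observation~\ref{o:conservation} applied to $j^*$ (which is active on $i^*$ in $[B_{j^*},C_{j^*}]$) bounds the idle slots of $i$ in $[B_{j^*},C_{j^*}]$, hence in $I_2$, by $p'_{j^*} \le p_{j^*} \le \Delta(J)$. Summing the two parts gives at most $2\Delta(J)$ idle slots of $i$ in $I$, as claimed. The argument is essentially bookkeeping on top of Lemma~\ref{l:emptyslots} and Observation~\ref{o:conservation}; the only delicate points are establishing that $B_{j^*}\ne\drop$ forces $j^*$ to be merely partially discarded (which is what gives $C_{j^*}=d_{j^*}$ and hence $I_2\subseteq[B_{j^*},C_{j^*}]$) and separating off the sub-case where the machine under scrutiny is the very machine $i^*$ that received $j^*$.
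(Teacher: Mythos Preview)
Your proof is correct and follows essentially the same approach as the paper: split on whether $B_{j^*}$ lies beyond $t''$ (or is $\drop$), and in the remaining partially-discarded case decompose $I$ at $B_{j^*}$, applying Lemma~\ref{l:emptyslots} to the left piece and Observation~\ref{o:conservation} (together with $C_{j^*}=d_{j^*}$) to the right piece. You are in fact slightly more careful than the paper in explicitly separating the $B_{j^*}=\drop$ case and in justifying that $j^*$ is genuinely partially discarded so that the machine $i^*$ carrying $j^*$ has no idle slots in $[B_{j^*},C_{j^*}]$.
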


\begin{proof}
If $B_{j^*} > t''$, then the statement follows from Lemma \ref{l:emptyslots}. 
Therefore, $B_{j^*} \leq t''$ and some tasks of $j^*$ got discarded. 
By Observation \ref{o:conservation}, in the interval $[B_{j^*}, C_{j^*}]$ there cannot be more than $p_{j^*} \leq \Delta(J)$ idle slots on any machine $i$ where $j^*$ was not scheduled (and there cannot be any such slots on the machine where $j^*$ was scheduled since it was partially discarded).
\textit{A~fortiori}, there are at most $\Delta(J)$ such slots in the interval $[B_{j^*}, t''] \subseteq [B_{j^*}, d_{j^*}] = [B_{j^*}, C_{j^*}]$
(we have $d_{j^*} = C_{j^*}$ since $j^*$ was partially discarded).
If $B_{j^*} \le t'$, then we are done since $I \subseteq [B_{j^*}, t'']$.
Otherwise, i.e.~if $B_{j^*} > t'$,
we apply Lemma \ref{l:emptyslots} to the interval $[t', B_{j^*}-1]$ and the job $j^*$, obtaining that in the interval  $[t', B_{j^*}-1]$ there can be at most $\Delta(J)$ idle time slots. 
We conclude by combining the two intervals $[t', B_{j^*}-1]$ and $[B_{j^*}, t'']$.
\end{proof}

With all the above lemmas, it is easy to prove Theorem \ref{thm:edf}. 
For brevity, we use $\calS^{-1}(I)$ to denote the set of tasks scheduled in the interval $I$ in $\calS$.

\begin{proof}[Proof of Theorem \ref{thm:edf}]

Our algorithm guarantees that in the schedule $\calS$ all the tasks $A(j)$ of a job $j \in J$ are assigned to a single machine and the precedence constraints among the jobs are satisfied. 
Therefore, it only remains to show that the number of discarded tasks $|\taskdis| \leq 2 p^2m\Delta(J)$. 
The proof is based a double-counting argument, which was also used in \cite{LR16}.
Let $I_h$, $h \in [p]$ be any interval, and let $\alpha$ be the total number of tasks discarded in $I_h$. 
We argue that $\alpha \le 2pm\Delta(J)$. As there are $p$ intervals, this will complete the proof.

Let $j_s$ be the lowest-priority (i.e.~the highest-index) job that got discarded either completely or partially in the interval $I_h$ (if no such job exists, then we are done).
Consider the set of jobs $J':= \{ j_1, j_2, \ldots, j_s\}$ and the intervals $I_1, I_2, \ldots, I_h$.  
Imagine the following thought experiment: Let us reschedule the jobs in the set $J'$ using our algorithm in the intervals $I_1, I_2, \ldots, I_h$. 
Let  $\calS^*$ be this new schedule.  
As our algorithm considers the jobs in EDF order, the schedule $\calS^*$ will be identical to $\calS$ for the set $J'$. 
In particular, this means that the interval $I_h$ will have $\alpha$ discarded tasks in $\calS^*$. 
This is true as we never discard a job until we hit its deadline, and by our choice $j_s$ is the highest-indexed job that got discarded. 
So all the tasks discarded in the interval $I_h$ belong  to  the jobs in $J'$ in both schedules $\calS$ and $\calS^*$. 

From now on we will focus on the schedule $\calS^*$ and the set of jobs $J'$. 
By Lemma \ref{l:totalemptyslots}, the interval $I_h$ cannot have more than $2\Delta(J)$ idle time slots on any machine $i$ as there is a job $j_s$ that got discarded in $I_h$.  
Let $g \in \{1, 2, \ldots, h\}$ be the minimal index such that no machine $i$ has more than $2\Delta(J)$ empty slots in any of the intervals $I_g, \ldots, I_h$.
Let $I' := I_g \cup \ldots \cup I_h$. 
Consider the following set of jobs:
\ifsoda
	$J'' :=  \left \{ j \in  \{j_1, j_2, \ldots, j_s \}:  [B_j, C_j] \subseteq  I' \text{ or }  { B_j = \drop}  \right \}$.
\else
	\[ J'' :=  \left \{ j \in  \{j_1, j_2, \ldots, j_s \}:  [B_j, C_j] \subseteq  I' \hspace{2mm}  \text{or} \hspace{2mm}  { B_j = \drop}  \right \} . \]
\fi

For every job $j \in J''$ we claim that $[r_j, d_j] \subseteq I'$.
To get this, it is enough to show that $r_j \ge \text{begin}(I_g)$.
Then we have $\text{begin}(I_g) \le r_j \le d_j \le d_{j_s} = \text{end}(I_h)$.
Our claim is clearly true if $g=1$. Otherwise we argue as follows. 
From our choice of the index $g$,  $I_{g-1}$ has more than $\Delta(J)$ (in fact, $2\Delta(J)$) many idle time slots on some machine $i$. Now consider a job $j \in J''$. The first case is that $B_j \in I'$.
Then from Corollary \ref{c:reverseempty} (applied to $I_{g-1}$) we have $r_j \in I'$.
The second case is that $B_j = \drop$. Towards a contradiction suppose that $r_j \leq \text{begin}(I_{g-1})$.
By applying Lemma \ref{l:emptyslots} to $I_{g-1}$, we get that $I_{g-1}$ has at most $\Delta(J)$ idle slots on any machine, which contradicts our choice of the index $g$.
Therefore again $r_j \in I'$.

Conversely, any job $j \in J'$ that has a task scheduled in $I'$
must be in $J''$.
Suppose otherwise;
then we would have $g>1$,
$B_j < \text{begin}(I_g) \le C_j$,
and $r_j \le \text{begin}(I_{g-1})$.
This, together with the fact that $I_{g-1}$
has more than $2 \Delta(J)$ idle slots on some machine,
implies
that $j$ should have been scheduled fully in $I_1 \cup \cdots \cup I_{g-1}$
(via an argument similar to the proof of Lemma~\ref{l:emptyslots}),
which is a contradiction.

Now we can use a simple double-counting trick to bound $\alpha$. 
Let $\taskdis(J'') \subseteq A(J'')$ denote the subset of tasks discarded by our schedule $\calS^*$.
Then, 
\ifsoda
	\begin{align}
	\label{e:bound}
	|A(J'')|  &\geq  |(\calS^*)^{-1}(I')| + |\taskdis(J'')|  \\
	\nonumber &\geq \capa(I') - 2pm\Delta(J) + \alpha.
	\end{align}
\else
	\begin{equation}
	\label{e:bound}
	|A(J'')|  \geq  |(\calS^*)^{-1}(I')| + |\taskdis(J'')|  \geq \capa(I') - 2pm\Delta(J) + \alpha.
	\end{equation}
\fi

The last inequality above follows from the fact that there are at most $p$ intervals in $\{I_g, \ldots, I_h\}$ and, by our definition of $g$, any interval $I_g, \ldots, I_h$ has at most $2\Delta(J)$ empty slots on any machine. 
But note that every task in $A(J'')$ needs to be scheduled in $I'$, because for every job $j \in J''$, $[r_j, d_j] \subseteq I'$. 
By our assumption, there is a (possibly invalid) schedule $\calS'$ that assigns every task in $A(J'')$  in the interval $I'$. 
Therefore,  $A(J'') \leq \capa(I')$.  
Combining this with \eqref{e:bound} we have $\alpha \leq 2pm \Delta(J)$. 
This completes the proof.
\end{proof}

\medskip

	Now we are ready to prove Lemma \ref{lem:topdiscardedtsks}, which bounds the total number of tasks discarded in the process of scheduling top jobs.

\begin{proof} [Proof of Lemma \ref{lem:topdiscardedtsks}]
We obtain the schedule $\calS^*$ by applying Theorem \ref{thm:edf} on the jobs in the set $\jtop^*$. 
For every job $j \in \jtop^*$, we define the truncated job length $p'_j$ by taking into account the discarded tasks in $A^1_{\mathrm{top\hyphen discarded}}$. 
We define the release time  of  $j$ as $r^*_j$ and the deadline as $d^*_j$, where $r^*_j, d^*_j$ are defined in Equation (\ref{e:release-dead}).  
For each machine $i$,  $\capa_i(t) = 1$ if time $t$ on machine~$i$ is not assigned any task in the schedule $\calS$. Recall that $\calS$ gives an assignment of a subset of tasks of the bottom jobs and the special jobs ($\hat{A}$) in the interval $I^*$. 

By Lemma \ref{lem:cutchains}, the maximum chain length of jobs in $\jtop^*$ is at most $k^2 \delta T^*$, where $T^*$ is the length of interval $I^*$. 
Therefore, by Theorem \ref{thm:edf}, the total number of tasks discarded in converting the tentative schedule into an actual schedule is given by
$$
|A^2_{\mathrm{top\hyphen discarded}}| \leq 2p^2 m \cdot k^2 \delta T^*.
$$

By Lemma \ref{lem:tentative}, we have 

$$
|A^1_{\mathrm{top\hyphen discarded}}| \leq 4m 2^{-k} T^*.
$$

Therefore, 
\begin{eqnarray*}
|A_{\mathrm{top\hyphen discarded}}| &=& |A^1_{\mathrm{top\hyphen discarded}}| + |A^2_{\mathrm{top\hyphen discarded}}|  \\
&\leq& 2p^2 m \cdot k^2 \delta T^* + 4m 2^{-k} T^*   \\
&\leq& \frac{\epsilon}{4} \cdot \frac{T^*}{\log T} 
\end{eqnarray*}

The last inequality follows from substituting  $p = 2^{\ell^*} \leq 2^{k^2}$, $k = \frac{O(1) m}{\epsilon} \log \log T$, and $\delta = \frac{\epsilon}{8k^2m2^{2k^2} \log T}$.
\end{proof}

\ifsoda\else
	
\section{Minimizing Makespan Under Precedence Constraints with Communication Delays}
\label{sec:communication}

Now we consider the problem of minimizing the makespan when jobs have precedence constraints and communication delay constraints.
For the rest of this section, we sometimes refer to the problem we studied in the first part of the paper , $Pm|\text{prec}, \text{pmtn} | C_{\max}$,  as ``the no-delay problem" or "the first problem" and  we refer the problem, $Pm|\text{prec}, \text{pmtn},  c_{j,j'}|C_{\max}$, as ``the delay problem".
We follow the notation developed for the no-delay problem, and whenever necessary remind the readers their meaning. 
If we use a notation without definition in this section, it implies that its meaning is same as in the no-delay problem.

Similar to the no-delay problem, the input to the problem consists of a set of jobs $J$, where each job $j \in J$ has a processing length $p_j$, the jobs have precedence constraints, and we are given a set of $m$ machines. 
If there is a precedence constraint $j \prec j'$, then we require that job $j'$  can only start after job $j$ is completed. 
Furthermore, if $j$ and $j'$ are processed {\em on two different machines}, then the processing of job $j'$ cannot start earlier than $c_{j,j'} > 0$ time units after the completion time of $j$. 
We assume without loss of generality that $c_{j, j'}$ are natural numbers.
If, however, $j$ and $j'$ are processed on the same machine, then $j'$ can start right after the completion of job $j$.
We study the case when $\max_{j \prec j'} \{c_{j,j'}\} = O(1)$ case, which is a generalization of the well studied case of when communication delays are all equal to 1.  
However, to keep the notation simple, we first present our proof assuming $c = 1$.
At the end, when it will be clear that our framework is general enough to handle $\max_{j \prec j'} \{c_{j,j'}\} = O(1)$, and we give a sketch of our proof for  $\max_{j \prec j'} \{c_{j,j'}\} = O(1)$.
The goal is to schedule jobs satisfying the precedence and communication delay constraints so as to minimize makepsan.
In the three field notation, the problem is denoted by $Pm|\text{prec}, \text{pmtn}, c= 1|C_{\max}$.

Our goal is to assign each job to a single machine and specify the schedule of tasks such that the precedence constraints and communication delay constraints among jobs are satisfied. 
Formally, we define a valid schedule as follows. 

	A  schedule $\calS$ for a subset $A' \subseteq A(J)$ of tasks to an interval $I \subseteq [T]$ with integer length is a function 
	$\calS: A' \rightarrow [m] \times I$ that indicates the (machine, slot) pair that each task is assigned to.  For every $a \in A'$, we then use $\calS_\mac(a)$ and $\calS_\tim(a)$ to denote the first and second component of $\calS(a)$ respectively. 

\begin{definition}
\label{def:comvalid}
	A schedule $\calS$ for $A' \subseteq A(J)$ is valid if it satisfies the following constraints.
	\begin{itemize}
		\item Capacity Constraints: for every two tasks $a \neq a' \in A'$, we have $\calS(a) \neq \calS(a')$. 
		\item No-migration Constraints:  For every pair of tasks $a \sim a' \in A'$, we have $\calS_\mac(a) = \calS_\mac(a')$.
		\item Precedence Constraints:  For every pair of tasks $a \prec a' \in A'$, we have $\calS_\tim(a) < \calS_\tim(a')$.
		\item Communication Delay Constraints: Consider a pair of jobs  $j \prec j'$ with precedence constraints. 
		  Suppose $a_{1,j'} \in A'$ and  $a_{p_{j},j} \in A'$. If $\calS_\mac(a_{1,j'}) \neq \calS_\mac(a_{p_{j},j})$, then 
		  $$\calS_\tim(a_{1,j'}) > \calS_\tim(a_{p_{j},j}) + 1$$
	\end{itemize}
\end{definition}
 
The first three constraints of the above definition is same as Definition \ref{def:valid} for the no-delay problem. 
So, let us focus on the communication delay constraints. 
For a pair of jobs $j \prec j'$ with precedence constraints, we enforce the communication delay constraints only if the first task of $j'$ (that is $a_{1,j'}$)
and the last task of $j$ ( that is $a_{p_{j},j}$) are in the set $A'$.
The reason is that if either one of them is not in the set $A'$, it means that our algorithm has {\em discarded} that task.
For such jobs, we enforce the communication delay constraints when inserting the discarded tasks back.

\subsection{LP Relaxation}
Similar to our first result, the algorithm to prove Theorem \ref{thm:delayresult} is also based on rounding Sherali-Adams lift of a LP for the problem. 
We now give a new LP relaxation for minimizing makespan with precedence and communication delay constraints, which extends the LP for the no-delay problem.
Similar to the LP for the no-delay problem, we use the variables $x_{(a, i, t)}$, which are intended to be 1 if the task $a \in A(J)$ is assigned to machine $i \in [m]$ at time $t \in [T]$.
Let $[m]_{-i}$ denote the set $\{ [m] \setminus \{i\} \}$. In our new LP, we have all the constraints in the LP for the precedence constraints problem (i.e, Constraints \eqref{e:taskschedule} to \eqref{e:nonnegativity}).  The only new set of constraints we introduce is the following one:

\begin{align}
	x_{(a_{1,j'}, i, t + 1)} + \sum_{i' \in [m]_{-i}} x_{(a_{p_{j},j}, i', t)}  &\leq 1 &\quad &\forall j \prec j', i \in [m], t \in [T-1] \label{e:commconstraints}
\end{align}

Consider a pair of jobs $j \prec j'$. 
We want to guarantee that if  $j$ and $j'$ are scheduled on different machines, then $j'$ starts at least 1 time step after the completion of job $j$.
This is equivalent to satisfying the communication delay constraint on the last task of job $j$ and the first task of job $j'$. 
This further implies that  if $x_{(a_{p_{j},j}, i, t)} = 1$ for some machine $i$ and time slot $t$, then  $x_{(a_{1,j'}, i', t)} = 0$ for all machines $i'$. 
The constraints (\ref{e:commconstraints}) precisely guarantee this.
Therefore, if there is an optimal integral solution with makespan at most $T$, then there is a feasible solution to the LP. 
We use $\calP(T)$ (or $\calP$) to denote the polytope define by LP (\ref{e:taskschedule}-\ref{e:nonnegativity}, \ref{e:commconstraints}). Our main goal in this section is to prove the following theorem.

\begin{theorem}
		\label{thm:comm}
		Let $T$ be the smallest value for which the Sherali-Adams lift of LP (\ref{e:taskschedule}-\ref{e:nonnegativity}) to $r = (\log n)^{((m^2/\epsilon^2).\log \log n)}$ rounds has a feasible solution $\vecx$. Then any feasible solution $\vecx$ can be rounded to produce valid a schedule $\calS: A(J) \rightarrow [m] \times [(1+\epsilon) T]$ satisfying the Definition \ref{def:comvalid}. Therefore, the makespan of $\calS$ is at most  $(1+\epsilon) T$. 
\end{theorem}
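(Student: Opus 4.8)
The plan is to reduce the delay problem to the no-delay machinery developed in Section~\ref{sec:makespan}, re-running the recursive algorithm $\recursive$ essentially unchanged, and then to handle the communication-delay constraints only at the three points where the algorithm actually commits tasks to (machine, slot) pairs: (1) scheduling at the base of the recursion by conditioning, (2) scheduling the top jobs, and (3) re-inserting discarded tasks. First I would observe that the extra LP constraints~\eqref{e:commconstraints} are only finitely many per precedence pair, so the Sherali-Adams lift $\SA(\calP,r)$ still has the same level $r=(\log n)^{O((m^2/\eps^2)\log\log n)}$, and all the structural facts used before (Definition~\ref{def:partial-scheduling}, the splitting procedure, Lemma~\ref{lem:cutchains}, the partition into top/middle/bottom, the bound $|\taskdis|\le\eps T$) go through verbatim, since none of them inspects constraint~\eqref{e:commconstraints}. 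Thus the only thing to re-prove is that the final schedule, which already satisfies Definition~\ref{def:valid}, additionally satisfies the communication-delay clause of Definition~\ref{def:comvalid}, at the cost of at most an extra $O(\eps)T$ discarded tasks (absorbable into the makespan slack by rescaling $\eps$).

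For point~(1): when two tasks $a_{p_j,j}$ and $a_{1,j'}$ with $j\prec j'$ are both scheduled by conditioning inside the \emph{same} base-level recursive call, constraint~\eqref{e:commconstraints} together with the conditioning semantics (conditioning on $(a,i,t)$ forces $x_{(a,i,t)}=1$ and shrinks supports) forces either $\calS_\mac(a_{p_j,j})=\calS_\mac(a_{1,j'})$ or a gap of at least one empty slot — exactly the delay clause. When they are scheduled in \emph{different} branches of the recursion, the intervals of the two branches at the relevant level are separated by a whole sibling interval of the same length, so (since $|I^*|\ge 2^{k^2}$ at any non-base call) the time gap already exceeds $1$; I would make this precise by noting that by the splitting invariant every base-level task of a job lies in a single interval of $\calI^*_{k^2}$, and distinct such intervals for $j$ versus $j'$ differ by at least one slot. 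For point~(3): since $|\taskdis|\le\eps T$, I create \emph{three} fresh private slots per discarded task instead of one, schedule the task in the middle slot, and leave the two flanking slots empty on all machines; this enforces the delay clause for that task regardless of where its neighbours sit, and only multiplies the makespan blow-up by a constant, still $(1+O(\eps))T$.

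The main obstacle is point~(2), inserting the top jobs while respecting communication delays, both among top jobs and between a top job and a task already placed by conditioning. Here I would go back to the Levey–Rothvoss observation, extended to the delay setting: if the maximum chain length is $\le\eps T$, Graham's list-scheduling gives a $(1+2\eps)$-approximation even with unit communication delays (each precedence edge crossing a machine boundary costs at most one idle step, and the number of such steps along any chain is bounded by the chain length). Since $\recursive$ guarantees the chain length among top jobs is $\le k^2\delta T^*$, I would replace the EDF+ECT black box of Theorem~\ref{thm:edf} by a delay-aware variant: when computing $B_j$ (the earliest feasible start of top job $j$) I additionally require $B_j > C_{j'}+1$ for any predecessor $j'$ scheduled on a different machine, and when choosing the ECT machine I charge an extra idle slot per cross-machine predecessor. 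The counting arguments of Lemma~\ref{l:emptyslots} and Lemma~\ref{l:totalemptyslots} then give at most $O(\Delta(\jtop^*))$ extra idle slots per bottom interval, hence $O(p^2 m\,k^2\delta T^*)$ extra discarded top tasks, which by the choice of $\delta=\frac{\eps}{8k^2 m 2^{2k^2}\log T}$ stays within the $\frac{\eps}{4}\cdot\frac{T^*}{\log T}$ budget of Lemma~\ref{lem:topdiscardedtsks}. For the mixed case — $a_{p_j,j}$ placed by conditioning at the base and $a_{1,j'}$ a top task (or vice versa) — I would use the release/deadline windows $[r^*_j,d^*_j]$: by construction these windows already keep a top job strictly inside a sub-interval bounded away by a whole bottom interval from any conditioned task it precedes or succeeds, so the one-step delay is automatic once we shrink the window by one additional slot on each side, which discards only $O(m)$ further tasks per top job and is again within budget. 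Finally, assembling these three modifications and re-running the discarded-task accounting from the proof of Lemma~\ref{lem:critical} yields a valid schedule (in the sense of Definition~\ref{def:comvalid}) of makespan $(1+\eps)T$ in time $n^{O(r)}$, which is Theorem~\ref{thm:comm}.
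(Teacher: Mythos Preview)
Your overall architecture matches the paper's: reuse the no-delay recursion unchanged except at the three commit points. Point~(3) is exactly what the paper does. Point~(2) is in the right spirit, but your specific recipe---require $B_j>C_{j'}+1$ only for \emph{cross-machine} predecessors---has a circularity: when you compute $B_j$ you have not yet chosen a machine for $j$, so you cannot tell which predecessors are cross-machine. The paper's \textsf{EDF+ECT+COMM} avoids this by \emph{unconditionally} leaving the slots $B_j$ and $C_j$ empty and scheduling $j$ only inside $[B_j+1,C_j-1]$ (Observation~\ref{o:emtpyslots}); this costs at most two extra idle slots per job along any chain, so Observation~\ref{o:conservation} becomes ``at most $p'_j+2$ idle slots'' and the $2p^2m\Delta$ bound becomes $6p^2m\Delta$, still within budget. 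Once you make this uniform, your counting for top jobs and for the mixed top/conditioned case goes through.

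The genuine gap is point~(1), the cross-branch case. Adjacent base-level intervals in $\calI^*$ are \emph{contiguous}, not ``separated by a whole sibling interval of the same length''. If $a_{p_j,j}$ is placed by conditioning at the last slot of one base interval and $a_{1,j'}$ at the first slot of the next, on different machines, you get $\calS_\tim(a_{1,j'})=\calS_\tim(a_{p_j,j})+1$, which violates the delay clause of Definition~\ref{def:comvalid}. Your observation that ``distinct such intervals differ by at least one slot'' is true but buys nothing: you need a gap of at least two. The paper repairs this with a concrete extra step in $\commrecursive$: at the base of the recursion, after scheduling by conditioning, it discards every task placed at the \emph{last} time slot of $I^*$. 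This creates a one-slot buffer at every base-interval boundary; the number of such extra discards is $m$ per base interval and is absorbed into the $\eps T$ budget. Without this device (or an equivalent one) your argument for the cross-branch conditioning case fails.
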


Note that above theorem immediately implies Theorem  \ref{thm:delayresult}.

\medskip
\textbf{Remark:} We note that constraints \eqref{e:commconstraints} are not the strongest inequalities we can write for the communication delay constraints. 
We can indeed describe exactly the convex hull of all integral schedules restricted to the 2 jobs $j \prec j'$.  However, our algorithm uses the constraints only at the lowest level of recursion where it schedules jobs by conditioning. So any constraint that gives the correct set of integral solutions will be sufficient for our goal.

\subsection{Rounding Algorithm}
Towards proving Theorem \ref{thm:comm}, we first design an algorithm that only schedules a subset $A(J) \setminus \taskdis$ of tasks.

\begin{lemma}
	\label{lem:maincommunication}
	Let $T$ be the smallest value for which the Sherali-Adams lift of LP (\ref{e:taskschedule}-\ref{e:nonnegativity}, \ref{e:commconstraints}) to $r = (\log n)^{((m^2/\epsilon^2).\log \log n)}$ rounds has a feasible solution $\vecx$. Then there is a valid schedule $\calS : A(J) \setminus \taskdis \rightarrow [m] \times [T]$ with $|\taskdis| \leq \epsilon T$. 
\end{lemma}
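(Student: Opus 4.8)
The plan is to prove Lemma~\ref{lem:maincommunication} by establishing a communication-delay analogue of the Main Lemma (Lemma~\ref{lem:critical}): for every partial-scheduling instance $(I^*, J^*, \jspl^*, \sigma, x)$ with $x \in \SA(\calP, r)$ — where now $\calP = \calP(T)$ is the \emph{strengthened} LP given by \eqref{e:taskschedule}--\eqref{e:nonnegativity} together with \eqref{e:commconstraints} — a variant $\commrecursive$ of Algorithm~\ref{alg:recursive} returns a schedule that is valid in the stronger sense of Definition~\ref{def:comvalid} and discards at most $\frac{\epsilon}{2}\cdot\frac{\log|I^*|}{\log T}\cdot|I^*| + \frac{\epsilon}{2m}\cdot|A^*|$ tasks, still placing the tasks of every special job on $\sigma(j)$. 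Granting this, Lemma~\ref{lem:maincommunication} follows exactly as Lemma~\ref{lem:mainmakespan} followed from Lemma~\ref{lem:critical}: instantiate on $I^* = [T]$, $J^* = J$, $\jspl^* = \emptyset$ and use $|A(J)| \le mT$.

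The algorithm $\commrecursive$ keeps the five-step skeleton of $\recursive$ verbatim — reduce chain length among top jobs by conditioning, split special jobs, choose $\ell^*$ and partition into $\jtop^*/\jmid^*/\jbot^*$, recurse on each bottom interval, insert top jobs — run on the strengthened LP. Since \eqref{e:commconstraints} only shrinks the polytope, Steps 1--4 and the bound on discarded tasks are unaffected: Lemmas~\ref{lem:cutchains} and~\ref{lem:topdiscardedtsks} use only the $\SA$-conditioning machinery and the capacity/precedence/no-migration structure, never the specific constraints of $\calP$, so they and the telescoping argument of the proof of Lemma~\ref{lem:critical} carry over unchanged. Hence the whole burden is to certify that the output respects the communication-delay constraints of Definition~\ref{def:comvalid}. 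I would do this by a case analysis on where, for each kept precedence pair $j \prec j'$, the boundary tasks $a_{p_j,j}$ and $a_{1,j'}$ get committed to a (machine, slot):
\begin{itemize}
\item Both are placed by conditioning inside the \emph{same} base-case call ($|I^*| < 2^{k^2}$). Since the solution used there lies in $\SA(\calP, r)$ and conditioning preserves every linear inequality of $\calP$, the integral placement satisfies \eqref{e:commconstraints} for this pair, which is exactly the integral statement that if $a_{p_j,j}$ and $a_{1,j'}$ land on different machines then $\calS_\tim(a_{1,j'}) > \calS_\tim(a_{p_j,j}) + 1$.
\item Both are placed by conditioning, but in \emph{different} base-case calls. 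Then $j,j'$ were separated at some recursion level into distinct bottom intervals $\tilde I \ne \tilde I'$ in $\calI^*_{\ell^*}$, with $\tilde I$ entirely before $\tilde I'$; if $\tilde I,\tilde I'$ are non-adjacent, $\calS_\tim(a_{1,j'}) - \calS_\tim(a_{p_j,j}) \ge 2$ automatically. For adjacent $\tilde I,\tilde I'$ I add one preprocessing step to $\commrecursive$ at that level: for each boundary and each machine $i$, if some job owned by the left interval has a successor owned by the right interval and positive support on $(i,\text{end}(\tilde I))$, condition on one such event and move that job to $\jspl^*$. By \eqref{e:capconstraints} this costs $\le m$ conditionings per boundary, hence $O(m2^{k^2})$ per recursive call, absorbed by the factor-$\Theta(k)$ slack between the per-level growth of $|\jspl^*|$ and of the bound $\log\frac{T}{|I^*|}\cdot K$ in Definition~\ref{def:partial-scheduling}. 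After this step, \eqref{e:commconstraints} in the inherited solution forces $a_{1,j'}$ either onto the (now fixed) machine of $a_{p_j,j}$ or to a slot $\ge \calS_\tim(a_{p_j,j})+2$.
\item At least one of $a_{p_j,j}, a_{1,j'}$ belongs to a top job. Here I use Lemma~\ref{lem:cutchains} ($\Delta(\jtop^*) \le k^2\delta T^*$) together with a communication-delay-aware version of top-job insertion (a variant of Theorem~\ref{thm:edf}/Lemma~\ref{lem:topdiscardedtsks} combining EDF+ECT with the communication-delay extension of Graham's bound noted in Section~\ref{sec:ideas_comm_delays}); the one-bottom-interval buffer $[r_j, r^*_j]$ already present in the no-delay analysis supplies the extra slot needed against bottom and special jobs, while the tiny chain length keeps the extra discards within the $\frac{\epsilon}{4}\cdot\frac{T^*}{\log T}$ budget.
\item $a_{p_j,j}$ or $a_{1,j'}$ is discarded. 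The constraint is vacuous for the partial schedule, and when a discarded task is reinserted we give it three fresh consecutive slots and place it in the middle one, creating an empty slot on both sides regardless of the other endpoint; all delays to/from discarded tasks are then satisfied at an extra additive cost $3|\taskdis| \le 3\epsilon T$, folded into $\epsilon$ by rescaling.
\end{itemize}

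Finally, the general assumption $\max_{j\prec j'} c_{j,j'} = O(1)$ is handled by replacing ``one slot'' with ``$c$ slots'' in \eqref{e:commconstraints}, in the boundary preprocessing above, and in the reinsertion gap; every bound degrades by at most a constant factor, preserving the running time. I expect the second bullet — communication delays across recursive calls — to be the main obstacle: it is the one place where passing $x$ by value breaks the global constraint \eqref{e:commconstraints}, and making the ``pin the straddling jobs'' fix rigorous (verifying that $O(m)$ conditionings per boundary really remove all bad adjacent-interval pairs, and that the extra special jobs and conditionings stay within the budgets of Definition~\ref{def:partial-scheduling}) is the delicate part, where both the $O(1)$ bound on the delays and the laminar ordering of cross-call precedences are used.
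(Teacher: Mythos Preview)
Your overall plan --- prove a delay-aware Main Lemma and then instantiate on $[T]$ --- is exactly the paper's route, and your four-case validity analysis lines up with theirs. Two of your cases, however, diverge from the paper in ways worth flagging.

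\textbf{Cross-call delays (your second bullet).} The paper's fix is much simpler than your boundary preprocessing: in the base case of $\commrecursive$, after scheduling by conditioning, it \emph{discards every task placed in the last time slot of $I^*$}. This costs at most $m$ tasks per base-case interval and automatically creates a one-slot gap between any two adjacent base-case intervals, handling special jobs, bottom jobs, and all combinations uniformly, with no extra conditionings and no growth of $\jspl^*$. Your ``pin the straddling jobs'' idea can be made to work, but it spreads the fix across all recursion depths and requires bookkeeping you yourself flag as delicate; the paper sidesteps all of that.

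\textbf{Top jobs (your third bullet).} Here there is a genuine gap. The buffer $[r_j, r^*_j]$ does \emph{not} supply the extra slot: a bottom or special task $a \prec j$ can sit at time $r^*_j - 1 = \mathrm{end}(I_{u_j})$, while the no-delay EDF+ECT may start $j$ at $B_j = r^*_j$, giving $\calS_\tim(a_{1,j}) - \calS_\tim(a) = 1$, which violates the delay constraint when the machines differ. The paper's remedy is an explicit algorithmic change: \textsf{EDF+ECT+COMM} schedules $j$ only in $[B_j+1, C_j-1]$, leaving $B_j$ and $C_j$ empty (their Observation about $B_j,C_j$), which buys exactly the needed slot on both sides of every top job. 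The discard bound degrades only by a constant factor ($6p^2m\Delta(J)$ instead of $2p^2m\Delta(J)$), so Lemma~\ref{lem:topdiscardedtsks} still goes through with the same asymptotics. Relatedly, your earlier claim that Lemma~\ref{lem:topdiscardedtsks} ``carries over unchanged'' is too strong --- this is precisely the lemma that must be reproved with the modified insertion.
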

\medskip

Our algorithm to prove Lemma \ref{lem:maincommunication} uses $\recursive$ from the first part, but modifies it to satisfy the communication delay constraints.
Before we proceed with description of how we accomplish that, let us summarize the three places in which the algorithm for the no-delay problem actually makes the assignment of tasks to time slots.

\begin{enumerate}
	\item Scheduling of tasks by conditioning:  Consider the first line of $\recursive$ procedure. Here, if the length of the interval $|{I^*}| < 2^{k^2}$,  then the algorithm schedules tasks by conditioning on the LP solution. 
        \item Scheduling of top jobs: The second place where our algorithm assigns tasks to time slots is when inserting the top jobs in the last line of the procedure  $\recursive$. 	
        \item Scheduling of discarded tasks: Lastly, our algorithm assigns tasks to time slots when scheduling the discarded tasks. 
\end{enumerate}

The high level idea of how we deal with the communication delay constraints is the following:
The constraints (\ref{e:commconstraints}) of the LP guarantee that if tasks are scheduled by conditioning, then the communication delay constraints are satisfied.
For every discarded task, we simply create an empty time slot on either side of the time slot on which it is scheduled.
This guarantees that no matter how other jobs that have precedence relationship with it are scheduled, the communication delay constraints are satisfied.
Finally, when scheduling top jobs we will make use of the fact that the chain lengths among top jobs are small, hence the communication delay constraints do not pose a big problem.
We give a formal proof of validity of our schedule later in proof of Theorem \ref{thm:comm}.

We give the modified $\recursive$ algorithm for the delay problem is given in Algorithm \ref{alg:comrecursive}, which we call $\commrecursive$.
First we highlight the places where the two algorithms differ.
\begin{enumerate}
\item Consider the first line of $\commrecursive$. Here, if the length of the interval is sufficiently small, then our algorithm schedules the remaining tasks by conditioning the LP solution $x$. 
         Next, it also discards completely the set of the tasks scheduled in the last time slot of the interval $I^*$; there can be $m$ such tasks. 
         We will make use of this fact in the proof of Theorem \ref{lem:maincommunication} to show that the communication constraints are satisfied across all jobs. 
         Note that the length of the interval at the last level of recursion is at least $\Omega(\log n)$, hence, the number of such  discarded tasks across all the intervals is very small.
\item The main difference between  $\commrecursive$ and $\recursive$ is in scheduling the top jobs.  
For this step we give a new algorithm to insert top jobs such that communication delay constraints taken into account.
         \end{enumerate}

\begin{algorithm}[h]
	\caption{\textsf{PARTIAL-SCHEDULE-COMM}$\left({I^*}, J^*, \jspl^*, \sigma,  x\right)$}
	\label{alg:comrecursive}
	\textbf{Input:} a partial-scheduling instance $({I^*}, J^*, \jspl^*, \sigma, x)$ satisfying Definition \ref{def:partial-scheduling}\\
	\textbf{Output:} a schedule of $\calS^*: A(J^*) \cup A({I^*}, \jspl^*, x) \setminus \taskdis \rightarrow [m] \times I^*$ for some $\taskdis$
	\vspace*{-5pt}
	
	\noindent\rule{\linewidth}{0.2pt}
	\begin{algorithmic}[1]
		\State \textbf{if} $(|{I^*}| < 2^{k^2} )$ \textbf{then} 
		\State \quad schedule the tasks by conditioning on the LP solution $x$.
		\State \quad discard all the tasks scheduled at the last time slot $|I^*|$ of the interval $I^*$. 
		\State \quad return.
		
		\State \textbf{for} {every job $j \in \jspl^*$} \textbf{do}: $x \gets \mysplit(x,  j)$ \label{step:initial-condition-comm}
		\While{there exists ${{I}} \in \calI^*_{0} \cup \ldots  \cup \calI^*_{k^2-1}$ and a chain $\mathcal{C}$ of jobs owned by $I$ with total size at least $\delta|{{I}}|$} \label{step:reduce-chain-start-comm}
		\State $j \gets$ the first job in $\calC$, $a \gets $ last task of $j$, 
		\State take $(i, t)$ such that $x_{(a, i, t)} > 0$  with the largest $t$
		\State $x \gets $ $x$ conditioned on the event $(a, i, t)$
		\State $J^* \gets J^* \setminus \{j\}, \jspl^* \gets \jspl^* \cup \{j\}, \sigma(j) \gets i$
		\State $x \gets \mysplit(x, j)$
		\EndWhile \label{step:reduce-chain-end-comm}
		
		\State Partition the jobs in the set $J^*$ as follows:
		
		$ \jtop^* = \union^{\ell^*-k-1}_{\ell = 0} J^*_{\ell}(\vecx) $; 
		$ \jmid^* = \union^{\ell^*-1}_{\ell = \ell^*-k} J^*_{\ell}(\vecx) $;
		$ \jbot^* = \union^{\log T^*}_{\ell = \ell^*} J^*_{\ell}(\vecx) $,
		
		\State where $\ell^* \in \{k, \ldots, k^2\}$ is chosen satisfying the condition below:
		\begin{eqnarray*}
			|A(\jmid^*)| \leq \frac{\epsilon}{4} \cdot \frac{T^*}{\log T} + \frac{\epsilon}{2m} \cdot \left(|A(\jmid^*)| + |A(\jtop^*)| \right)
		\end{eqnarray*}
		
		\For{every interval ${I} \in \calI^*_{\ell^*}$} 
		\State  $\recursive\big({I}, \jbot^*({I},\vecx), \jspl^*, \sigma, \vecx\big)$
		\EndFor
		\State Insert $\jtop^*$ to ${I^*}$ using Lemma \ref{lem:commtopdiscardedtsks}.			
	\end{algorithmic}
\end{algorithm}	

From the pseudo-code of $\commrecursive$, it is clear that most of the lemmas proved for the precedence constrained scheduling directly extends to the communication delay.
Thus it only remains to argue that one can schedule the top jobs without discarding too many tasks even when there are communication delay constraints. 

\subsection{Scheduling the Top Jobs With Communication Delay Constraints}
\label{label:commtopjobs}
Now we give more details about $\commrecursive$ to schedule top jobs respecting the communication delay constraints.
Fix a recursive invocation of the procedure with the input instance  instance $(I^*, J^*, \jspl^*, \sigma, x)$.
We assume that the input satisfies Definition \ref{def:partial-scheduling} given in the first half of the paper.
As, the first 4 steps of $\commrecursive$ procedure remains exactly the same as  $\recursive$, we assume that we have a partial schedule of tasks belonging to the bottom jobs and the special jobs. 
Formally,

$$
 \calS: (A(\jbot^*) \cup A(I^*,\jspl,\vecx)) \setminus A_{\mathrm{bottom\hyphen discarded}} \rightarrow [m] \times I^*
$$ 

Let $\hat{A}:= (A(\jbot^*) \cup A(I^*,\jspl,\vecx)) \setminus A_{\mathrm{bottom\hyphen discarded}}$ for rest of this subsection. 
 We want to extend the schedule $\calS$ to $\calS^*$ which includes most of the tasks of the set $A(\jtop^*)$. 
 In particular, we want to prove the following lemma which is a counterpart to Lemma \ref{lem:topdiscardedtsks}, but with communication delay constraints.
 
 \begin{restatable}{lemma}{commtopdiscardedtsks}
	\label{lem:commtopdiscardedtsks}
	The valid schedule $\calS: \hat{A} \rightarrow I^*$ can be extended to a valid schedule 
	$$
	\calS^*: \left (\hat{A} \cup A(\jtop^*) \right) \setminus A_{\mathrm{top\hyphen discarded}}  \rightarrow [m] \times I^*
	$$
	such that $|A_{\mathrm{top\hyphen discarded}}| \leq \frac{\epsilon}{4} \cdot \frac{T^*}{\log T} $.
\end{restatable}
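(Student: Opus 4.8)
The plan is to reuse the two-stage proof of Lemma~\ref{lem:topdiscardedtsks} almost verbatim, patching each stage so that the communication-delay requirement of Definition~\ref{def:comvalid} also holds for all pairs inside $\hat{A}\cup A(\jtop^*)$; I describe the argument for $c=1$, the case $\max_{j\prec j'}c_{j,j'}=O(1)$ being identical up to constant factors. Two structural facts make this possible. First, every task of $\hat{A}$ is placed by $\calS$ entirely inside one interval of $\calI^*_{\ell^*}$ --- for bottom tasks by definition, for special tasks because of $\mysplit$ and $\ell^*\le k^2$ --- so, exactly as in Section~\ref{tentativeschedule}, a task of $\hat{A}$ preceding a top job $j$ lies in $I_1\cup\dots\cup I_{u_j}$, and a task that $j$ precedes lies in $I_{v_j}\cup\dots\cup I_p$. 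Second, by Lemma~\ref{lem:cutchains} the top-job chain length is tiny, $\Delta(\jtop^*)\le k^2\delta T^*$, and a top interval is at least $2^k$ times longer than a bottom interval, so we can afford to waste a constant number of extra slots per precedence edge along any chain of top jobs.

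For Stage~1 I would run the tentative-schedule construction of Section~\ref{tentativeschedule}, but on the shrunk windows $[r^*_j+1,\,d^*_j-1]$ instead of $[r^*_j,d^*_j]$. If $a\in\hat{A}$ and $a\prec j$, then $a$ is supported in $I_1\cup\dots\cup I_{u_j}$, hence $\calS_\tim(a)\le\text{end}(I_{u_j})=r^*_j-1$, so any schedule that starts $j$ at $r^*_j+1$ or later places the first task of $j$ at time $\ge r^*_j+1>\calS_\tim(a)+1$, meeting the delay constraint regardless of the machine assignment; the case $j\prec a$ is symmetric via $d^*_j-1$. Shrinking each window by one more slot per end amounts to dropping one additional bottom interval from each side, and since bottom intervals are a factor $2^{-k}$ shorter than top intervals this enlarges the discarded set by only a further $O(m2^{-k}T^*)$; so Lemma~\ref{lem:tentative}, with its Hall-type matching argument applied to the windows $[r^*_j+1,d^*_j-1]$, still yields a migratory extension $\calS'$ of $\calS$ with $|A^1_{\mathrm{top\hyphen discarded}}|=O(m2^{-k}T^*)$ satisfying capacity, the precedence constraints between $A(\jtop^*)$ and $\hat{A}$, and the communication-delay constraints between $A(\jtop^*)$ and $\hat{A}$, with every surviving top task inside its job's window.

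For Stage~2 I would change the \textsf{EDF+ECT} algorithm of Section~\ref{deadlinproblem} in exactly one place: in line~\ref{invariant3}, when searching for the earliest feasible start $B_j$ of job $j$, require $C_{j'}+1<t$ for every $j'\prec j$ rather than $C_{j'}<t$ --- that is, impose the communication delay pessimistically, as if $j$ and all its predecessors were on distinct machines. The output then automatically respects all delays within $A(\jtop^*)$, and the only effect on the analysis is that each ``active interval'' $[B_{j'},C_{j'}]$ behaves as if enlarged to $[B_{j'},C_{j'}+1]$. Re-running the analysis of Section~\ref{deadlinproblem} (instantiated, as in the proof of Lemma~\ref{lem:topdiscardedtsks}, with the top jobs) under this change: Observations~\ref{o:activejob} and~\ref{o:conservation} hold with $[B_{j'},C_{j'}]$ replaced by $[B_{j'},C_{j'}+1]$ where relevant, so in Lemmas~\ref{l:emptyslots} and~\ref{l:totalemptyslots} a blocked job $j^*$ is charged only for idle slots covered by a chain of pairwise-disjoint intervals $\bigcup_v[B_{j_v},C_{j_v}+1]$ of total length at most $\sum_v(p_{j_v}+1)\le 2\Delta(\jtop^*)$ (a chain has at most $\Delta(\jtop^*)$ jobs since each $p_j\ge 1$); hence each interval of $\calI^*_{\ell^*}$ gains at most $O(\Delta(\jtop^*))$ idle slots per machine, and the double-counting argument of Theorem~\ref{thm:edf} gives $|A^2_{\mathrm{top\hyphen discarded}}|=O(p^2m\,\Delta(\jtop^*))$. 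Substituting $p\le 2^{k^2}$ and $\Delta(\jtop^*)\le k^2\delta T^*$, and absorbing the extra $O(1)$ factors into the free $O(1)$ constant in $k=\frac{O(1)m}{\epsilon}\log\log T$ (equivalently, shrinking $\delta$ by a constant), gives $|A_{\mathrm{top\hyphen discarded}}|=|A^1_{\mathrm{top\hyphen discarded}}|+|A^2_{\mathrm{top\hyphen discarded}}|\le\frac{\epsilon}{4}\cdot\frac{T^*}{\log T}$, and $\calS^*$ satisfies Definition~\ref{def:comvalid} restricted to $\hat{A}\cup A(\jtop^*)$.

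The step I expect to be the main obstacle is Stage~2: the communication-delay constraint between two top jobs depends on a machine assignment that \textsf{EDF+ECT} has not yet committed to, so no naive greedy can enforce it while still controlling idle time. The pessimistic rule $C_{j'}+c<B_j$ resolves the coupling, but the real work is then to redo the idle-slot accounting of Lemmas~\ref{l:emptyslots} and~\ref{l:totalemptyslots} and of the proof of Theorem~\ref{thm:edf} and confirm it degrades by only an $O(1)$ factor --- which is exactly where the smallness of $\Delta(\jtop^*)$ from Lemma~\ref{lem:cutchains} is spent --- and, secondarily, to check that the one-slot inward shrinkage of the windows in Stage~1 does not break the matching/Hall counting behind Lemma~\ref{lem:tentative}; both go through precisely because the intervals of $\calI^*_{\ell^*}$ are exponentially shorter than the intervals owning the top jobs.
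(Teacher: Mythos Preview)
Your two-stage plan is correct and reaches the same bound, but the paper places the delay buffers differently. It leaves Stage~1 \emph{entirely unchanged} from the no-delay case --- Lemma~\ref{lem:commtentative} is literally Lemma~\ref{lem:tentative}, with windows $[r^*_j,d^*_j]$ --- and pushes all delay handling into a modified Stage~2 algorithm \textsf{EDF+ECT+COMM}: the rule defining $B_j$ is kept as $C_{j'}<t$, but the algorithm then refuses to place any task of $j$ at either $B_j$ or $C_j$, scheduling only inside $[B_j{+}1,C_j{-}1]$ (Observation~\ref{o:emtpyslots}). These two empty end-slots simultaneously absorb the top--top delay (any predecessor finishes strictly before $B_j$, any successor starts strictly after $C_j$) and the top--bottom delay (since $B_j\ge r^*_j$ and $C_j\le d^*_j$, while bottom predecessors sit at $\le r^*_j{-}1$ and bottom successors at $\ge d^*_j{+}1$); the idle-slot accounting then degrades by a constant (Observation~\ref{o:commconservation} gives $p'_j{+}2$ instead of $p'_j$, Lemma~\ref{l:commemptyslots} gives $3\Delta$ instead of $\Delta$, and Theorem~\ref{thm:commedf} gives $6p^2m\Delta$), absorbed by halving $\delta$. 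Your variant --- shrink the windows by one slot in Stage~1 and strengthen the $B_j$ rule to $C_{j'}+1<t$ in Stage~2 --- is equivalent in effect, but your Stage~1 justification is loose: shrinking by one \emph{slot} is not ``dropping one additional bottom interval,'' and the Hall-type matching behind Lemma~\ref{lem:tentative} is stated for interval-aligned windows, so you would either have to redo that argument for windows misaligned by one unit or (cleanly) shrink by a full bottom interval on each side, doubling the constant in $4m2^{-k}T^*$. The paper sidesteps this by not touching Stage~1 at all, which is the main simplification its route buys.
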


 \smallskip
 Our strategy to prove the above lemma follows the same framework as in the no-delay problem.
 In the first stage, we build a {\em tentative assignment} of tasks in  $A(\jtop^*)$ in the slots left open by $\calS$. 
 At this stage, we only make sure that capacity constraints (that is, only 1 task is scheduled in 1 time slot on a machine) and the precedence constraints between tasks in  $A(\jtop^*)$  and $\hat{A}$ are satisfied. 
 Both the communication delay constraints and the non-migratory constraints may be violated, and will be fixed in the second stage.
 During this step we discard some tasks from the set $A(\jtop^*)$, which we denote by $A^1_{\mathrm{top\hyphen discarded}}$.
We obtain this schedule by applying the tentative schedule algorithm from Section \ref{tentativeschedule} for the no-delay problem. 
Then we get the following lemma.

\begin{lemma}[\cite{LR16}]
\label{lem:commtentative}
	A feasible partial schedule $\calS: \hat{A} \rightarrow I^*$ of tasks in bottom jobs can be extended to a new schedule $\calS':  \left (\hat{A} \cup A(\jtop^*) \right) \setminus A^1_{\mathrm{top\hyphen discarded}} \rightarrow [m] \times I^*$ satisfying following properties: 
	\begin{enumerate}
		\item Consider $j \in \jtop^*$ and let $a \in A(j)$. Then in the schedule $\calS'$,  either $a$ is assigned in the interval $[r^*_j, d^*_j]$ or $ a \in A^1_{\mathrm{top\hyphen discarded}}$.
		\item The total number of discarded tasks $| A^1_{\mathrm{top\hyphen discarded}}| \leq 4m2^{-k}T^*$.
		\item The precedence constraints between tasks in $\left ( A(\jtop^*) \setminus A^1_{\mathrm{top\hyphen discarded}} \right)$ and $\hat{A}$ are respected.
		\item The capacity constraints are satisfied.
	\end{enumerate}
	\end{lemma}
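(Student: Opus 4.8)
The plan is to observe that Lemma~\ref{lem:commtentative} is, word for word, the same statement as Lemma~\ref{lem:tentative} for the no-delay problem, and that its proof carries over unchanged. The point is that the \emph{first} stage of top-job insertion never reasons about communication delays at all: it only needs to place the tasks of $A(\jtop^*)$ into the slots left free by $\calS$ so that (a) the capacity constraints hold and (b) every precedence constraint between a task of $A(\jtop^*)$ and a task of $\hat{A}$ is respected. Both requirements, and the mechanism used to enforce them, depend solely on the LP constraints \eqref{e:taskschedule}--\eqref{e:nonnegativity}, which are present in the delay LP as well; the extra constraint \eqref{e:commconstraints} and the communication-delay clause of Definition~\ref{def:comvalid} play no role here and are deferred to the second stage.

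Concretely, I would proceed exactly as in Section~\ref{tentativeschedule}. First, for each $j \in \jtop^*$ recover the smallest interval $[r_j,d_j]$ supporting all of $A(j)$ in $\vecx$, round it outward to the aligned interval $[r'_j,d'_j] = [\text{begin}(I_{u_j}),\text{end}(I_{v_j})]$ whose endpoints coincide with endpoints of the bottom intervals $I_1,\dots,I_p$, and then shrink it to $[r^*_j,d^*_j] = [\text{begin}(I_{u_j+1}),\text{end}(I_{v_j-1})]$ as in \eqref{e:release-dead}. As argued there, scheduling every task of $j$ inside $[r^*_j,d^*_j]$ automatically satisfies all precedence constraints between $A(\jtop^*)$ and $\hat{A}$, because every task of $\hat{A}$ has its $\vecx$-support confined to a single interval of $\calI^*_{\ell^*}$ (for bottom-job tasks by definition, for special-job tasks by the splitting procedure). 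Next, set up the fractional bipartite matching between tasks of $A(\jtop^*)$ and free slots lying in the windows $[r^*_j,d^*_j]$; the LP solution $\vecx$ certifies a fractional matching that saturates all but a few tasks, and the Hall-type counting bound from \cite[Section~5.1]{LR16} shows that the number $|A^1_{\mathrm{top\hyphen discarded}}|$ of unmatched (discarded) tasks is at most $4m2^{-k}T^*$, using that each $I \in \calI^*_{\ell^*}$ is shorter than any top interval by a factor of at least $2^k$ since $k$ middle levels were removed. Finally, take any integral matching (existence follows from integrality of the bipartite matching polytope): it yields the schedule $\calS'$ satisfying properties 1--4.

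Since every step above uses only \eqref{e:taskschedule}--\eqref{e:nonnegativity}, the statement follows verbatim from Lemma~\ref{lem:tentative}. There is no obstacle particular to the delay setting; the only thing to check --- and it is immediate from the description --- is that the argument never appeals to \eqref{e:commconstraints} or to the communication-delay clause of Definition~\ref{def:comvalid}, so all communication-delay reasoning is safely postponed to the second stage, where it will be handled using the fact (Lemma~\ref{lem:cutchains}) that the chain length among top jobs is small.
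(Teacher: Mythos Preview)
Your proposal is correct and matches the paper's own treatment: the paper does not give a separate proof of Lemma~\ref{lem:commtentative} but simply states that the tentative schedule is obtained ``by applying the tentative schedule algorithm from Section~\ref{tentativeschedule} for the no-delay problem,'' cites \cite{LR16}, and notes afterward that the communication-delay constraints are \emph{not} guaranteed at this stage. Your observation that the argument relies only on constraints \eqref{e:taskschedule}--\eqref{e:nonnegativity} and never on \eqref{e:commconstraints} is exactly the justification the paper implicitly relies on.
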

 
Note that above conditions do not guarantee that the communication delay constraints hold between jobs in $\left ( A(\jtop^*) \setminus A^1_{\mathrm{top\hyphen discarded}} \right)$ and $\hat{A}$.
Next,  we convert the tentative schedule into a valid partial schedule respecting all the precedence constraints and communication delay constraints.  
During this step our algorithm  discards some more tasks from the set $A(\jtop^*)$, which we denote by $A^2_{\mathrm{top\hyphen discarded}}$. 
Define $A^1_{\mathrm{top\hyphen discarded}}  = A^1_{\mathrm{top\hyphen discarded}} + A^2_{\mathrm{top\hyphen discarded}}$.
Our final schedule
 
$$\calS^*:  \left (\hat{A} \cup A(\jtop^*) \right) \setminus A_{\mathrm{top\hyphen discarded}} \rightarrow [m] \times I^*$$

satisfies the following guarantees:

\begin{itemize}
\item For every job $j \in \jtop^*$, all the non-discarded tasks are scheduled within $[r^*_j, d^*_j]$. 
\item The assignment of tasks in $\hat{A}$ remains same as that in the schedule $\calS'$.
\item The communication delay constraints are satisfied among all non-discarded job.
\end{itemize}

The first two invariants guarantee that the precedence constraints are satisfied among all non-discarded tasks. 
We will argue $\calS^*$ also satisfies the communication delay constraints, and hence is a valid partial schedule of non-discarded tasks satisfying Definition \ref{def:comvalid}.
Similar to what we did in the first half of the paper, we build schedule $\calS^*$  by first designing an algorithm for a new stand-alone scheduling problem, then using it as a black-box for scheduling top jobs.

\subsection{A Deadline Scheduling Problem with Precedence and Communication Delay Constraints}

We are given a set of jobs $J$ with processing lengths $p_j$, release times $r_j$, deadlines $d_j$. 
The jobs have precedence constraints and communication delay constraints.
The precedence constraints among jobs satisfy the property that  if $j \prec j'$, then $r_j \leq r_{j'}$ and  $d_j \leq d_{j'}$. 
Recall that we use $\Delta(J)$ to denote the maximum chain length in $J$. 
The time horizon $\{1, 2, \ldots, T\}$ is partitioned into $p$ equal sized intervals $I_1, I_2, \ldots, I_p$. 
The release times and deadlines of jobs correspond to the beginning and the end of the intervals. 
For each machine $i$, we are given a capacity function $\capa_i:[T] \rightarrow \{0, 1\}$. 
If $\capa_i(t) = 1$, then the time slot $t$ on machine $i$ is available to schedule a task in $A(J)$.  

Suppose there is a schedule of tasks $\calS':A(J) \rightarrow [m] \times [T]$ that assigns each task in $A(J)$ to a machine, time slot pair such that no two tasks are assigned to the same machine and the same time slot; that is, capacity constraints on machines are satisfied. 
Moreover, $\calS'$ ensures that for each job $j \in J$,  all its tasks $A(j)$ are scheduled within $[r_j, d_j]$. 
However, $\calS'$ may not respect the precedence and communication delay constraints in $J$, the schedule may not be non-migratory.  
Our goal is to schedule each $j \in J$ on a single $i$ such that the precedence and communication delay constraints among the jobs is satisfied, and the schedule is  non-migratory.
We prove the following theorem in this subsection.

\begin{theorem}
	\label{thm:commedf}
	There exists an algorithm that in polynomial time  converts the schedule $\calS'$ into a valid schedule $\sigma$ that satisfies the following properties:
	\begin{enumerate}
	\item It partially schedules every job on exactly one machine (no-migration). 
	For a job that is partially scheduled, we discard the remaining tasks;  
	For the sake of the precedence constraints, we assume that every partially scheduled job or a fully discarded job is completely processed. 
	\item The precedence constraints among the jobs is satisfied. 
	\item The communication delay constraints among the jobs is satisfied, as given in Definition \ref{def:comvalid}.
	\item The total number of discarded tasks is at most $6p^2m\Delta(J)$.
\end{enumerate}
\end{theorem}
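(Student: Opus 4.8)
The plan is to prove Theorem~\ref{thm:commedf} by running a communication-delay-aware variant of the \textsf{EDF+ECT} algorithm (Algorithm~\ref{alg:edf+ect}) used for Theorem~\ref{thm:edf}, changing essentially one line. Jobs are still processed in Earliest Deadline First order, and each job is still placed on its Earliest Completion Time machine, with partial scheduling and discarding handled exactly as before. The single modification is in the computation of the earliest feasible start slot $B_j$ on line~\ref{invariant3}: besides requiring an available machine slot in $[r_j, d_j]$ and $C_{j'} < t$ for every $j' \prec j$, we now require $t > C_{j'} + 1$ for every $j' \prec j$. In other words, after every predecessor we conservatively leave a one-slot communication gap, \emph{independently of} which machine $j$ will ultimately be placed on; this guarantees the delay condition of Definition~\ref{def:comvalid} regardless of the ECT choice.

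First I would verify the three correctness properties. No-migration is immediate since the algorithm assigns all non-discarded tasks of a job to one machine, and precedence holds exactly as in Theorem~\ref{thm:edf} because $B_j > C_{j'} + 1 > C_{j'}$ for every $j' \prec j$. For the communication-delay property, the key observation is that Definition~\ref{def:comvalid} only enforces the delay between a predecessor $j'$ and its successor $j$ (i.e.~$j' \prec j$) when both the last task of $j'$ and the first task of $j$ lie in the scheduled set; since a partially discarded job loses a \emph{suffix} of its task chain, in particular its last task, the constraint only needs to hold between pairs of \emph{fully} scheduled jobs. For such a pair the value $C_{j'}$ is the genuine completion time, and $\calS_\tim(a_{1,j}) \ge B_j > C_{j'} + 1 = \calS_\tim(a_{p_{j'},j'}) + 1$, which is exactly what Definition~\ref{def:comvalid} asks for.

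The remaining work is the counting bound, for which I would re-derive the structural facts behind Theorem~\ref{thm:edf} in the presence of the gaps. Observation~\ref{o:conservation} carries over verbatim: the ECT / max-capacity choice never ``sees'' the communication gaps, because $B_j$ is machine-independent, so on any machine other than the one running $j$ there are still at most $p'_j$ idle slots inside $[B_j, C_j]$, and on $j$'s own machine there are none (the job uses \emph{all} available slots in $[B_j, C_j]$). Observation~\ref{o:activejob} needs a small adjustment: if $t^*$ is idle and a job $j^*$ is available at $t^*$ with $B_{j^*} > t^*$, then by definition of $B_{j^*}$ some $j \prec j^*$ has $C_j \ge t^* - 1$, so $t^*$ lies in the ``extended active interval'' $[B_j, C_j + 1]$, and moreover $j$ is not on the same machine as $t^*$ unless $t^* = C_j + 1$ is precisely that job's gap slot. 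Re-running the induction of Lemma~\ref{l:emptyslots} with these extended intervals produces, on each machine, a precedence-chain $j_y \prec \cdots \prec j_1$ covering all idle slots of an interval $I \subseteq I_q$, where $j_v$ now contributes at most $p_{j_v}$ active-interval idle slots plus at most one gap slot; since $\sum_v p_{j_v} \le \Delta(J)$ and the chain has at most $\Delta(J)$ jobs, the budget becomes $\sum_v (p_{j_v} + 1) \le 2\Delta(J)$, and the analogue of Lemma~\ref{l:totalemptyslots} then gives at most $3\Delta(J)$ idle slots per machine per interval. Feeding this into the double-counting argument from the proof of Theorem~\ref{thm:edf}, otherwise unchanged, yields at most $3p^2 m \Delta(J) \le 6 p^2 m \Delta(J)$ discarded tasks.

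I expect the main obstacle to be exactly this last piece of bookkeeping: the conservative gaps break the clean property that every idle slot on a machine lies inside the active interval of a job \emph{not} running on that machine, since a gap slot $C_j + 1$ may well be idle on $j$'s own machine. I would handle this by treating each job's active interval as extended by its single gap slot and being careful that, when building the covering chain, a gap slot encountered on machine $i$ is still attributed to a chain job via the release-time / interval-alignment argument already used in Lemma~\ref{l:emptyslots} rather than left uncounted. A secondary point, outside the scope of Theorem~\ref{thm:commedf} itself but needed for its application in Lemma~\ref{lem:commtopdiscardedtsks}, is that the windows $[r^*_j, d^*_j]$ fed to this algorithm must be shrunk by one slot at each end so that communication delays between the top jobs and the already-fixed tasks of $\hat{A}$ are also respected; this costs only a negligible number of extra discarded tasks by the same slack argument behind Lemma~\ref{lem:commtentative}.
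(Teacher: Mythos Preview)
Your approach is correct and gives the stated bound, but it differs from the paper's in how the one-slot delay is absorbed. You modify the definition of $B_j$, requiring $t > C_{j'}+1$ for every predecessor $j'$, and then place the job exactly as in the no-delay algorithm; each chain job thus contributes at most $p_{j_v}+1$ idle slots to the covering argument, yielding a $2\Delta(J)$ bound in the analogue of Lemma~\ref{l:emptyslots} and $3p^2m\Delta(J)$ discarded tasks overall. The paper instead keeps $B_j$ unchanged (still $C_{j'}<t$) but schedules $j$ only in the strict interior $[B_j+1,C_j-1]$, deliberately leaving the slots $B_j$ and $C_j$ empty (Observation~\ref{o:emtpyslots}); this costs two extra idle slots per chain job, so Observation~\ref{o:commconservation} gives $p'_j+2$, Lemma~\ref{l:commemptyslots} gives $3\Delta(J)$, and the final count is $6p^2m\Delta(J)$. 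Your variant is thus slightly tighter and conceptually just as clean. The one thing the paper's mechanism buys, and which you correctly flag in your last paragraph as needing a separate fix, is that the empty end-slots also automatically enforce the delay between a top job and any task in $\hat A$ in the proof of Theorem~\ref{thm:comm}; with your variant that has to be handled by shrinking the windows $[r^*_j,d^*_j]$, as you propose.
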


We prove the above theorem by extending the procedure $\textsf{EDF+ECT}$ described in Algorithm \ref{alg:edf+ect} to satisfy the communication delay constraints.
We give the pseudocode in Algorithm \ref{alg:edf+ect+comm}, and refer to the procedure as \textsf{EDF+ECT+COMM}.

\begin{algorithm}[h]
		\caption{\textsf{EDF+ECT+COMM}}
		\label{alg:edf+ect+comm}
		\textbf{Input:} A set of jobs $J$ with release times, deadlines and precedence constraints; capacity function $\capa_i: [T] \rightarrow \{0, 1 \}$ for each machine $i$.\\
		\textbf{Output:} Schedule $\calS: A(J) \rightarrow [m] \times ([T])$  such that for $a \sim a'$, either $a$ or $a'$ belongs to $\taskdis$ or $\calS_\mac(a) = \calS_\mac(a')$.
		\vspace*{-5pt}
		
		\noindent\rule{\linewidth}{0.5pt}
		
	\begin{algorithmic}[1]
		\State Sort the jobs in $J$ in the increasing order of their deadlines. Reindex the jobs so that $J:= \{1,2 \ldots, n\}$ and $d_1 \leq d_2 \leq  \ldots \leq d_n$. 
		\State Initialize $\taskdis = \emptyset$.

	\For{ $j = 1$ to $n$}
		\State Find the earliest time slot $t \in [T]$ such that following conditions hold:
			i) $\capa_i(t) = 1$ for some machine $i \in [m]$ and  $r_j \leq t \leq d_j$;
			ii) $C_{j'} < t$ for all $j' \prec j$.   
			
		\State  If no such $t$ exists, then set $B_j := \drop$ and add all tasks $A(j)$ to the set $\taskdis$.
		\State  If there is a $t$ satisfying the conditions above, set $B_j = t$ and do the following. \label{comm:bj}
		\State  Find the set of machines $M' \subset [m]$ such that $\forall i \in M'$, $\capa_i(B_j + 1, d_j - 1) \geq p_j$.
	
		\If {$|M'| = 0$}
			\State $i^* = \displaystyle \max_{i} \{ \capa_i(B_j + 1, d_j -1) \}.$ \label{invariant2}
			\State Schedule $\capa_i(B_j + 1, d_j - 1)$ tasks of the job $j$ in the interval $[B_j + 1, d_j -1]$ on the machine $i^*$. \label{comm:assign}
			\State Set $C_j = d_j$. Add the remaining $p_j - \capa_i(B_j + 1, d_j - 1)$ tasks to the set $\taskdis$. \label{comm:1dj}
			\State Update the capacity function $\capa_{i^*}$ for the machine $i^*$.
		\EndIf
			
		\If {$|M'| \geq 1$}
			\State Find the earliest time slot $t^* \leq d_j - 1$ such that there exists a machine $i^* \in M'$  and $\capa_{i}(B_j + 1, t^*) = p_j$. \label{invariant1comm}
			\State Set $C_j = t^*+ 1$. Schedule the tasks $A(j)$ in the interval $[B_j + 1, C_j - 1]$ on the machine $i^*$ and update $\calS$. \label{comm:2dj}
			\State Update the capacity function $\capa_{i^*}$ for the machine $i^*$.
		\EndIf
	\EndFor
	\State \textbf{return} $\calS$.
	\end{algorithmic}
\end{algorithm} 

To prove how our algorithm satisfies the communication delay constraints, we need the following simple observation.

\begin{observation}
\label{o:emtpyslots}
	Fix a job $j \in J$ for which $B_j \neq \drop$. Consider the active interval $[B_j, C_j]$ as set by the procedure {\textsf{EDF+ECT+COMM}}. Then, no task of job $j$ is scheduled at the time slots $B_j$ and $C_j$.
\end{observation}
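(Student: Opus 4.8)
The plan is to read the claim straight off the pseudocode of \textsf{EDF+ECT+COMM} (Algorithm~\ref{alg:edf+ect+comm}), since the algorithm is engineered precisely to leave a one-slot buffer on each side of every job's active interval. Fix a job $j$ with $B_j \neq \drop$. Then $B_j$ is assigned on line~\ref{comm:bj}, and immediately afterwards exactly one of the two branches is executed, depending on whether $M' = \emptyset$ or $|M'| \geq 1$. In the branch $|M'| = 0$, the algorithm schedules tasks of $j$ only inside the interval $[B_j + 1, d_j - 1]$ (line~\ref{comm:assign}) and then sets $C_j = d_j$ (line~\ref{comm:1dj}); hence every task of $j$ that gets placed lies in $[B_j + 1, C_j - 1]$. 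In the branch $|M'| \geq 1$, the algorithm picks an earliest $t^* \leq d_j - 1$ with the required capacity (line~\ref{invariant1comm}), sets $C_j = t^* + 1$, and schedules the tasks of $j$ only in the interval $[B_j + 1, C_j - 1]$ (line~\ref{comm:2dj}). In either case, the tasks of $j$ occupy only time slots strictly between $B_j$ and $C_j$.

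Since each job is processed exactly once in the \textbf{for} loop, the interval $[B_j + 1, C_j - 1]$ is the only range of time slots into which tasks of $j$ are ever written, so no task of $j$ is assigned to the slot $B_j$ or to the slot $C_j$, which is exactly the statement. There is essentially no obstacle here: the only point that requires a moment of care is to confirm that $C_j$ is set only in the two lines indicated (and to the values $d_j$ and $t^*+1$ respectively), so that $[B_j+1,C_j-1]$ really is the full extent of the portion of $j$ placed by the algorithm; this is immediate from inspection of the pseudocode.
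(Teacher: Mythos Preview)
Your proof is correct and follows essentially the same approach as the paper: both arguments simply inspect the pseudocode of \textsf{EDF+ECT+COMM}, note that in each branch the tasks of $j$ are placed only in $[B_j+1, C_j-1]$ (lines~\ref{comm:assign} and~\ref{comm:2dj}) while $C_j$ is set to $d_j$ or $t^*+1$ respectively, and conclude that neither $B_j$ nor $C_j$ ever receives a task of $j$.
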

\begin{proof}
See Figure \ref{fig:comtop} for a proof by pictures.
Consider the definition of $B_j$ in line \ref{comm:bj} of the algorithm.
This is the first time slot at which job $j$ can be scheduled while respecting the precedence constraints; that is, for every $j' \prec j$, $C_{j'} < B_j$.
Similarly, we define $C_j$ as either $d_j$ or the earliest time slot $t^*+1$ such that there are $p_j$ empty time slots on some machine $i$ in the interval $[B_j, t^*]$; See the lines \ref{comm:2dj} and \ref{comm:1dj}.
From lines \ref{comm:assign} and \ref{comm:2dj}, it is clear that our algorithm does not schedule any task of job $j$ at the time steps $B_j$ and $C_j$.
\end{proof}

\begin{figure}[!ht]
	\centering
	\includegraphics[width=1.0\textwidth]{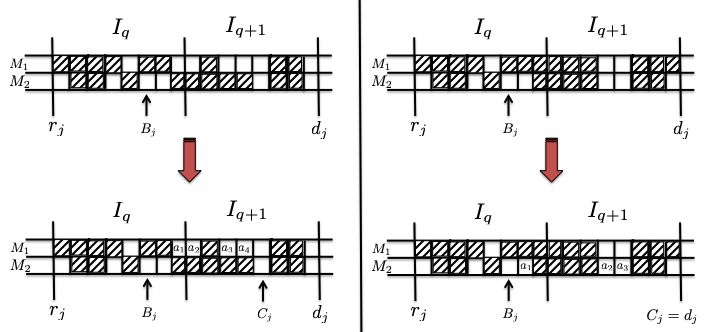} 	
	\caption{\label{fig:comtop} The figure illustrates inserting a job $j$ with $p_j =4$ on two machines. On the left, the job is fully scheduled in the interval $[B_j, C_j]$ on machine $M_1$. On the right, the task $a_4$ is discarded but the remaining tasks are scheduled on $M_2$.  In the both cases, notice that no task of $j$ is scheduled either at $B_j$ or at $C_j$.}
\end{figure}
 
We will argue in the proof of Theorem \ref{thm:comm} that above observation immediately implies that the communication delay constraints are satisfied for a pair of jobs $j$ and $j'$ if either one of them happen to be a top job.  

For now, we focus on arguing that {\textsf{EDF+ECT+COMM}} did not discard too many tasks.
Interestingly, we show that the total number of tasks discarded by our algorithm to enforce communication delay constraints is only factor 5 more than {\textsf{EDF+ECT}}.
The intuition is that every time that was left empty to satisfy a communication delay constraint should also account for the decrease in chain length.
As the chain length among top jobs is small, we argue that wasted time slots is also small.
We now give more details about {\textsf{EDF+ECT+COMM}}.

We call a time slot $t$ on machine $i$ as {\em idle} if $\capa_i(t) = 1$ and our schedule $\calS$ does not assign any task at time $t$ on machine $i$. 
The idle time slots correspond to the number of tasks we discard, and our goal going forward is to show that there are not too many idle time slots in $\calS$. 
The following observations are needed for proving Theorem \ref{thm:edf}.

\begin{observation}
\label{o:commactivejob}
	Fix an interval $I_q$ for some $q \in [p]$. Suppose the following two conditions hold:
	\begin{itemize}
	\item  There is a time slot $t^* \in I_q$ that is idle on machine $i$ in $\calS$.
	\item  There exists a job $j^*$ with $t^* \in [r_{j^*}, d_{j^*}]$ and $B_{j^*} > t^*$ or all the tasks $ A(j^*) \in \taskdis$.
	\end{itemize}
	 Then there exists a job $j$ such that $t^* \in [B_{j}, C_{j}]$ and $j \prec j^*$.
\end{observation}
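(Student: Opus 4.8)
The plan is to mimic the proof of the no-delay analogue, Observation~\ref{o:activejob}, arguing by contradiction, and to phrase the argument as an induction over the jobs in the EDF order processed by \textsf{EDF+ECT+COMM} (equivalently, induction on the height of $j^*$ in the precedence order). So suppose that no job $j \prec j^*$ satisfies $t^* \in [B_j, C_j]$. The first step is to show that in fact \emph{every} predecessor $j'$ of $j^*$ completes strictly before $t^*$, i.e.\ $C_{j'} < t^*$. Indeed, if some $j' \prec j^*$ had $C_{j'} \ge t^*$, then, since our assumption rules out $t^* \in [B_{j'}, C_{j'}]$, we must have $B_{j'} > t^*$; but then, using $r_{j'} \le r_{j^*} \le t^*$ on one side and $t^* \le C_{j'} \le d_{j'}$ on the other (the inequality $C_{j'} \le d_{j'}$ holds for scheduled, partially discarded, and—under the convention $C_{j'} = d_{j'}$ for a fully discarded job—also discarded predecessors), the job $j'$ itself satisfies the hypotheses of the observation. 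The inductive hypothesis then yields a job $j \prec j'$ with $t^* \in [B_j, C_j]$, and $j \prec j^*$ by transitivity, contradicting our assumption. Hence $C_{j'} < t^*$ for all $j' \prec j^*$.

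Given that, when \textsf{EDF+ECT+COMM} processes $j^*$, condition~(ii) in its choice of $B_{j^*}$ is met at time $t^*$, and condition~(i) is met there as well since $t^*$ is idle on machine $i$ and $t^* \in [r_{j^*}, d_{j^*}]$; therefore the algorithm sets $B_{j^*} \le t^*$, which already contradicts the first alternative $B_{j^*} > t^*$ of the hypothesis. To contradict the second alternative ``all of $A(j^*)$ discarded'', I would combine this with Observation~\ref{o:emtpyslots} (no task of $j^*$ sits on $B_{j^*}$ or $C_{j^*}$) and a short case analysis of the two branches $|M'| = 0$ and $|M'| \ge 1$: since $t^*$ is still available at the moment $j^*$ is considered and $t^*$ would lie in the placement window $[B_{j^*}+1, d_{j^*}-1]$, the window cannot be completely full, so at least one task of $j^*$ is scheduled—contradiction.

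The hard part will be the communication-delay buffer. Unlike \textsf{EDF+ECT}, here $j^*$ is placed in the \emph{shrunk} window $[B_{j^*}+1, d_{j^*}-1]$, so the degenerate situations where $t^*$ coincides with a buffer slot—$t^* = B_{j^*}$, $t^* = d_{j^*}$, or $t^*$ lying in $[r_{j^*}, B_{j^*})$—do not fall directly under ``$t^*$ lies in the placement window'', and must be handled separately. For $t^* \in [r_{j^*}, B_{j^*})$ I would observe that, since $t^*$ is idle yet $B_{j^*}$ skipped past it, condition~(ii) must have failed at $t^*$, i.e.\ some predecessor $j'$ has $C_{j'} \ge t^*$, which re-enters the predecessor-chain recursion of the first paragraph; for $t^* \in \{B_{j^*}, d_{j^*}\}$ I would appeal to Observation~\ref{o:emtpyslots} and to the fact that these buffer slots are created precisely to absorb communication delay, again routing the argument to the chain of predecessors ending at $j^*$. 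A secondary point to pin down carefully is the convention for the completion time $C_{j'}$ of a fully discarded predecessor (we take $C_{j'} = d_{j'}$), since this is exactly what makes the inequality $t^* \le C_{j'} \le d_{j'}$—needed to invoke the inductive hypothesis on $j'$—go through.
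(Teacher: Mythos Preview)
Your first paragraph is correct, and in fact more careful than the paper's own proof, which is literally one sentence: ``if no such $j$ exists, then when the algorithm considers $j^*$ it would set $B_{j^*} \le t^*$.'' The paper glosses over exactly the point you make explicit, namely that ``no predecessor's active interval contains $t^*$'' $\Rightarrow$ ``every predecessor finishes before $t^*$'' requires the descent argument you sketch.

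The second and third paragraphs are unnecessary. The phrase ``all the tasks $A(j^*) \in \taskdis$'' is intended as a synonym for $B_{j^*} = \drop$; the paper makes this explicit in Lemma~\ref{l:commemptyslots}, the only place the observation is invoked. Under that reading, once $t^*$ satisfies both conditions (i) and (ii) in the choice of $B_{j^*}$, you immediately get $B_{j^*} \ne \drop$ and $B_{j^*} \le t^*$, contradicting both alternatives at once. The observation is purely about how $B_{j^*}$ is selected; the shrunk placement window $[B_{j^*}{+}1, d_{j^*}{-}1]$ and Observation~\ref{o:emtpyslots} play no role here. (Indeed, under the literal reading --- $B_{j^*} \ne \drop$ but the placement window has zero capacity --- the observation would actually fail for a job with no predecessors, so the ``hard part'' you anticipate cannot be patched; it simply isn't the intended statement.)

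One small correction to your first paragraph: for a fully discarded predecessor $j'$ the paper's convention is that $j'$ never blocks a successor (``we assume that every \ldots\ fully discarded job is completely processed''), so condition (ii) is vacuously satisfied for such $j'$. The right convention is effectively $C_{j'} = -\infty$, not $C_{j'} = d_{j'}$. This only simplifies your induction, since fully discarded predecessors drop out of the recursion entirely.
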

\begin{proof}
Proof of the above observation follows by the fact that if no such job $j$ exists, then when our algorithm considers the job $j^*$ then it would set $B_j = t^*$; refer to line 6 in  {\textsf{EDF+ECT+COMM}}.  
\end{proof}

\begin{observation}
\label{o:commconservation}
	Consider a job $j \in J$ with $B_j \neq \drop$ that is active in the interval $[B_j, C_j]$. Let $p'_j$ be the total number of tasks scheduled in $[B_j, C_j]$ on machine $i$. Then for any other machine $i' \neq i$, the number of idle time slots in the interval $[B_j, C_j]$ is at most $p'_j + 2$.
\end{observation}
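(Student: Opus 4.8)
\textbf{Proof plan for Observation~\ref{o:commconservation}.}
The statement is the communication-delay analogue of Observation~\ref{o:conservation}, so the plan is to mimic that proof while accounting for the two extra empty slots that the procedure \textsf{EDF+ECT+COMM} deliberately leaves at the two ends of each active interval (Observation~\ref{o:emtpyslots}). Recall that $j$ is scheduled on machine $i$, that its $p'_j$ tasks occupy slots strictly inside $[B_j,C_j]$, and that $B_j$ and $C_j$ carry no task of $j$. We consider a second machine $i'\neq i$ and bound the number of idle slots of $\capa_{i'}$ inside $[B_j,C_j]$.

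First I would split into the two cases used by the algorithm. \emph{Case 1: $|M'|\geq 1$ and $j$ is fully scheduled.} Then $C_j = t^*+1$ is the earliest time such that some machine in $M'$ accumulates $p_j$ free slots inside $[B_j+1,\cdot]$. By the earliest-completion-time choice, for every machine $i'$ the number of free slots of $\capa_{i'}$ inside the open interval $(B_j,C_j)=[B_j+1,C_j-1]$ is at most $p_j = p'_j$ (otherwise $i'$ would have reached $p_j$ strictly before $t^*$, contradicting minimality of $t^*$). Adding back the two endpoint slots $B_j$ and $C_j$, which may be idle on $i'$, gives at most $p'_j+2$ idle slots of $\capa_{i'}$ in $[B_j,C_j]$, as claimed. \emph{Case 2: $|M'|=0$ and $j$ is partially scheduled.} Then $C_j=d_j$ and $i^*=i$ maximizes $\capa_{i}(B_j+1,d_j-1)$, so $p'_j=\capa_i(B_j+1,d_j-1)\geq \capa_{i'}(B_j+1,d_j-1)$ for every $i'$; hence the open interval $(B_j,C_j)$ contributes at most $p'_j$ idle slots on $i'$, and again the two endpoints $B_j,C_j=d_j$ add at most $2$ more, for a total of at most $p'_j+2$.

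The only subtlety — and the step I expect to require the most care — is making sure the counting on machine $i'$ really is against the \emph{same} capacity function the algorithm used when it chose $i^*$, i.e.\ that no task assigned to $i'$ later in the EDF order can retroactively invalidate the inequality. This is handled exactly as in Observation~\ref{o:conservation}: because jobs are processed in EDF order and $d_{i'\text{-job}}\ge d_j$ only for jobs considered after $j$, any task placed on $i'$ inside $[B_j,C_j]$ by a later job only \emph{decreases} the free-slot count there, so the bound "$\le p'_j$ free slots in $(B_j,C_j)$ at the moment $j$ was scheduled" persists to the final schedule. Combining the two cases and the endpoint adjustment yields the claimed bound of $p'_j+2$, completing the proof.
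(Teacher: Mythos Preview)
Your proposal is correct and follows essentially the same approach as the paper: a two-case split on whether $j$ was fully or partially scheduled, using the ECT rule in the first case and the argmax rule in the second, then adding back the two endpoint slots $B_j$ and $C_j$ that \textsf{EDF+ECT+COMM} deliberately leaves empty. Your extra paragraph about later jobs only decreasing the idle-slot count on $i'$ is a nice explicit justification of something the paper leaves implicit; one minor point is that in Case~1 your ``contradicting minimality of $t^*$'' argument only literally applies when $i'\in M'$, but for $i'\notin M'$ the bound holds trivially since $\capa_{i'}(B_j+1,d_j-1)<p_j$, so the conclusion is unaffected.
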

\begin{proof}
See Figure \ref{fig:comtop} for a proof by pictures.
Consider the case when $p'_j = p_j$. 
In this case, the lemma follows from the observation that our algorithm assigns jobs to machines on which they will have earliest completion time; See line \ref{invariant1comm} of the algorithm.
Now consider the case when $p'_j \neq p_j$. 
As job $j$ was scheduled on machine $i$, from line \label{invariant2}, it follows that $i$ had the maximum number of empty slots in the interval $[B_j+1, d_j-1]$, which is at most $p'_j$.
Therefore, the maximum number of empty slots on any machine in the interval $[B_j, d_j]$ is at most $p'_j+2$. 
Since in this case we set $C_j = d_j$, we complete the lemma.
\end{proof}

We will use the above facts to argue that the number of idle time slots on any machine is small.

\begin{lemma}
\label{l:commemptyslots}
	Consider any arbitrary time  interval  $I := \{t', \ldots, t''\} \subseteq I_q$ for some $q \in [p]$. Suppose there is at least one job $j^*$ with $I\subseteq [r_{j^*}, d_{j^*}]$ and $B_{j^*} > t''$ or all the tasks $ A(j^*) \in \taskdis$ ($B_j = \drop$). Then, for any machine $i \in [m]$ the number of idle time slots in $I$ is at most 3$\Delta(J)$.
\end{lemma}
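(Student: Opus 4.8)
The plan is to adapt the proof of Lemma~\ref{l:emptyslots} essentially verbatim, substituting the communication-delay versions of its two structural ingredients — Observation~\ref{o:commactivejob} in place of Observation~\ref{o:activejob}, and Observation~\ref{o:commconservation} in place of Observation~\ref{o:conservation} — and then absorbing the extra idle slots that each active job now contributes because \textsf{EDF+ECT+COMM} keeps its endpoint slots empty.

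First I would fix a machine $i$; if it has no idle slot in $I$ we are done, so let $t^*$ be the last idle slot on $i$ inside $I$. Since $I\subseteq[r_{j^*},d_{j^*}]$ we have $t^*\in[r_{j^*},d_{j^*}]$, and by hypothesis either $B_{j^*}>t''\ge t^*$ or $B_{j^*}=\drop$, so Observation~\ref{o:commactivejob} gives a job $j_1\prec j^*$ active at $t^*$, i.e.\ $t^*\in[B_{j_1},C_{j_1}]$. Then I walk left exactly as in Lemma~\ref{l:emptyslots}: let $t_1$ be the last idle slot of $i$ in $I$ strictly before $B_{j_1}$ (stop if none exists). Using $r_{j_1}\le r_{j^*}\le t'\le t_1$ and $t_1<B_{j_1}\le d_{j_1}$ one checks $t_1\in[r_{j_1},d_{j_1}]$ with $B_{j_1}>t_1$, so Observation~\ref{o:commactivejob} produces $j_2\prec j_1$ with $t_1\in[B_{j_2},C_{j_2}]$, and $C_{j_2}<B_{j_1}$ since line~\ref{comm:bj} of \textsf{EDF+ECT+COMM} only sets $B_{j_1}$ after all predecessors of $j_1$ complete. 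Iterating until no idle slot in $I$ remains to the left yields a precedence chain $j_y\prec\cdots\prec j_1$ of distinct jobs with pairwise disjoint active intervals, and $\bigcup_v[B_{j_v},C_{j_v}]$ covers every idle slot of $i$ in $I$ — between two consecutive active intervals there is no idle slot of $i$, because $t_v$ was chosen as the last one before $B_{j_v}$ and $t_v\le C_{j_{v+1}}<B_{j_v}$.

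It then remains to count, and here the only change from the no-delay case appears: I bound the number of idle slots of $i$ inside each $[B_{j_v},C_{j_v}]$ by $p_{j_v}+2$. If $j_v$ is scheduled on a machine other than $i$, this is Observation~\ref{o:commconservation}; if $j_v$ is scheduled on $i$ itself, then by Observation~\ref{o:emtpyslots} only the two endpoints $B_{j_v}$ and $C_{j_v}$ can be idle on $i$, so at most $2\le p_{j_v}+2$ slots. Summing over the chain, the number of idle slots of $i$ in $I$ is at most $\sum_{v=1}^{y}(p_{j_v}+2)=\big(\sum_{v=1}^{y}p_{j_v}\big)+2y\le\Delta(J)+2\Delta(J)=3\Delta(J)$, where I used that $j_1,\dots,j_y$ lie on a single precedence chain (hence $\sum_v p_{j_v}\le\Delta(J)$) and that each $p_{j_v}\ge1$ (hence $y\le\sum_v p_{j_v}\le\Delta(J)$).

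The main obstacle is really just this last bookkeeping: the two ``buffer'' slots $B_{j_v},C_{j_v}$ that the algorithm leaves empty for each active job must be shown to cost only an \emph{additive} $2y$ rather than a multiplicative factor, and the reason is precisely that the active intervals $[B_{j_v},C_{j_v}]$ are pairwise disjoint and the jobs $j_1,\dots,j_y$ form a single precedence chain, so $y$ itself is bounded by $\Delta(J)$. A secondary point to handle carefully is the case where the machine $i$ being analysed coincides with the machine executing some $j_v$, which Observation~\ref{o:commconservation} does not cover directly and which is dispatched using Observation~\ref{o:emtpyslots}.
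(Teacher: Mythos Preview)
Your proposal is correct and follows essentially the same argument as the paper: build a precedence chain $j_y\prec\cdots\prec j_1$ via Observation~\ref{o:commactivejob} whose disjoint active intervals cover all idle slots of $i$ in $I$, then bound the idle slots in each $[B_{j_v},C_{j_v}]$ by $p_{j_v}+2$ and sum to get $\sum_v p_{j_v}+2y\le 3\Delta(J)$. You are in fact slightly more careful than the paper in one place: Observation~\ref{o:commconservation} is stated only for machines $i'\neq i^*$, and unlike the no-delay Observation~\ref{o:activejob}, Observation~\ref{o:commactivejob} does not assert that $j_v$ avoids machine $i$, so your explicit handling of the case $j_v$ is scheduled on $i$ via Observation~\ref{o:emtpyslots} closes a small gap that the paper's write-up leaves implicit.
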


Before we proceed with the proof, contrast the above lemma with  Lemma \ref{l:emptyslots}; the number of idle slots increases in communication delay case increases by a factor of 3.

\begin{proof}
	We prove this by showing a contradiction that if there are more than $3\Delta(J)$ idle time slots on a machine, then the maximum chain length among jobs in  $J$ is more than $\Delta(J)$. 
	Consider a machine $i$ with $3\Delta(J) + 1$ idle time slots in the interval $I$. 
	Let $t^* \in I$ be the latest time slot on machine $i$ that is idle. 
	Since $j^*$ is available at $t^*$, by Observation \ref{o:commactivejob}, it must be the case that there exists a  job $j_1 \prec j^*$ that is active at time $t^*$, which implies $t^* \in [B_{j_1}, C_{j_1}]$. 
	Now consider the latest time slot $t' < B_{j_1}$ that is empty on machine $i$. 
	We claim that $j_1$ is available for processing at time $t'$.
	This follows from our assumptions that if $j_1 \prec j^*$, then $r_{j_1} \leq r_{j^*}$ and the release times and the deadlines of $J$ align with the beginnings and the endings of the intervals.
	Therefore, there must be a job $j_2 \prec j_1$ such that $t' \in [B_{j_2}, C_{j_2}]$. 
	Moreover, $C_{j_2} < B_{j_1}$ as  $j_2 \prec j_1$. 
	We continue by induction to construct a chain of jobs $j_y \prec j_{y-1} \prec  \ldots \prec j_1 \prec j^*$ such that $[B_y, C_y] \cup [B_{y-1}, C_{y-1}] \cup \ldots \cup [B_{j_1}, C_{j_1}]$ covers all the empty slots on machine $i$ in the interval $I$. Furthermore, for any two intervals $I', I'' \in \left\{ [B_y, C_y], [B_{y-1}, C_{y-1}], \ldots ,  [B_{j_1}, C_{j_1}] \right\}$, $I' \cap I'' = \emptyset$.
	The total processing lengths of the jobs in the chain $j_y \prec j_{y-1} \prec  \ldots \prec j_1 \prec j^*$, $\sum^{y}_{v = 1}p_{v}$, is at most $\Delta(J)$, since the maximum chain length among $J$ is at most $\Delta(J)$. By Observation \ref{o:commconservation},  for any job $j_v$ belonging to the chain, there can be at most $p_{j_v} + 2$ empty slots on the machine $i$ in the interval $[B_{j_v}, C_{j_v}]$. 
	Therefore, $\sum^{y}_{v = 1}\text{Idle}_i(B_{j_v}, C_{j_v}) \leq \sum^{y}_{v = 1}(p_{v} + 2) \leq 3 \Delta(J)$. 
	However, $[B_y, C_y] \cup [B_{y-1}, C_{y-1}] \cup \ldots \cup [B_{j_1}, C_{j_1}]$ covers all the empty slots on machine $i$, which is a contradiction.
\end{proof}

The above lemma implies the following useful corollary.

\begin{corollary}
\label{c:commreverseempty}
Suppose there is an interval $I_q$ and a machine $i$ with more than $3\Delta(J)$ idle time slots. If there is a job $j^*$ such that $B_{j^*} \in I_{q+1} \cup I_{q+2} \cup \ldots \cup I_p$, then the release time of job $r_{j^*} \in I_{q+1} \cup I_{q+2} \cup \ldots \cup I_p$.
\end{corollary}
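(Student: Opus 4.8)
The plan is to mirror the proof of the no-delay analog, Corollary~\ref{c:reverseempty}, replacing the appeal to Lemma~\ref{l:emptyslots} with an appeal to its communication-delay counterpart Lemma~\ref{l:commemptyslots}. I would argue by contradiction: suppose that $r_{j^*} \notin I_{q+1} \cup I_{q+2} \cup \ldots \cup I_p$. Since all release times coincide with the beginnings of the intervals $I_1, \ldots, I_p$, this means $r_{j^*} = \text{begin}(I_{q'})$ for some $q' \le q$, and in particular $r_{j^*} \le \text{begin}(I_q)$.

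Next I would verify that the job $j^*$ and the interval $I := I_q$ satisfy the hypotheses of Lemma~\ref{l:commemptyslots}. The inclusion $I_q \subseteq [r_{j^*}, d_{j^*}]$ follows from $r_{j^*} \le \text{begin}(I_q)$ together with $d_{j^*} \ge B_{j^*} \ge \text{begin}(I_{q+1}) > \text{end}(I_q)$, where the middle inequality uses the assumption $B_{j^*} \in I_{q+1} \cup \ldots \cup I_p$ and the first uses $B_{j^*} \le d_{j^*}$. The same assumption gives that $B_{j^*}$ is strictly to the right of every slot of $I_q$, so writing $t'' := \text{end}(I_q)$ we have $B_{j^*} > t''$, which is exactly the second hypothesis of the lemma. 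Applying Lemma~\ref{l:commemptyslots} then yields that machine $i$ has at most $3\Delta(J)$ idle time slots in $I_q$, contradicting the assumption that it has more than $3\Delta(J)$ of them. Hence $r_{j^*} \in I_{q+1} \cup \ldots \cup I_p$, as claimed.

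I do not expect any real obstacle here: the argument is essentially a direct instantiation of Lemma~\ref{l:commemptyslots}. The only point requiring a little care is checking that $j^*$ genuinely meets the lemma's hypotheses — namely that $I_q$ lies inside $[r_{j^*}, d_{j^*}]$ and that $B_{j^*}$ exceeds every slot of $I_q$ — and this in turn rests on the standing convention that release times and deadlines align with the interval boundaries together with the trivial bound $B_{j^*} \le d_{j^*}$ (when $B_{j^*} \neq \drop$; the case $A(j^*) \subseteq \taskdis$ is handled by the alternative hypothesis of Lemma~\ref{l:commemptyslots} and does not arise here since $B_{j^*}$ is assumed to be an actual slot).
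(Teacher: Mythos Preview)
Your proposal is correct and follows essentially the same approach as the paper: assume for contradiction that $r_{j^*}$ lies in some $I_{q'}$ with $q' \le q$, then invoke Lemma~\ref{l:commemptyslots} on $I_q$ and $j^*$ to obtain the contradiction. Your write-up is in fact more careful than the paper's, which simply asserts that the lemma applies; your explicit verification that $I_q \subseteq [r_{j^*}, d_{j^*}]$ via $B_{j^*} \le d_{j^*}$ and that $B_{j^*} > \text{end}(I_q)$ fills in the details the paper omits.
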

\begin{proof}
For contradiction, let us assume that the release time of $j^*$ belongs to $I_{q'}$, where $q' < q+1$.
Recall that all jobs are released at the beginning of the intervals.
Now, we invoke the previous lemma on the interval $I_q$ and the job $j^*$, which gives us a necessary contradiction.
\end{proof}

The next lemma shows that even if a job is partially discarded then the number of idle time slots on a machine cannot be too much. 

\begin{lemma}
\label{l:commtotalemptyslots}
	Consider any arbitrary time  interval  $I := \{t', \ldots, t''\} \subseteq I_q$ for some $q \in [p]$. Suppose there is at least one job $j^*$ with $I\subseteq [r_{j^*}, d_{j^*}]$ and $B_{j^*} > t'$ or a subset of the tasks $ A(j^*) \in \taskdis$.
	Then, for any machine $i$ the number of idle time slots in $I$ is at most $6\Delta(J)$.
\end{lemma}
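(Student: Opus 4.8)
The plan is to mirror the proof of Lemma~\ref{l:commtotalemptyslots}'s no-delay counterpart (Lemma~\ref{l:totalemptyslots}), but using the communication-delay versions of the two key building blocks: Observation~\ref{o:commconservation} (which gives $p'_j+2$ idle slots rather than $p'_j$) and Lemma~\ref{l:commemptyslots} (which gives $3\Delta(J)$ rather than $\Delta(J)$). First I would split into cases according to the position of $B_{j^*}$ relative to the interval $I = \{t',\dots,t''\}$. If $B_{j^*} > t''$, then the whole of $I$ lies to the left of $B_{j^*}$, and we are in the situation handled directly by Lemma~\ref{l:commemptyslots} applied to $I$ and $j^*$, which already bounds the number of idle slots on any machine by $3\Delta(J) \le 6\Delta(J)$, so this case is immediate. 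Hence we may assume $B_{j^*} \le t''$, and since the hypothesis guarantees $B_{j^*} > t'$ or a subset of $A(j^*)$ is discarded, in the remaining case $j^*$ is actually (partially or fully) run with $B_{j^*} \le t''$ and some of its tasks are discarded, which forces $C_{j^*} = d_{j^*}$.

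The core of the argument is then to cover $I$ by two subintervals and bound idle slots on each. Since $j^*$ is partially discarded we have $C_{j^*} = d_{j^*} \ge t''$, so the right part $I \cap [B_{j^*}, t''] \subseteq [B_{j^*}, C_{j^*}]$ is an active interval of $j^*$; by Observation~\ref{o:commconservation}, on any machine $i' \ne \calS_\mac(j^*)$ it contains at most $p'_{j^*} + 2 \le \Delta(J) + 2$ idle slots, and on the machine where $j^*$ was scheduled there are none in $[B_{j^*},C_{j^*}]$ beyond what $j^*$ itself doesn't fill — but since $j^*$ was partially discarded there, the relevant slots are used, so at most $\Delta(J)+2 \le 3\Delta(J)$ idle slots (assuming $\Delta(J) \ge 1$, which we may). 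For the left part: if $B_{j^*} \le t'$ we are already done, since $I \subseteq [B_{j^*}, t'']$. Otherwise $B_{j^*} > t'$, and I would apply Lemma~\ref{l:commemptyslots} to the interval $[t', B_{j^*}-1]$ and the job $j^*$ (which is available nowhere before $B_{j^*}$, so the hypothesis $B_{j^*} > t''$ of that lemma, read for the subinterval $[t',B_{j^*}-1]$, holds), getting at most $3\Delta(J)$ idle slots there. Combining, the number of idle slots in $I$ on any machine is at most $3\Delta(J) + 3\Delta(J) = 6\Delta(J)$.

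One technical point I would be careful about: the hypothesis of Lemma~\ref{l:commtotalemptyslots} as stated is slightly weaker than its no-delay analogue (it says ``$B_{j^*} > t'$ or a subset of $A(j^*) \in \taskdis$'' rather than ``$A(j^*)\cap\taskdis \ne \emptyset$''), so I would first reconcile the exact logical form of the hypothesis with the case split above, possibly strengthening the statement to the cleaner ``$A(j^*)\cap\taskdis \ne\emptyset$'' if that is what is actually needed downstream (it is used in the proof of Theorem~\ref{thm:commedf} in the same way Lemma~\ref{l:totalemptyslots} is used in Theorem~\ref{thm:edf}). I would also double-check the edge behavior when $\Delta(J) = 0$ (no jobs, trivial) and when $j^*$ is fully discarded ($B_{j^*} = \drop$), in which case the first case of the proof (via Lemma~\ref{l:commemptyslots}) applies directly with no active interval to worry about.

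The main obstacle — though it is a mild one — is getting the ``$+2$'' slack from the communication-delay buffers to propagate cleanly through the covering argument without blowing up the constant: one must verify that the two empty guard slots per active interval, summed over the whole chain inside $[B_{j^*}, C_{j^*}]$, are already absorbed by the $3\Delta(J)$ bound of Lemma~\ref{l:commemptyslots} (they are, since that lemma's proof already accounts for the $p_{j_v}+2$ per link via $\sum (p_{j_v}+2) \le 3\Delta(J)$), and hence that combining a left piece and a right piece yields $6\Delta(J)$ and not something larger. The rest is a routine adaptation of the no-delay proof.
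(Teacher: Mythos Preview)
Your proposal is correct and follows essentially the same approach as the paper: split on whether $B_{j^*} > t''$ (in which case Lemma~\ref{l:commemptyslots} gives $3\Delta(J)$ directly), and otherwise decompose $I$ into $[t',B_{j^*}-1]$ (bounded by $3\Delta(J)$ via Lemma~\ref{l:commemptyslots}) and $[B_{j^*},t'']\subseteq[B_{j^*},C_{j^*}]$ (bounded by $p'_{j^*}+2\le\Delta(J)+2$ via Observation~\ref{o:commconservation}), summing to at most $6\Delta(J)$. The paper's proof is exactly this, writing the final sum as $3\Delta(J)+\Delta(J)+2\le 6\Delta(J)$ rather than loosening the right piece to $3\Delta(J)$ first; your extra care with the edge cases ($B_{j^*}\le t'$, $B_{j^*}=\drop$, and the slightly different hypothesis wording) is warranted and does not appear explicitly in the paper.
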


\begin{proof}
If $B_{j^*} \geq t''$, then the proof follows from Lemma \ref{l:commemptyslots}. 
Therefore, $B_{j^*} \leq t''$ and some tasks of $j^*$ got discarded. 
By Observation \ref{o:commconservation}, in the interval $[B_j, t'']$ there cannot be more than $p_{j^*} + 2 \leq \Delta(J) + 2$ idle slots on any machine $i$. 
By applying Lemma \ref{l:commemptyslots} to the interval $[t', B_{j^*}-1]$ and job $j^*$, we conclude that in the interval  $[t', B_{j^*}-1]$ there can be at most $3\Delta(J)$ idle time slots. 
Therefore, there can be at most $3\Delta(J) + \Delta(J) + 2 \leq 6\Delta(J)$ idle slots on any machine $i$.
\end{proof}

With above two observations, it is easy to prove Theorem \ref{thm:commedf}. 
For brevity, we use $\calS^{-1}(I)$ to denote the set of tasks scheduled in the interval $I$ in $\calS$.

\begin{proof}[Proof of Theorem \ref{thm:commedf}]

Our algorithm guarantees that in the schedule $\calS$ all the tasks $A(j)$ of a job $j \in J$ are assigned to a single machine and the precedence constraints among the jobs is satisfied. 
Moreover, communication delay constraints are satisfied as our algorithm while scheduling a top job $j$ always leaves the
first slot in the active interval $[B_j, C_j]$ empty.
Therefore, it only remains to show that the number of discarded tasks $|\taskdis| \leq 6p^2m\Delta(J)$, which now readily follows from the proof of Theorem \ref{thm:edf} from the first part.
\end{proof}
\medskip

Now we are ready to prove Lemma \ref{lem:commtopdiscardedtsks}, which bounds the total number tasks discarded in inserting top jobs.

\begin{proof}
We obtain the schedule $\calS^*$ by applying Theorem \ref{thm:commedf} on the jobs in the set $\jtop^*$. 
For every job $j \in \jtop^*$, we define the truncated job length $p'_j$ by taking into account the discarded tasks in $A^1_{\mathrm{top\hyphen discarded}}$. 
We define the release time  of  $j$ as $r^*_j$ and the deadline as $d^*_j$, where $r^*_j, d^*_j$ are defined in Lemma \ref{lem:commtentative}  
For each machine $i$,  $\capa_i(t) = 1$ if time $t$ on machine $i$ is not assigned any task in the schedule $\calS$. Recall that $\calS$ gives an assignment of subset of tasks in the bottom jobs and the special jobs ($\hat{A}$) in the interval $I^*$. 

The maximum chain length of jobs in $\jtop^*$ is at most $k^2 \delta T^*$, where $T^*$ is the length of interval $I^*$. 
This follows by applying Lemma \ref{lem:cutchains} for the no-delay problem to our setting.
Therefore, by Theorem \ref{thm:commedf}, the total number of tasks discarded in converting the tentative schedule into an actual schedule is given by
$$
|A^2_{\mathrm{top\hyphen discarded}}| \leq 5p^2 m \cdot k^2 \delta T^*.
$$

By Lemma \ref{lem:commtentative},

$$
|A^1_{\mathrm{top\hyphen discarded}}| \leq 4m 2^{-k} T^*.
$$

Therefore, 
\begin{eqnarray*}
|A_{\mathrm{top\hyphen discarded}}| &=& |A^1_{\mathrm{top\hyphen discarded}}| + |A^2_{\mathrm{top\hyphen discarded}}|  \\
&\leq& 5p^2 m \cdot k^2 \delta T^* + 4m 2^{-k} T^*   \\
&\leq& \frac{\epsilon}{4} \cdot \frac{T^*}{\log T} 
\end{eqnarray*}

The last inequality follows from substituting  $p = 2^{\ell^*} \leq 2^{k^2}$, $k = \frac{O(1) m}{\epsilon} \log \log T$, and we set $\delta = \frac{\epsilon}{16k^2m2^{2k^2} \log T}$.

\end{proof}

\subsection{Proof of Theorem \ref{thm:comm}}
First we note that guarantee of Lemma \ref{lem:commtopdiscardedtsks} is exactly same as the guarantee of Lemma \ref{lem:topdiscardedtsks} for the no-delay problem.
As rest of the steps of $\commrecursive$ remains exactly same as $\recursive$, all the lemmas we proved for $\recursive$ procedure also directly extend to the $\commrecursive$.
Hence it is not hard to see that a proof of Lemma  \ref{lem:maincommunication} follows by repeating the arguments in the proof of Lemma \ref{lem:mainmakespan} for the no-delay problem. 
We omit the proof as the details are fairly straightforward.

\medskip
Now we have all the ingredients to prove our main result for the delay problem. 

\begin{proof} [Proof of Theorem \ref{thm:comm}.]
Set $\epsilon' = \epsilon/3$. 
By Lemma \ref{lem:maincommunication}, there is a partial schedule for $\calS$ for $A(J) \setminus \taskdis$ of makespan $[T]$ with $|\taskdis| \leq \epsilon'T$.
We extend $\calS$ to a valid schedule $\calS^*$ for  $A(J)$ with makespan $[T + 3|\taskdis|]$ as follows. 
We give a procedure to insert one discarded task such that all the precedence and communication constraints are satisfied.
The final schedule is constructed by repeating this procedure for each discarded task.
Consider a discarded task $a$. 
Let $t$ be the earliest time step in $\calS$  where $a$ can be scheduled respecting the precedence constraints.
Now, create three new private slots at $t, t+1$ and $t+2$ for  the task $a$. 
Schedule the task $a$ at the time step $t+1$ on the machine $i$ such that non-migratory constraints are satisfied.
The makespan increases by an additive factor of 3 for every discarded task, and hence the total increase in makespan is  $3|\taskdis|$.
As $|\taskdis| \leq \epsilon T/3  $, we conclude that makespan of the our schedule is at most $(1+\epsilon)T$.

It is easy to see that the precedence constraints and non-migratory constraints are satisfied by our schedule.
It remains to argue about the communication delay constraints.

Fix any two jobs $j$ and $j'$ such that $j \prec j'$. Recall that $a_{p_j, j}$  denotes the last task of job $j$ and $a_{1,j'}$ denotes the first task of job $j'$.
Let $t, t'$ be the time slots at which  the tasks $a_{p_j, j}$ and $a_{1,j'}$ are scheduled by our algorithm.
We will argue that if $a_{p_j, j}$ and  $a_{1,j'}$ are scheduled on two different machines then, $t' > t + 1$.
We consider the following cases.

\begin{itemize}
\item Our algorithm discarded either $a_{p_j, j}$ or $a_{1,j'}$. Let us, without loss of generality, assume that $a_{1,j'}$ was discarded. 
From our description of the algorithm to schedule discarded jobs, it follows that the time slot $t'-1$ was empty. Therefore, $t' > t + 1$.

\item Our algorithm scheduled either $a_{p_j, j}$ or $a_{1,j'}$ using the procedure {\textsf{EDF+ECT+COMM}}.  
Consider the case when $a_{1,j'}$ was in the set of top jobs. 
From Observation \ref{o:emtpyslots}, it follows that $t' \geq B_{j'} +1$, where $B_{j'}$ denotes the earliest time step when the job $j'$ can be scheduled respecting the precedence constraints. 
Note that $B_{j'} > t$. 
Hence, $t' \geq B_{j'} +1 > t+1$.
On the other hand, consider the case when $a_{p_j, j}$ was scheduled using {\textsf{EDF+ECT+COMM}}. 
Again from  Observation \ref{o:emtpyslots}, it follows $t < C_j \leq d_j$, as no task of job $j$ is scheduled at the time step $d_j$.
Further,  due to the precedence constraints $t' > d_j$. Thus, $t' > t + 1$.

Observe that in these cases we simply assumed  the worst scenario that the jobs are scheduled on different machines by our algorithm.

\item Both $a_{p_j, j}$ and $a_{1,j'}$ was scheduled by our algorithm by conditioning on the LP solution in the first step of $\commrecursive$. 
We consider two cases. 
Suppose both $a_{p_j, j}$ and $a_{1,j'}$  belonged to the same interval $I^*$, and were scheduled on two different machines. In this case, $t' > t + 1$ follows by the LP constraints (\ref{e:commconstraints}).
Consider the second case where $a_{p_j, j}$ is scheduled by conditioning in the interval $I^*$ and $a_{1, j}$ is scheduled by conditioning in a different interval $I^{**}$.
Clearly, $I^*$ has to be to the left of  $I^{**}$.
Now consider the first line of procedure $\commrecursive$. 
Here, we completely discard all the tasks scheduled in the last time slot in the interval $I^*$.
If there was at least one such discarded task, it is clear that $t' > t + 1$ as every discarded task creates one empty time slot to the right of it.
If no tasks were discarded, then it implies that in last time slot of the interval $I^*$ was completely empty.
In this case also, $t' > t + 1$. 

Therefore, we conclude that our algorithm satisfies communication delay constraints.
\end{itemize}
\end{proof}

Thus to prove our second main Theorem, we only need to argue that proof of Theorem \ref{thm:comm} also extends when.
\begin{proof} [Proof of Theorem \ref{thm:delayresult}]
Define $\beta = \max_{j \prec j'} \{ c_{j,j'} \}$.
We need the following changes to complete our proof.
\begin{itemize}
	\item The communication constraints in our LP become: 
		$$x_{(a_{1,j'}, i, t + 1)} + \sum_{i' \in [m]_{-i}} \sum^{t}_{t' = t-\beta} x_{(a_{p_{j},j}, i', t')}  \leq 1 \quad \forall j \prec j', i \in [m], t \in [T-1]$$ 
	\item In the first line of procedure $\commrecursive$, we completely discard all the tasks scheduled in the last $\beta$ time slots in the interval $I^*$.
	\item We modify the schedule obtained by running \textsf{EDF+ECT+COMM} on the set of top jobs as follows: For every top job $j$, we discard first $\beta-1$ and the last $\beta-1$ tasks scheduled in the interval $[B_j, C_j]$.
	\item While inserting a discarded task $a$, we create $2\beta + 1$ time slots, and insert the task in the middle.
\end{itemize}
By repeating the proofs for the case of $\beta = 1$, it is easy to see that the total number of discarded tasks increases by a factor of $O(\beta)$.	
So, by appropriately choosing $\epsilon' = \epsilon/O(\beta)$, and running our entire algorithm by fixing $\epsilon'$ we can show that the makespan of the schedule is at most $(1+\epsilon)$T. 
It is also easy to see that all the communication delay constraints and the precedence constraints are also satisfied by this strategy.
This completes the proof.
\end{proof}

	\newcommand{\tprec}{\mathrm{prec}}
\newcommand{\minUo}{1|r_j, d_j|\sum_j p_j U'_j}

\section{Integrality Gap Instance For Sherali-Adams Hierarchy for $P2|\tprec|C_{\max}$?}
\label{sec:integrality}

In this section, we give some evidence that an $o(\log n)$-level Sherali-Adams lift of the basic LP for $P2|\tprec|C_{\max}$ may not lead to a $(1+\epsilon)$-approximation for non-preemptive precedence constraints problems with arbitrary job lengths. Due to a certain technical difficulty, which will become clear later, we do not quite prove this exact statement;  instead, we introduce a new scheduling problem that is equivalent to a special case of $P2|\tprec|C_{\max}$, and show an integrality gap result for the new problem. However, we believe  the instance we construct is the right one for proving an integrality gap
for $P2|\tprec|C_{\max}$.

\subsection{A Scheduling Problem on Single Machine}
	In our problem, we have a single machine and $n$ jobs $J$, each $j \in J$ having a size $p_j \in \Z_{> 0}$, a release time $r_j \in \Z_{\geq 0}$ and a deadline $d_j \geq r_j+p_j$. There are no precedence constraints among jobs, but they have to be processed during their respective $(r_j, d_j]$ windows. 
	
	In the problem, any job $j \in J$ can be ``partially processed''. More specifically, we can choose a length $p'_j \in [0, p_j]$ for any $j \in J$ and process $j$ non-preemptively only for $p'_j$ units of time in $(r_j, d_j]$.  The objective we consider is then to minimize $\sum_{j\in J}(p_j - p'_j)$, the total job units that are not processed (or ``discarded'').
	We shall use $\sum_j p_jU'_j$ to denote this  objective of minimizing $\sum_{j\in J}(p_j - p'_j)$, where $U'_j = \frac{p_j-p'_j}{p_j}$ is the fraction of job $j$ that is unfinished.\footnote{In the literature, $U_j$ indicates if $j$ is not scheduled; so we use $U'_j \in [0, 1]$ to indicating the fraction of $j$ that is not processed.}   We denote the problem by $\minUo$, and use the tuple $(J, p, r, d)$ to denote an instance of this problem.

Suppose we are given an instance $\calI = (J, p, r, d)$ of $\minUo$. Let $T = \sum_{j \in J}p_j$ and we assume $T \geq \max_{j \in J}d_j$.  We construct an equivalent  instance $\calI'$ of $P2|\tprec|C_{\max}$ as follows. The jobs in $\calI'$ will be $J \cup J^{\chain}$, where $J^{\chain} = \{j^\chain_1, j^\chain_2, \cdots, j^\chain_T\}$ is a set of $T$ unit-length jobs.    For simplicity we define the lengths, release times and deadlines of $J^\chain$ as follows: for every $t \in [T]$, we have $p_{j^\chain_t}=1$ and $r_{j^\chain_t}=t-1$ and $d_{r^\chain_t} = t$. Then the precedence constraints are defined as follows: for every $j, j' \in J \cup J^\chain$, $j \prec j'$ if and only if $d_j \leq r_{j'}$. In particular, this implies $j^\chain_1 \prec j^\chain_2 \prec \cdots \prec j^\chain_T$.

Our goal for $\calI'$ is to schedule the precedence-constrained jobs $J \cup J^\chain$ non-preemptively on 2 machines so as to minimize the makespan. We use $\calI' = (J \cup J^\chain, p, \prec)$ to denote this instance of $P2|\tprec|C_{\max}$. Ideally, we would like to use  one machine to process $J$, and the other one to process $J^\chain$. Then the time window constraints in $\calI$ will correspond to  the precedence constraints in $\calI'$.  By discarding and inserting job units, a good schedule for $\calI$ can be converted to a good schedule for $\calI'$ and vice versa.  This technique was crucially used the in \cite{LR16} for the makespan minimization problem on unit-length jobs.

Formally, the following lemma establishes  the equivalence between the two instances $\calI$ and $\calI'$:
\begin{lemma}
	Let $\calI = (J, p, r, d)$ be an instance of $\minUo$, $T := \sum_{ j \in J}p_j$ and assume $T \geq \max_{j \in J}d_j$. Let $\calI' = (J \cup J^\chain, p, \prec)$ be the instance of $P2|\tprec|C_{\max}$ constructed using the above procedure. Let $\delta \geq 0$ be any constant. Then,
	\begin{itemize}
		\item Any solution to $\calI$ with cost $\delta T$ can be efficiently converted to a solution to $\calI'$ with makespan at most $(1+\delta) T$.
		\item Any solution to $\calI'$ with makespan $(1+\delta) T$ can be efficiently converted to a solution to $\calI$ with cost at most $2\delta T$. 
	\end{itemize}
\end{lemma}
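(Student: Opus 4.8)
The plan is to establish each direction by an explicit conversion, treating the two machines of $\calI'$ as being (essentially) dedicated to $J$ and to $J^\chain$ respectively.

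\emph{From $\calI$ to $\calI'$.} Suppose we are given a solution to $\calI$ that partially processes each $j \in J$ for $p'_j$ units in $(r_j, d_j]$, with $\sum_j (p_j - p'_j) = \delta T$. First I would place this schedule on machine $1$ of $\calI'$; it occupies a subset of the slots in $(0, T]$. Each job $j \in J$ that is only partially processed has $p_j - p'_j$ of its units missing, which we reinsert by creating, for each missing unit, a fresh time slot: pick the earliest time $t$ consistent with the precedence constraints of that unit, insert a new slot at $t$ on both machines, and shift everything after $t$ by one. Over all jobs this lengthens the horizon by at most $\sum_j (p_j - p'_j) = \delta T$, so machine $1$ now has makespan at most $(1+\delta)T$. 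On machine $2$ we place the chain $j^\chain_1 \prec \cdots \prec j^\chain_T$ in slots $1, 2, \ldots, T$, and then extend it arbitrarily (or pad with idle time) to match the horizon; since each $j^\chain_t$ has window $(t-1, t]$ it sits in its only feasible slot. It remains to check the precedence constraints $j \prec j'$ defined by $d_j \le r_{j'}$: in the original $\calI$-solution $j$ finished by $d_j$ and $j'$ started after $r_{j'} \ge d_j$, and the reinsertion shifts only preserve this ordering, so all constraints of $\calI'$ are satisfied. This gives makespan $(1+\delta)T$.

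\emph{From $\calI'$ to $\calI$.} Now suppose we have a non-preemptive schedule of $J \cup J^\chain$ on $2$ machines with makespan $(1+\delta)T$. The key observation is that the $T$ unit jobs $J^\chain$ form a chain of total length $T$, so they already force makespan $\ge T$ and, more importantly, at each integer time $t \in [T]$ exactly one of the two machines is busy with $j^\chain_t$ (since $j^\chain_t$ must lie in its window $(t-1,t]$, which is a single slot). Hence during $(0,T]$, at every unit slot at most one machine is available for jobs of $J$. I would read off the tentative processing of each $j \in J$ from whichever machine-time pairs it occupies within its window $(r_j, d_j]$; a job may be scheduled outside $(0,T]$ or (in principle) split across the two machines by the $\calI'$-schedule, and any unit of a $J$-job placed outside its $(r_j,d_j]$ window, or making the assignment non-contiguous, gets discarded. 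The total number of slots in the horizon is $2(1+\delta)T$; the chain $J^\chain$ and the required $T$ units would need $2T$ of them even in the ideal case, and the $2\delta T$ extra slots plus the slack bound how much of $J$ can fail to be scheduled feasibly — a counting argument identical in spirit to the discarded-task accounting in \cite{LR16} yields that at most $2\delta T$ units of $J$ are discarded, giving an $\minUo$ solution of cost at most $2\delta T$.

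\emph{Main obstacle.} The routine direction is $\calI \to \calI'$; the delicate direction is $\calI' \to \calI$, and within it the subtle point is that an adversarial optimal schedule for $\calI'$ is \emph{not} obliged to dedicate one machine to $J$ and one to $J^\chain$ — it may interleave $J$-jobs and $J^\chain$-jobs on the same machine, or run a $J$-job across both machines over disjoint time intervals (non-preemptive does not mean contiguous across machines — but here each job lives on one machine... actually in $P2|\tprec|C_{\max}$ a job is on a single machine for $p_j$ consecutive steps, which \emph{helps}). The real work is to argue that, because $J^\chain$ is a length-$T$ chain of jobs each pinned to a distinct unit window, the schedule is forced into a near-canonical form in which the "free capacity" available to $J$ inside the window $(0,T]$ is exactly one machine per slot, and then to do the discarding count carefully so the loss is $2\delta T$ rather than something larger. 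I expect this counting/forcing step to be where the care is needed; the conversions themselves are straightforward slot-insertion and slot-deletion arguments.
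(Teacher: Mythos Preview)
Your $\calI \to \calI'$ direction is essentially the paper's argument, except that you place the chain on machine~2 \emph{after} inserting the missing units of $J$. That order is problematic: once you insert units for a job $j$ on machine~1, $j$ may extend past slot $d_j$, and then placing $j^\chain_{d_j+1}$ at slot $d_j+1$ violates $j \prec j^\chain_{d_j+1}$. The paper avoids this by first laying the chain down at slots $1,\ldots,T$, verifying that the partial schedule of $J$ together with the chain already satisfies all precedences (because the partial schedule respects the windows), and \emph{then} inserting the discarded units---each insertion shifts both machines, so the chain moves along with $J$ and precedences are preserved.

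The genuine gap is in $\calI' \to \calI$. Your key claim, that ``$j^\chain_t$ must lie in its window $(t-1,t]$'', is false: $\calI'$ is a $P2|\tprec|C_{\max}$ instance with \emph{no} release times or deadlines---the windows were only used to \emph{define} the precedence relation. Nothing prevents the chain from being scheduled at slots $1+\delta T,\ldots,(1+\delta)T$, say, with all of $J$ packed into $(0,\delta T]$ on both machines. So your ``one machine free per slot'' picture and the subsequent discarding-within-windows argument do not get off the ground. The paper's trick is the one you are missing: look at the (at most $\delta T$) time slots in $(0,(1+\delta)T]$ where no chain job is scheduled, and simply delete those slots from the time axis, discarding whatever $J$-units occupy them (at most $2\delta T$ units over two machines). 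After deletion and re-indexing the schedule has makespan exactly $T$, the chain is forced to sit at slots $1,\ldots,T$ in order, and now the precedences with the chain \emph{do} translate back into the window constraints $(r_j,d_j]$ for each surviving portion of $j$. A final machine-relabeling puts all chain jobs on one machine and all $J$-units on the other, and what remains is a valid $\minUo$ schedule with at most $2\delta T$ discarded units. Your counting instinct ($2(1+\delta)T$ slots, $2T$ needed) is correct, but the mechanism that makes the counting cash out into an actual single-machine windowed schedule is this slot-deletion step.
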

\begin{proof}
	
	Given a schedule for $\calI$ on one machine with at most $\delta T$ job units discarded, we construct a schedule for $\calI'$ as follows. We first add a second machine and schedule all the jobs $J^\chain$ naturally on the machine: we schedule $j^\chain_t$ at slot $(t-1, t]$. The precedence constraints are satisfied since they defined according to $r_j$ and $d_j$ values of jobs, and the schedule respects the time window constraints. Then we can insert the discarded job units back, increasing the makespan by at most $\delta T$. 
	
	Now suppose we are given a solution to $\calI'$ with makespan $(1+\delta) T$.  We focus on the time slots where no jobs in $J^\chain$ are scheduled.  We remove from the schedule these time slots, as well as the job units processed in these slots. We removed at most $2\delta T$ job units.  The resulting schedule has makespan exactly $T$ and each job $j^\chain_t$ is scheduled in $(t-1, t]$. We can assume jobs in $J^\chain$ are processed on the same machine. Since the precedence constraints are satisfied, we have that all jobs $j \in J$ are scheduled within the window $(r_j, d_j]$. Removing the machine for $J^\chain$ gives us a schedule for $\calI$ with at most $2\delta T$ job units discarded. 
\end{proof}

The factor of $2$ in the lemma does not create an issue for our reduction: we are interested in deciding whether the instance $\calI'$ has makespan at most $(1+\epsilon)T$ or at least $cT$ for some absolute constant $c > 1$.  This is equivalent to deciding whether the instance $\calI$ has cost at most $\epsilon T$ or at least $c'T$ for some absolute constant $c' > 0$. Our main theorem is that an $o(\log n)$-level Sherali-Adams lifting of some natural LP relaxation for $\minUo$ can not distinguish between the two cases.  We define the natural LP relaxation first and then give our theorem. 

\subsection{Integrality Gap Result for $\minUo$}
As is typical in the LP/SDP lifting framework, we specify an upper bound $B$ on the solution cost and impose a constraint for the objective function.  In the LP, $x_{j, t, p'}$ indicates weather $j$ is processed during the interval $(t, t+p'] \subseteq (r_j, d_j]$. We require $p' \in [p_j]$ and $t \in [r_j, d_j-p']$; notice that we do not have variables for the cases in which a job $j$ is not processed at all.  For simplicity, we assume all the variables $x_{j, t, p'}$ with $(j, t, p')$ not satisfying the property are identically 0. The LP relaxation is as follows:

\noindent\begin{minipage}{0.5\textwidth}
	\begin{align}
		\sum_{j}\left(p_j- \sum _{t, p'} x_{j, t, p'}p'\right)  &\leq B \label{LP:objective} \\
		\sum_{t,p'} x_{j, t, p'}  &\leq 1 &\quad &\forall j \in J \label{LP:scheduled}
	\end{align}
\end{minipage}
\begin{minipage}{0.48\textwidth}
	\begin{align}
		\sum_{j, t, p' : t' \in (t, t+p']} x_{j, t, p'} &\leq 1 &\quad &\forall t' \in [T] \label{LP:congestion}\\
		x_{j, t, p'} &\geq 0 &\quad &\forall j,  p', t \label{LP:positive}
	\end{align}
\end{minipage} \smallskip

\eqref{LP:objective} says that we can discard at most $B$ job units; \eqref{LP:scheduled} says each job $j$ can be scheduled at most once.  \eqref{LP:congestion} requires that at any time $t'$, at most 1 job can be processed, and \eqref{LP:positive} requires all variables to be non-negative.   For a fixed instance $\calI=(J, p, r, d)$ and a bound $B$, we use $\calP_\calI(B)$ to denote the above polytope.

Our main theorem of the section is as follows:
\begin{theorem}
	\label{thm:negative-main}
	There exists some constant $c > 0$ such that the following holds for every small enough $\epsilon > 0$ and infinitely many integers $n > 0$. There is an instance $\calI = (J, p, r, d)$ of  $\minUo$ with $n = |J|$ jobs and $T:=\sum_{j \in J}p_j  = \max_{j \in J}d_j$ such that the following holds. 
	\begin{itemize}
		\item The optimum solution to $\calI$ has cost at least $0.2 T$. 
		\item $\SA(\calP_\calI(\epsilon T), \floor{c\epsilon \log n}) \neq \emptyset$.
	\end{itemize}
\end{theorem}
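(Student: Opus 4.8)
The plan is to construct the hard instance $\calI = (J, p, r, d)$ by a recursive (fractal-like) construction and then certify feasibility of a low-level Sherali-Adams solution directly. On the integrality-gap side, the natural candidate is a self-similar instance: at the top level we have one ``big'' job of size roughly $T$ that must be processed in a window of size roughly $2T$ (say $(0, 2T]$), but whose window is subdivided recursively, with smaller jobs forced into the left and right halves in a way that blocks the big job from ever being fully scheduled. Concretely I would take a balanced binary recursion of depth $h = \Theta(\log n)$: at each node we place one job spanning the node's interval, and we place ``obstacle'' jobs in the two children so that in any \emph{integral} schedule, scheduling a job at level $\ell$ forces discarding a constant fraction of something at level $\ell+1$, and a counting/amortization argument over the tree shows at least $0.2T$ units must be discarded. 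The soundness direction (optimum $\ge 0.2T$) should be a clean combinatorial argument about intervals on a line, analogous to the hard instances for interval scheduling; I expect this part to be routine once the construction is pinned down.

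The more delicate half is exhibiting a point in $\SA(\calP_\calI(\epsilon T), \floor{c\epsilon\log n})$. Here the plan is to define the pseudo-distribution explicitly rather than via a generic hierarchy lemma. The idea: a level-$r$ Sherali-Adams solution is determined by consistent ``local distributions'' over any $r$ of the decision variables; I would define, for each small set $S$ of triples $(j, t, p')$, a locally-consistent probability, obtained by a natural randomized rounding of a symmetric fractional solution $x$ to the base LP \eqref{LP:objective}--\eqref{LP:positive}. The base fractional solution should spread each level-$\ell$ job uniformly over all the legal (start, length) pairs inside its window, with a tiny fraction $p'_j$ discarded so that \eqref{LP:objective} holds with $B = \epsilon T$; by the self-similarity this solution is highly symmetric. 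To verify the SA constraints \eqref{inequ:SA} I would use the standard criterion (e.g.\ from \cite{Rothvoss13}): it suffices to produce, for every set $S$ of at most $r$ variables, a genuine distribution $\mathcal{D}_S$ over $\{0,1\}^S$ that (i) is supported on partial assignments that do not immediately violate \eqref{LP:scheduled} or \eqref{LP:congestion}, and (ii) agrees with $x$ on marginals and is consistent across overlapping $S$'s. The key structural point is that any $r$ variables touch at most $r$ jobs and at most $O(r)$ of the $\Theta(\log n)$ recursion levels, and because each job's window contains exponentially (in depth) many legal placements, $r = \floor{c\epsilon\log n}$ variables are ``far apart'' enough that their placements can be chosen almost independently — the conditioning of one variable changes the feasible placements of another by only a $1 - O(1)$-fraction, and the product/inclusion-exclusion estimates needed for \eqref{inequ:SA} go through as long as $r$ is below the depth threshold $\Theta(\epsilon\log n)$.

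Carrying this out, the order of steps would be: (1) define the recursive instance precisely, fixing the branching, the interval lengths, the obstacle jobs and the depth $h$ as a function of $n$; (2) prove the soundness bound $\mathrm{OPT}(\calI) \ge 0.2T$ by a tree-amortization argument; (3) write down the symmetric fractional solution $x$ and its discard pattern, checking \eqref{LP:objective}--\eqref{LP:positive}; (4) define the local distributions $\mathcal{D}_S$ by independent placement within each job's window (biased slightly toward ``discard'' to respect the objective constraint), verify pairwise/inclusive consistency, and invoke the SA-feasibility criterion to conclude $\SA(\calP_\calI(\epsilon T), \floor{c\epsilon\log n}) \ne \emptyset$; (5) set constants, noting $n = |J| = 2^{\Theta(h)}$ so that $h = \Theta(\log n)$ and $r = \Theta(\epsilon\log n)$ are compatible, and combine with the reduction lemma to transfer the gap to $P2|\tprec|C_{\max}$ in the informal sense claimed.

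The main obstacle I anticipate is step (4): making the local distributions \emph{globally consistent} in the sense the Sherali-Adams hierarchy demands, not merely pairwise. Independent-placement-per-job is almost consistent, but the capacity constraint \eqref{LP:congestion} couples different jobs whose windows overlap, so one must argue that on any set $S$ of $r$ variables the induced placements almost never collide — this is where the depth budget $c\epsilon\log n$ and the exponential number of placements per window must be balanced carefully, and where the ``technical difficulty'' alluded to in the text (why we cannot prove the statement directly for $P2|\tprec|C_{\max}$) presumably bites. I would handle it by a union bound: for $r$ variables, the probability that two of their sampled intervals overlap is at most $\binom{r}{2}$ times the maximum, over pairs of overlapping windows at levels $\ell \le \ell'$, of (length of level-$\ell'$ window)/(number of placements of the level-$\ell'$ job) $= 2^{-\Omega(h - \ell')}$; choosing $c$ small enough that $r^2 \cdot 2^{-\Omega(h-r)} < 1$ keeps the conditioning corrections under control, and then the standard pseudo-distribution-from-local-distributions machinery yields the feasible SA point.
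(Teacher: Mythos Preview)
Your overall plan---a laminar binary-tree instance with one job per node and a product-type pseudo-distribution for the SA lift---is the same architecture the paper uses. Two specifics are off, and the second is a genuine gap.

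\textbf{Parameter.} In the paper's instance the window of a level-$\ell$ job is $(L{+}1)$ times its size (not $2$), so each job has exactly $L{+}1$ aligned start positions and the base fractional solution is $x_{j,t}=1/(L{+}1)$. This $L{+}1$ is precisely what becomes the $\Theta(\log n)$ in the round count: the SA solution sets $x_S=\bigl(\tfrac{1-\epsilon'}{L+1}\bigr)^{|S|}$ on non-contradictory $S$, and the bias $(1-\epsilon')$ provides $\epsilon'$ units of slack per linearized constraint, enough to survive $q=\lfloor\epsilon'(L{+}1)\rfloor$ multipliers. There are not ``exponentially many'' placements; there are $\Theta(\log n)$ of them, and that is exactly the SA threshold.

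\textbf{SA feasibility.} Your step (4) does not go through as written. Independent placement does give consistent marginals, but its local distributions are \emph{not} supported on assignments respecting \eqref{LP:congestion}: two jobs with nested windows collide with positive probability, so your criterion (i) fails, and a union bound showing collisions are ``rare'' does not yield a hard feasibility claim---every linearized inequality must hold exactly, not in expectation. The paper does not try to repair the distributions. It defines the moment sequence directly---$x_S$ equal to the product above when $S$ contains no repeated job and no two overlapping intervals, and $x_S=0$ otherwise---and then verifies each lifted constraint \eqref{inequ:SA} by hand. The mechanism is a covering argument over the inclusion--exclusion expansion: for every non-contradictory even $R'\subseteq R$ the slack $\epsilon' x_{S\cup R'}$ pays for all odd one-step extensions $R''\supset R'$, and since there are at most $q\le\epsilon'(L{+}1)$ such extensions each of weight $\tfrac{1-\epsilon'}{L+1}x_{S\cup R'}$, the budget suffices. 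The objective constraint \eqref{LP-new:objective} needs a separate two-sided estimate---the scheduled volume after conditioning on any non-contradictory $S\cup R'$ lies in $[(1-3\epsilon')T,\,T]$---which your union-bound heuristic does not supply.

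\textbf{Soundness.} The paper does not cascade level-by-level. It splits the $L{+}1$ levels into top, middle, and bottom bands and argues that processing $P_{\mathrm{top}}$ units of top jobs renders at least $P_{\mathrm{top}}/2-o(T)$ units of bottom jobs entirely forbidden (their whole windows are covered), whence $P_{\mathrm{top}}+P_{\mathrm{bot}}\le(3/4+o(1))T$.
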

\medskip 

	The  remaining part of this section is to prove the theorem.  Throughout, let $\epsilon > 0$ be a small enough constant as in the theorem statement. Let $L > 0$ be a large enough integer; we assume that $L+1$ is an integer power of $2$.   We then define the instance $\calI = (J, p, r, d)$  of $\minUo$ with $n := |J| = 2^{L+1}-1$ and $T := \sum_{j \in J}p_j = (L+1)2^L$.  
	
	The set of jobs $J$ in $\calI$ form a full binary tree of $L+1$ levels, where each node corresponds to a job. The root of the tree is at level $0$ and the leaves are at level $L$. So there are  $2^\ell$ jobs at level $\ell$; all these jobs $j$ have size $p_j = 2^{L - \ell}$; for every $\ell \in [0, L]$ and $k \in [2^\ell]$, the $k$-th job from the left-side in the $\ell$-th level of the tree has release time $r_j  = (L+1)(k-1)2^{L-\ell}$ and deadline $d_j = (L+1)k2^{L-\ell}$.  So, the window size $d_j - r_j$ for every job $j$ is exactly $(L+1)p_j$; the windows of all the jobs form a laminar family represented by the tree structure.  It is easy to verify that the number $n$ of jobs is $2^{L+1}-1$ and the total size of the jobs is $T = (L+1)2^L$, which is equal to $\max_{j \in J}d_j$.  See Figure~\ref{fig:tree} for the illustration of the instance.
	
	\begin{figure}[h]
	  \centering
	  \includegraphics[width=0.7\textwidth]{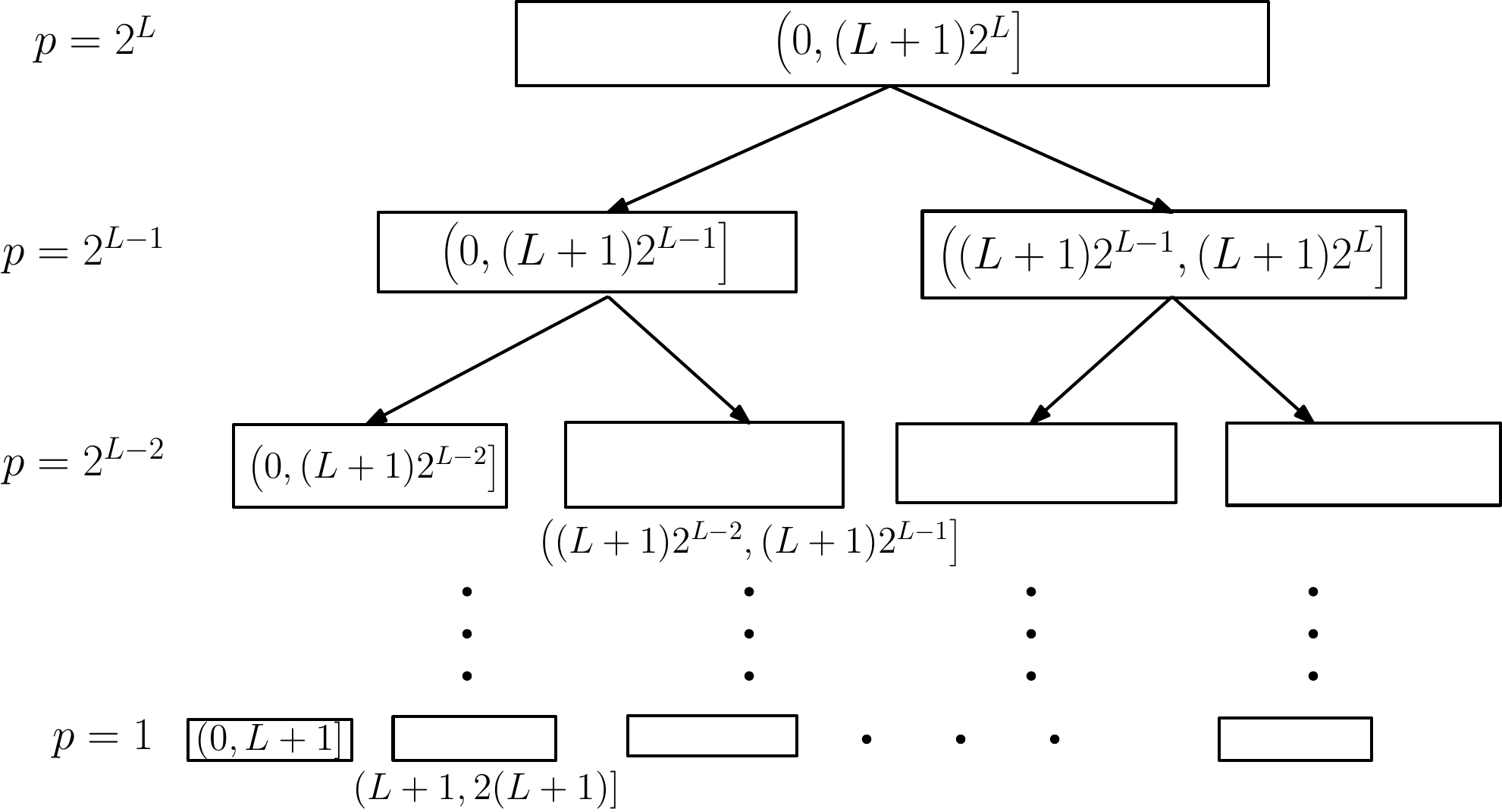}
	  \caption{The gap instance for $\minUo$. The jobs form a tree; the sizes of jobs are $2^L, 2^{L-1}, 2^{L-2}, \cdots, 1$ from top to bottom of the tree. The windows of jobs form a laminar family.} \label{fig:tree}
	\end{figure}

	First we show that the optimum solution to the $\minUo$ instance $\calI = (J, p, r, d)$ is large.
	\begin{lemma}
		Any valid solution to the instance $\calI$ of $\minUo$ has cost at least $0.2T$. 
	\end{lemma}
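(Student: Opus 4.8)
The plan is to lower-bound the cost of any valid schedule by a volume (area) argument exploiting the laminar window structure. Consider any valid solution, i.e.\ a non-preemptive partial schedule on a single machine; let $p'_j \in [0,p_j]$ be the amount of job $j$ that is actually processed, so the cost is $\sum_{j}(p_j - p'_j)$. I want to show $\sum_j p'_j \le 0.8\,T$, which gives cost $\ge 0.2\,T$. The key observation is that the windows $(r_j,d_j]$ of the jobs form a laminar family given by the binary tree: the window of each job at level $\ell$ has length $(L+1)2^{L-\ell}$ and is partitioned into the two windows of its children, plus — crucially — there is a lot of ``slack'', since a level-$\ell$ job has size $2^{L-\ell}$ but window length $(L+1)2^{L-\ell}$.

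First I would fix a leaf-to-root chain of windows and more usefully argue level by level. For each level $\ell \in \{0,1,\dots,L\}$, the $2^\ell$ jobs at that level have pairwise disjoint windows, each of size $2^{L-\ell}$, so the total volume processed from level-$\ell$ jobs is at most $2^\ell \cdot 2^{L-\ell} = 2^L$ — but that bound alone summed over $L+1$ levels gives only $(L+1)2^L = T$, which is useless. The real point is that the same machine time cannot be charged to two different levels: if time slot $t$ is used to process a level-$\ell$ job $j$, then $t$ lies in the window of $j$'s ancestor at every higher level $\ell' < \ell$, and within that ancestor's window the ancestor job itself occupies some contiguous block of length at most $2^{L-\ell'}$. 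So I would set up a charging/packing argument on a single machine: the machine has only $T = (L+1)2^L$ slots total, but within any level-$\ell$ window $W$ (length $(L+1)2^{L-\ell}$), the level-$\ell$ job contributes at most $2^{L-\ell}$ processed units, the two level-$(\ell+1)$ jobs contribute at most $2\cdot 2^{L-\ell-1} = 2^{L-\ell}$, and so on down; summing the geometric-in-count-but-constant-in-volume contributions over all $L-\ell+1$ sublevels of $W$ gives at most $(L-\ell+1)2^{L-\ell}$ processed units inside $W$. Applying this at the root ($\ell=0$, $W = (0,T]$) gives $\sum_j p'_j \le (L+1)2^L = T$ — still not enough, so the bound must be refined: within $W$ the machine can physically run only one job at a time, so the processed blocks of \emph{all} jobs with windows inside $W$ must be disjoint, hence their total length is at most $|W|$; but this is automatically satisfied and gives nothing new either.

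The actual obstruction, and the step I expect to be the crux, is to show the processed volume is bounded away from $T$ by a constant fraction, not just by $T$ itself. I would do this by a probabilistic/averaging argument over the tree: pick a uniformly random root-to-leaf path; the jobs on this path have total size $\sum_{\ell=0}^{L} 2^{L-\ell} = 2^{L+1}-1 \approx T/((L+1)/2)$, and more importantly their windows are nested with the leaf window of length $L+1$ containing only size-$1$ jobs. The machine time inside the leaf window $(t-1+\text{offset}, \dots]$ of length $L+1$ can hold at most $L+1$ units, but the jobs whose windows contain this tiny window and that could be scheduled here are exactly the $L+1$ ancestors (one per level), each wanting to place a block; by disjointness at most $L+1$ total units fit, so \emph{on average} each such tiny window is at most fully used — again giving only $T$. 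To beat $T$ I would instead count, for each level $\ell$, the number of level-$\ell$ windows that are \emph{completely filled} by the single level-$\ell$ job plus descendants versus those that must waste slack; a clean way is: in any window $W$ at level $\ell$, after placing the level-$\ell$ job (length $\le 2^{L-\ell}$) there remain $\ge L\cdot 2^{L-\ell}$ slots, and recursively in each of the two halves we lose at least the analogous slack, so by induction the unprocessed volume within a level-$\ell$ window is at least $c_\ell 2^{L-\ell}$ for a constant $c_\ell$ bounded below by an absolute constant once $L$ is large; instantiating at the root yields unprocessed volume $\ge 0.2\,T$. I would formalize this induction carefully — defining $f(\ell)$ as the maximum fraction of a level-$\ell$ window that can be processed, showing $f(\ell) \le \tfrac{1}{L+1} + f(\ell+1)\cdot\tfrac{L}{L+1}$ does \emph{not} close, so instead I expect to need the sharper recursion $f(\ell)\le \tfrac{2}{L+1} + f(\ell+1)$ coming from the fact that scheduling the level-$\ell$ job non-preemptively in the middle of $W$ forces at least one of the two child windows to lose a full job's worth of contiguous space — and summing $\tfrac{2}{L+1}$ over the $L+1$ levels gives $1 - f(0) \ge$ a constant. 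Checking that this constant is at least $0.2$ for $L$ large enough is the routine endgame; pinning down exactly which contiguity-forced waste gives the clean recursion is the main obstacle.
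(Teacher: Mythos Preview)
Your proposal has a genuine gap: the recursive framework you sketch does not close, and you are missing the key structural idea that actually drives the bound.

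First, the recursion you write down is vacuous. The honest version is $f(\ell) \le \tfrac{1}{L+1} + f(\ell{+}1)$ (the level-$\ell$ job occupies a $\tfrac{1}{L+1}$-fraction of its window, plus the two children each contribute $\tfrac12 f(\ell{+}1)$), which with $f(L)=\tfrac{1}{L+1}$ telescopes to $f(0)\le 1$ --- exactly the trivial bound. Your proposed $f(\ell)\le \tfrac{2}{L+1}+f(\ell{+}1)$ is even weaker ($f(0)\le 2$). The ``contiguity-forced waste'' you allude to does not enter additively as a fixed $\Theta(\tfrac{1}{L+1})$ per level; any purely local, level-by-level accounting is defeated by the observation that the fractional solution $x_{j,t}=\tfrac{1}{L+1}$ achieves full utilization, so an argument that only uses disjointness of windows and capacity cannot separate integral from fractional.

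The idea you are missing is a \emph{multi-scale blocking} argument rather than a level-to-level recursion. Split the $L{+}1$ levels into a top half, a thin middle band of $A=\Theta(\log L)$ levels, and a bottom half. With this buffer, every top job has size $2^{L-\ell_1}$ exceeding the window length $(L{+}1)2^{L-\ell_2}$ of every bottom job. Hence whenever a top job is scheduled non-preemptively on an interval of length $b_j-a_j$, that interval completely contains roughly $(b_j-a_j)/((L{+}1)2^{L-\ell_2})$ entire bottom-level-$\ell_2$ windows, whose jobs therefore cannot be processed at all. Summing over all top jobs and all bottom levels, $P_{\mathrm{top}}$ units of processed top work kill about $\tfrac12 P_{\mathrm{top}}$ units of bottom capacity (there are $(L{+}1)/2$ bottom levels, each contributing $T/(L{+}1)$ total size), so $P_{\mathrm{bot}}\le \tfrac{T}{2}-\tfrac{P_{\mathrm{top}}}{2}+o(T)$. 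This yields $P_{\mathrm{top}}+P_{\mathrm{bot}}\le \tfrac{3T}{4}+o(T)$; adding back the negligible middle band gives total processed volume at most $0.8T$ for $L$ large, i.e.\ cost at least $0.2T$. The point is global: one large job blocks many small ones \emph{across all bottom levels simultaneously}, and this is what your local recursion cannot capture.
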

	\begin{proof}
		We choose some integer $A$ much smaller than $L$, whose exact value will be decided later.  We break the $L+1$ levels of the job tree into 3 classes: the top $(L+1)/2-A$ levels, the middle $A$ levels and the bottom $(L+1)/2$ levels. Roughly speaking, we shall assume that the jobs in the middle levels are processed for free and analyze the conflicts between top and bottom levels.
		
		Focus on a bottom-level $\ell \in [(L+1)/2, L]$, and a top-level job $j$ and assume it is scheduled in $(a_j, b_j]$ in the optimum solution.  Let $a'_j \geq a_j$ be the smallest integer that is a multiple of $(L+1)2^{L-\ell}$ and $b'_j \leq b_j$ be the smallest integer that is a multiple of $(L+1)2^{L-\ell}$. Then we have that $(a'_j, b'_j]$ is the disjoint union of  windows of some level-$\ell$ jobs.  So, these jobs at level $\ell$ can not be processed at all in the optimum solution; we call these jobs forbidden jobs.  Taking all top-level jobs $j$ into consideration (notice that they have disjoint scheduling intervals in the optimum solution), the total number of level-$\ell$ forbidden jobs is at least:
		\begin{align*}
			\sum_{j \text{ top job}} \frac{(b_j - a_j - 2\cdot (L+1)2^{L-\ell})}{(L+1)2^{L-\ell}} = \frac{1}{(L+1)2^{L-\ell}}\sum_{j \text{ top job}}(b_j - a_j) - 2(\#\text{top jobs}).
		\end{align*}
		Let $P_{\mathrm{top}} = \sum_{j \text{ top job}}(b_j - a_j)$; notice that this is the total job units processed for top jobs.   The total number of top jobs is $2^{(L+1)/2-A} - 1 < 2^{(L+1)/2-A} $. Since each level-$\ell$ job has size $2^{L-\ell}$, the total size of level-$\ell$ forbidden jobs is at least
		\begin{align*}
			\frac{P_{\mathrm{top}}}{L+1}- 2^{(L+1)/2-A}\cdot 2^{L-\ell} \geq \frac{P_{\mathrm{top}}}{L+1}- 2^{L-A} = \frac{P_{\mathrm{top}}}{L+1}- \frac{T}{(L+1)2^A}.
		\end{align*}
		The total size of forbidden jobs at all bottom levels is at least $(L+1)/2$ times the above quantity, which is $\frac{P_{\textrm{top}}}2 - \frac{T}{2^{A+1}}$.
		
		Let $P_{\textrm{bot}}$ be the total job units processed for bottom jobs in the optimal solution. Then, we have 
		\begin{align*}
			P_{\textrm{bot}} \leq \frac{T}{2} - \frac{P_{\textrm{top}}}2 +\frac{T}{2^{A+1}}.
		\end{align*}
		So, we have 
		\begin{align*}
			P_{\textrm{top}} + P_{\textrm{bot}} = P_{\textrm{bot}} + \frac{P_{\textrm{top}}}2 +  \frac{P_{\textrm{top}}}2 \leq \frac{T}{2} +\frac{T}{2^{A+1}} + \frac{1}{2}\cdot \frac{((L+1)/2-A)T}{L+1} = \frac{3T}{4} + \frac{T}{2^{A+1}}-\frac{AT}{2(L+1)}.
		\end{align*}
		Considering the middle level jobs, the total scheduled job units in the optimum solution is at most $\frac{3T}{4}+\frac{T}{2^{A+1}}+\frac{AT}{2(L+1)}$. Let $A = \log(L+1)$, then the scheduled jobs is at most $\frac{3T}{4} + O(\log L/L)T = \left(\frac34+o(1)\right)T$. So, if $L$ is large enough, the total scheduled job units is at most $0.8T$, finishing the proof of the lemma.
	\end{proof}
	
	Then we shall give an LP hierarchy solution to $\calI$ with small cost.  Recall that we are allowed to partially process a job; however our fractional solution does not need to take the advantage:  for each job $j$, it either processes it completely, or does not process it at all. Moreover, the fractional solution only starts a job $j$ at a time that is a multiple of $p_j$. Since $d_j - r_j = (L+1)p_j$, there are exactly $L+1$ possible starting times for a fixed job $j$.  For simplicity, we use $x_{j, t}$ to indicate the event that $j$ is processed in $(t, t+p_j]$.  We let $\calD$ denote the set of $(j, t)$ pairs for which the variables can take positive value; that is $\calD:= \{(j, t): j \in J, t \in [r_j, d_j)\text{ is a multiple of } p_j\}$. For convenience, we also treat each $(j, t)$ pair as the interval $(t, t + p_j]$. Then, removing the variables that are identically $0$, and replace $B$ with $\epsilon T$, the LP~\eqref{LP:objective}-\eqref{LP:positive} becomes
	
	\noindent\begin{minipage}{0.48\textwidth}
		\begin{align}
			\sum_{(j, t) \in \calD} x_{j, t} p_j &\geq (1-\epsilon)T \label{LP-new:objective}\\
			\sum_{t: (j, t) \in \calD} x_{j, t} &\leq 1 &\quad &\forall j \in J \label{LP-new:scheduled}
		\end{align}
	\end{minipage}
	\begin{minipage}{0.5\textwidth} \vspace*{-5pt}
			\begin{align}
				\sum_{(j, t) \in \calD: t' \in (t, t+p_j]} x_{j, t} &\leq 1 &\quad &\forall t' \in [T] \label{LP-new:congestion}\\[3pt]
				x_{j, t} &\geq 0 &\quad &\forall (j, t) \in \calD \label{LP-new:positive}
			\end{align}
	\end{minipage}\bigskip

	Let $\epsilon' = \epsilon/4$ let $q = \floor{\epsilon' (L+1)}$ be the number of rounds we shall allow.  Since $L = \Theta(\log n)$, we have $q \geq \floor{c\epsilon \log n}$ if $c$ is small enough.  As $\calI$ and $B=\epsilon T$ are fixed, we can use $\calP'$ to denote the above polytope. Our goal is to prove that $\SA(\calP_\calI(\epsilon T), q) \neq \emptyset$ is not empty. Since from $\calP'$ is obtained from $\calP_{\calI}(\epsilon T)$ by setting some variables to be $0$, it suffices to prove $\SA(\calP' , q) \neq \emptyset$; this is our goal for remaining part of the section. 

	For a 1-round solution, we can simply set $x_{j,t}=1/(L+1)$ for every $(j, t) \in \calD$.  Notice that there are $L+1$ levels of jobs and each job has $L+1$ possible intervals, the LP solution is valid and has cost $0$. We show that with a small loss, we can raise the solution by $q$ levels using the Sherali-Adams hierarchy. 

	Now, we define the solution in $\SA(\calP', q)$. For a set $S \subseteq \calD$, $|S| \leq r$ of variables, we set $x_{S} = \left(\frac{1-\epsilon'}{L+1}\right)^{|S|}$ if $S$ does not lead to a contradiction and $0$ otherwise.   Here, $S$ leads to a contradiction iff either some $j \in J$ appears in more than one pair in $S$, or for two distinct pairs $(j, t), (j', t') \in S$, $(t, t+p_j]$ and $(t', t'+p_{j'}]$ overlap. 
	
	We consider the constraints \eqref{LP-new:objective}-\eqref{LP-new:congestion} one by one, and prove that their respective induced constraints in the LP hierarchy are satisfied.  First consider \eqref{LP-new:scheduled}; we need to prove 
	\begin{align}
		\sum_{R' \subseteq R, t: (j,t) \in \calD} (-1)^{|R'|} x_{S \cup R' \cup \{(j, t)\}} &\leq \sum_{R'\subseteq R} (-1)^{|R'|}x_{S \cup R'}, \qquad \forall S, R \subseteq \calD \text{ with }|S| + |R| \leq r, j \in J, \label{inequ:original}
	\end{align}
	We can assume $S$ does not lead to a contradiction. If $j$ appears in $S$, then $t$ has to be the value satisfying $(j, t) \in S$ to make sure $x_{S \cup R' \cup \{(j, t)\}} \neq 0$. Then \eqref{inequ:original} holds with equality.  So, we assume $j$ does not appear in $S$. We can also assume that $R$ and $S$ are disjoint; otherwise, both sides of \eqref{inequ:original} are $0$. For the fixed $S, R$ and $j$, \eqref{inequ:original}  is equivalent to 
	\begin{align*}
		\sum_{R' \subseteq R} (-1)^{|R'|}\left(x_{S \cup R'} - \sum_{t: (j, t) \in D} x_{S \cup R'\cup\{(j, t)\}}\right) &\geq 0.  \label{inequ:to-prove}
	\end{align*}
	Consider any $R' \subseteq R$. If $S \cup R'$ leads to a contradiction, or $j$ appears in $R'$,  then it is easy to see that $x_{S \cup R'} - \sum_{t} x_{S \cup R' \cup \{(j, t)\}} =0$.  Otherwise, we have $x_{S \cup R'} - \sum_{t} x_{S \cup R' \cup \{(j, t)\}} \geq \epsilon' x_{S \cup R'}$. This holds since $x_{S \cup R'} = \left(\frac{1-\epsilon'}{L+1}\right)^{|S \cup R'|}$ and for each relevant $t$, we have $x_{S \cup R' \cup \{(j, t)\}}$ is either $0$ or $\left(\frac{1-\epsilon'}{L+1}\right)^{|S \cup R'|+1}$.
	
	Let $*$ be the family of subsets $R'$ of $R$ such that $S \cup R'$ does not lead to a contradiction and $j$ does not appear in $R'$. Then, in order to prove \eqref{inequ:original}, it suffices to show the following:
	\begin{align}
		\sum_{R' \in *, |R'|\text{ even}} \epsilon' x_{S \cup R'} - \sum_{R'' \in *, |R''|\text{ odd}} x_{S \cup R''} \geq 0.
	\end{align}
	Notice that we assumed that $S$ and $R$ are disjoint.  For every $R' \in *$ of even size, the $\epsilon' x_{S \cup R'}$ budget can be used to cover the negative sum $\sum_{R'' \in *: R' \subseteq R'', |R''| = |R'| + 1}x_{S \cup R''}$.   This holds since for every $R''$ in the summation, we have $x_{S \cup R''} = \frac{(1-\epsilon')x_{S\cup R'}}{L+1}$. Since there are at most $|R| \leq r \leq \epsilon' (L+1)$ terms in the summation,the budget $x_{S \cup R'}$ is at least the sum. Also, notice that each odd set $R'' \in *$ is covered at least once. So, we have proved \eqref{inequ:to-prove}, which implies \eqref{inequ:original}. 
	
	Now we turn to \eqref{LP-new:congestion}. We need to show
	\begin{align}
		\sum_{(j, t) \in \calD: t' \in (t, t+p_j]} \sum_{R' \subseteq R} (-1)^{|R'|} x_{S \cup R' \cup \{(j, t)\}} \leq \sum_{R'\subseteq R} (-1)^{|R'|}x_{S \cup R'}, \nonumber\\ 
		\forall S, R \subseteq \calD \text{ with } |S| + |R| \leq r, t' \in [T]. \label{inequ:original-1}
	\end{align}
	
	This can be proved similarly as \eqref{inequ:original}.  First, we can assume that $S$ does not lead to a contradiction and the intervals in $S$ do not cover $t'$.  Then, we can prove that $x_{S \cup R'} - \sum_{(j, t)\text{ covers }t'} x_{S \cup R' \cup \{(j, t)\}}$ is $0$ if $S \cup R'$ leads to a contradiction or intervals in $S \cup R'$ cover $t'$. Otherwise, the quantity is at least $\epsilon' x_{S \cup R'}$.  We can similarly define $*$ to be the family of subsets $R'$ of $R$ for which we have the latter case. Then using the same way we can prove \eqref{inequ:to-prove}, which implies \eqref{inequ:original-1}.
	
	Finally we consider \eqref{LP-new:objective}, the constraint for the objective value.  By reorganizing the terms, we need to prove that  for every $S, R \subseteq \calD$ with $|S| + |R| \leq r$, 
	\begin{align}
		 \sum_{R' \subseteq R} (-1)^{|R'|}\left(\sum_{(j, t) \in \calD} x_{S \cup R'\cup\{(j, t)\}}\cdot p_j - (1-\epsilon)T\cdot x_{S \cup R'} \right) \geq 0 \label{l2}
	\end{align}
	
	Let us focus on some $R' \subseteq R$ such that $S\cup R'$ does not lead to a contradiction. Define $\calD'$ be the set of $(j, t)$ pairs in $\calD \setminus (S \cup R')$ such that $S \cup R' \cup {(j, t)}$ does not lead to a contradiction. Then 
	\begin{align*}
		Q:=\sum_{(j, t) \in \calD} x_{S \cup R'\cup\{(j, t)\}}\cdot p_j = x_{S \cup R'}\left(\sum_{(j, t) \in S \cup R'}p_j + \frac{1-\epsilon'}{L+1} \sum_{(j, t) \in \calD'}p_j \right).
	\end{align*}
	We are interested in upper and lower bounds of $Q$.  For the upper bound , we have
	\begin{align*}
		Q \leq x_{S \cup R'}\left(q\cdot \frac{T}{L+1} + \frac{1-\epsilon'}{L+1}\cdot T(L+1) \right) \leq x_{S \cup R'} \left(\epsilon' T+(1-\epsilon')T\right) = x_{S \cup R'}T.
	\end{align*}
	Above, we used that $|S \cup R'| \leq q \leq \epsilon' (L+1)$, every $j \in J$ has $p_j \leq \frac{T}{L+1}$, $\sum_{j \in J}p_j = T$ and every $j$ appears in $\calD$ exactly $L+1$ times. 
	
	For the lower bound of $Q$,  we lower bound the term $\sum_{(j, t) \in \calD'}p_j$.  This is at least 
	\begin{align*}
		(L+1)T - \sum_{(j, t) \in \calD: j\text{ is in some pair in }S \cup R'}p_j - \sum_{(j, t) \in \calD \text{ intersects some interval in } S\cup R'}p_j.
	\end{align*} 

We upper bound the subtrahends one by one. The first subtrahend is at most $q (L+1)\cdot \frac{T}{L+1} \leq \epsilon' (L+1)T$.  The second term is maximized when $S \cup R'$ contains $q$ disjoint intervals of length $T/(L+1)$, with boundaries being multiply of $T/(L+1)$. (Recall that we assumed $L+1$ is a power of $2$; this does not correspond to an actual $S \cup R'$ since we only have 1 job of length $T/(L+1)$; but it will give an upper bound.)  In this case, for any job length $2^\ell$, we have $\Pr_{(j, t) \sim \calD}\left[(j, t)\text{ intersects a interval in } S \cup R' |p_j = 2^\ell\right]  = q/(L+1) \leq \epsilon' $.   So, the second subtrahend is at most $\epsilon' (L+1)T$.  Overall, we have $\sum_{(j, t) \in \calD'}p_j \geq (L+1)T - 2\epsilon'(L+1)T = (1-2\epsilon')(L+1)T$. This implies $Q \geq x_{S \cup R'}\cdot\frac{1-\epsilon'}{L+1}(1-2\epsilon')(L+1)T \geq (1-3\epsilon')x_{S \cup R'}T$.
		
	With the upper and lower bounds, we can prove \eqref{l2}.  Let $*$ be the family of subsets $R'$ of $R$ such that $S \cup R'$ does not lead to a contradiction. Then, the left side of \eqref{l2} is at least
	\begin{align*}
		&\quad \sum_{R' \in *: |R'| \text{ even}} \big((1-3\epsilon')x_{S \cup R'}T-(1-\epsilon)x_{S \cup R'}T\big) - \sum_{R' \in *: |R'| \text{ odd}} \big(x_{S \cup R'}T - (1-\epsilon)x_{S \cup R'}T\big) \\
		&=\epsilon'T\sum_{R \in *:|R'| \text{even}}  x_{S \cup R'} - \epsilon T\sum_{R \in *:|R'| \text{odd}}  x_{S \cup R'},
	\end{align*}
	where we used that $\epsilon' = \epsilon/4$.  Using the same covering idea as before and that $q \leq (L+1)/4$, we can prove the quantity is at least 0. This implies that \eqref{l2} holds, which finishes the proof of Theorem~\ref{thm:negative-main}.
	
	\subsection{Discussion}
	We showed that an $o(\log n)$-level Sherali-Adams lift of a natural LP relaxation for $1|r_j,d_j|\sum_j p_j U'_j$ can not distinguish between whether an instance $\calI$ has cost at most $\epsilon T$ or at least $0.2T$.  The instance $\calI'$ of $P2|\tprec|C_{\max}$ constructed from $\calI$ will have makespan at most $(1+\epsilon)T$ and at least $1.1T$ for the two cases. 
	
	One could ask if the natural LP relaxation for $P2|\tprec|C_{\max}$ on the instance $\calI'$ has large intergrality gap when raised to $o(\log n)$ levels. Unfortunately we could not prove such a result.  It is known that some small modifications with no effect on the basic LP can change the feasibles solutions in the hierarchy.  For example, we can introduce new variables $x_{j, \bot}$ in LP (\ref{LP:objective}-\ref{LP:positive}) to indicate whether $j$ is not scheduled, and require $x_{j, \bot} = 1 -\sum_{t, p'} x_{j, t, p'}$.  This does not change the LP, but we need to consider the new variables when deriving the constraints in the lifted LP.  We do not know how to give a fractional solution for this new lifted LP.  This seems to be a barrier to extend the negative result to $P2|\tprec|C_{\max}$, as in the problem, we do require every job to be processed to an extent of 1. Nevertheless, we believe  the instance we constructed is the right one for proving an integrality gap result for $P2|\tprec|C_{\max}$.
	
	We also remark that in the gap instance for $\minUo$, the windows of all jobs form a laminar tree of depth $O(\log n)$. Such an instance can be solved efficiently and exactly using dynamic programming.  Thus for this problem, the Sherali-Adams hierarchy does not capture the dynamic programming idea using a small number of rounds.  Moreover, using the dynamic programming technique in \cite{ILM15}, one may obtain a QPTAS for $\minUo$. This suggests that our reduction from $\minUo$ to $P2|\text{prec}|C_{\max}$ is unlikely to give APX-hardness for the latter problem, if such a result indeed holds.

\section*{Acknowledgments}
We thank Shashwat Garg for several helpful discussions on the topic. 
\fi
\bibliographystyle{plain}
\bibliography{reflist} 
 
\ifsoda\else
	\appendix
	\section{Comparison of Three Optimal Schedules}
\label{sec:3schedules}
\begin{figure}[h]
\centering
\includegraphics[width=1.0\textwidth]{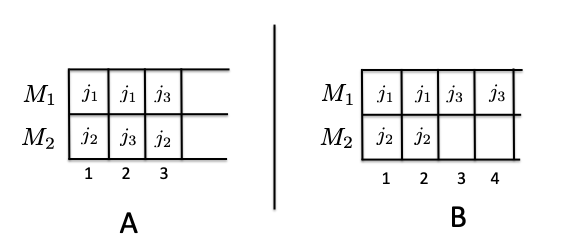}
 \caption{The figure illustrates a gap of at least 4/3 between the optimal schedules for A and B.} \label{fig:ab}
\end{figure}

Recall that when jobs have arbitrary processing lengths, the optimal schedule for minimizing makespan with precedence constraints can be of three types:

\medskip

A) Fully preemptive ($Pm|\text{prec}, \text{pmtn}, \text{migration}|C_{\max}$); 

B) Preemptive but non-migratory ($Pm|\text{prec}, \text{pmtn}|C_{\max}$); 

C) Non-preemptive ($Pm|\text{prec}|C_{\max}$).

\medskip
Clearly, the value of optimal makespan for A is at most B which is at most C.
Here we give instances to prove that these three schedules can be constant factor away from each other, and hence ruling out a black-box approach to the design of $(1+\epsilon)$-approximation algorithms for these problems.

\subsection{Gap Between A and B}
Refer to Figure \ref{fig:ab}. In the instance we have 3 jobs each of length 2, and no precedence constraints. 
The optimal schedules for A and B are shown in the figure. There is a gap of $4/3$ between A and B. We can generalize the instance to have $m$ machines and $m+1$ jobs of length $m$. Then model A can achieve makespan $m+1$, whereas model B needs to have makespan $2m$.  Thus the gap can be made arbitrarily close to $2$. This also gives a gap close to $2$ between A and C, since C is more restricted than B.

\begin{figure}[h]
\centering
\includegraphics[width=1.0\textwidth]{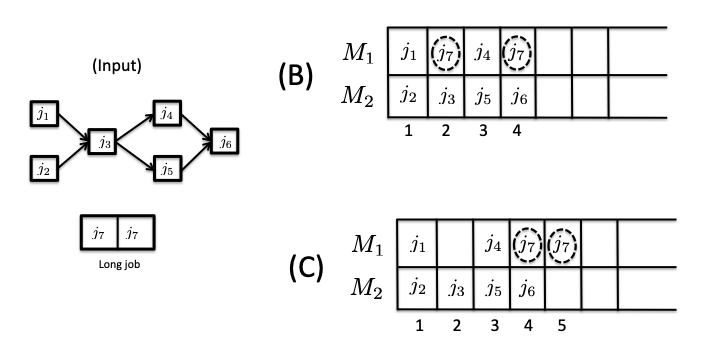}
 \caption{The figure illustrates a gap of at least 5/4 between the optimal schedules B and C.} \label{fig:bc}
\end{figure}

\subsection{Gap Between B and C}
Refer to Figure \ref{fig:bc}. 
In the instance we have 7 jobs.
The first six jobs have unit processing lengths, and the precedence relationship is as shown in the figure.
The job 7 is long and has processing length of 2.
The optimal schedules for B and C are shown in the figure. 
By making the DAG contain $3n$ unit length jobs, and the processing length of long job as $n$, we can make the gap approach 1.5.

	\section{Removing the Polynomial Size Assumption on Job Lengths}
\label{sec:N-big}
We give a brief sketch of how to handle the case when $p_j$'s are not polynomially bounded. Let $p_{\max}=\max_j p_j$.  We can round each job size down to the nearest multiple of $\epsilon p_{\max}/n$; in particular, if there is a job $j$ such that $p_j< \epsilon p_{max} /n$, we discard it. Thus, the total size  of jobs we discarded is at most $\epsilon p_{max}$. It is easily seen that the optimum value must be at least $p_{max}$ and hence the total size we discarded is at most $\epsilon$ times the optimum makespan. 
Scaling down all job sizes by a factor of $\epsilon p_{\max}/n$, we obtain an instance where all job sizes are integers between 1 and $n/\epsilon$. 
Hence, we can assume that $\sum_{j}p_j$ is polynomial in $n$.

\fi

\end{document}